\documentclass[a4paper, 12pt, titlepage, openany, final]{book}
%\newpage

\usepackage[english]{babel}
\usepackage[utf8]{inputenc}
\usepackage[T1]{fontenc}
\usepackage[toc, page]{appendix}
\usepackage{amsmath}
\usepackage{amssymb}
\usepackage{mathtools}
\usepackage{lmodern}
\usepackage{graphicx}
\usepackage{caption}
\usepackage{subcaption}
\usepackage{pdfpages}
\usepackage{multicol}
\usepackage{fancyhdr}
\usepackage{changepage}
\usepackage{setspace}
\usepackage{tocbibind}
\usepackage{enumitem}
\usepackage{amsthm}
\usepackage{accents}
\usepackage{csquotes}
\usepackage{lastpage}
\usepackage{geometry}

\usepackage{tgschola}
\usepackage{comment}
\usepackage{tikz-cd}

\def\mydate{\leavevmode\hbox{\the\year.\twodigits\month.\twodigits\day}}
\def\twodigits#1{\ifnum#1<10 0\fi\the#1}

\allowdisplaybreaks

\hyphenation{Max-wel-love}
\hyphenation{ča-so-pries-tore}
\hyphenation{ča-so-pries-to-ru}
%\hyphenation{sy-me-tric-kú}
\hyphenation{naj-krat-šie}
\hyphenation{geo-de-ti-ky}
\hyphenation{ro-zum-ný}

\hyphenation{geo-metry}

\theoremstyle{definition}
\newtheorem{definition}{Definition}[section]
\newtheorem{theorem}{Theorem}[section]
\newtheorem{corollary}[theorem]{Corollary}

\makeatletter
\newcommand{\ostar}{\mathbin{\mathpalette\make@circled\star}}
\newcommand{\make@circled}[2]{%
	\ooalign{$\m@th#1\smallbigcirc{#1}$\cr\hidewidth$\m@th#1#2$\hidewidth\cr}%
}
\newcommand{\smallbigcirc}[1]{%
	\vcenter{\hbox{\scalebox{0.77778}{$\m@th#1\bigcirc$}}}%
}
\makeatother

\setlength{\parindent}{3em}

\DeclareMathOperator{\Tr}{Tr}
\DeclareMathOperator{\Div}{div}
\DeclareMathOperator{\sgn}{sgn}
\DeclareMathOperator{\Ad}{Ad}
\DeclareMathOperator{\hor}{hor}

\DeclareMathOperator{\diag}{diag}

\newcommand{\w}[1]{\accentset{\bullet}{#1}}
\newcommand{\rlc}[1]{\accentset{\circ}{#1}}

\tikzcdset{every label/.append style = {font = \small}}

\pagestyle{fancy}
\fancyhead{}
\fancyfoot{}

\fancyhead[RE, LO]{Chapter \thechapter}
%\fancyhead[RE,LO]{\leftmark}
\fancyfoot[C]{\thepage}

\usepackage{hyperref}
\hypersetup{
	colorlinks=true,
	linkcolor=black,
	citecolor=black,
}

\raggedbottom

\includeonly{Front, Abstract, Preface, Chapter1, Chapter2, Chapter3, Chapter4, Appendices, Conclusions}
\begin{document}
	\allowdisplaybreaks
	\setlength{\abovedisplayskip}{5pt}
	\setlength{\belowdisplayskip}{5pt}
	\abovedisplayshortskip
	\belowdisplayshortskip
	
	\setlength{\parskip}{0pt}
	%  TITULNA STRANA
	\newgeometry{margin = 4.5cm}
	
	\thispagestyle{empty}
	
	\newgeometry{
		a4paper,
		inner=30mm,
		outer=25mm,
		bottom=25mm,
	}
	
\begin{center}
	{\large COMENIUS UNIVERSITY IN BRATISLAVA \\
		FACULTY OF MATHEMATICS, PHYSICS AND INFORMATICS}
\end{center}
%\begin{titlepage}
%    \rmfamily
%    \begin{center}
	%      \LARGE\scshape
	%      \theuniversity\\
	%      \Large\upshape
	%      \thefaculty\\
	%      \large
	%      \thedepartment
	%    \end{center}
%

\vspace{2cm}
\begin{figure}[!h]
	\centering
	\includegraphics[width=4cm]{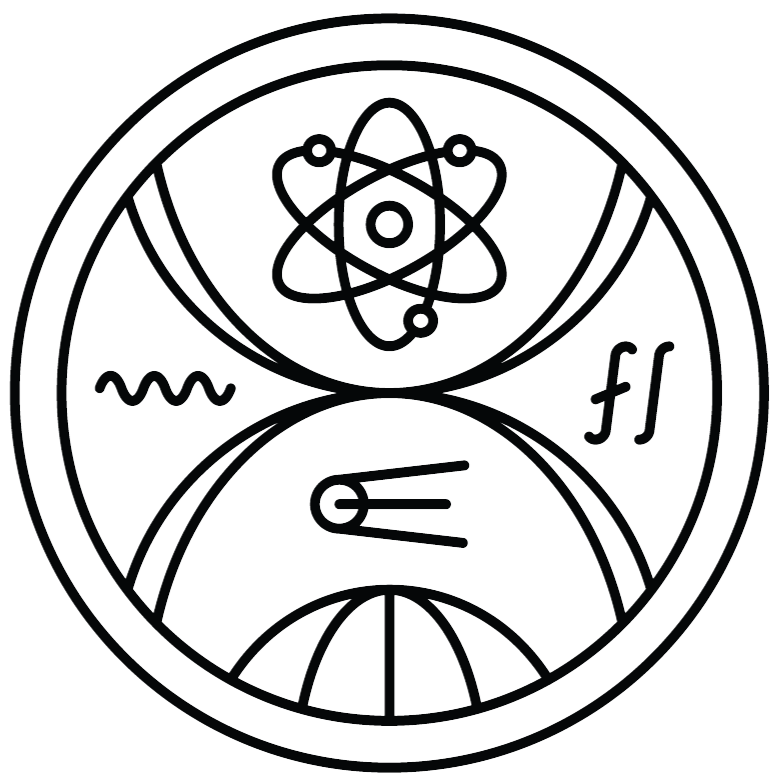}
\end{figure}

\vspace{1cm}
\begin{center}
	{\huge DUALITY OPERATOR IN TELEPARALLEL GRAVITY \\
		\vspace{3cm}
		\large MASTER'S THESIS}
\end{center}

\vfill
\begin{multicols}{2}
	{ \large
		\begin{flushleft} 2024 \end{flushleft}
		\begin{flushright} Bc. Marek Horňák \end{flushright} 
	}
\end{multicols}
%--------------------------------------------------------------------------------------
\newpage
\thispagestyle{empty}
\,

%--------------------------------------------------------------------------------------	
% INFO O BAKALARSKEJ PRACI
%\let\cleardoublepage\clearpage
\newpage
\thispagestyle{empty}
\begin{center}
	{\large COMENIUS UNIVERSITY IN BRATISLAVA \\
		FACULTY OF MATHEMATICS, PHYSICS AND INFORMATICS}
\end{center}

\vspace{5cm}
\begin{center}
	{\huge DUALITY OPERATOR IN TELEPARALLEL GRAVITY \\
		\vspace{3cm}
		\large MASTER'S THESIS}
\end{center}

\vfill
\begin{flushleft}
	\begin{tabular}{lll}
		Study program: & Theoretical physics \\
		Field: & Physics \\
		Department: & Department of Theoretical Physics \\
		Supervisor: & Mgr. Martin Krššák, Dr.rer.nat. \\
	\end{tabular}
\end{flushleft} 

\vfill
\begin{multicols}{2}
	\large
	\begin{flushleft} Bratislava 2024 \end{flushleft}
	\begin{flushright} {Bc. Marek Horňák} \end{flushright}
\end{multicols}
%--------------------------------------------------------------------------------------

%\includepdf[pages={1,2}]{zadanie-diplomovka-sk.pdf}
%\includepdf[pages={1}]{Zadanie.pdf}

%\newpage
%\thispagestyle{empty}
%\,

%\restoregeometry

%--------------------------------------------------------------------------------------	

	%----------------------------------------------------------------------------------------------
	
	%\include{Foreword}
	\let\cleardoublepage\clearpage
\thispagestyle{empty}

\section*{Acknowledgements}
I would like to express my thanks to my supervisor, Mgr. Martin Krššák, Dr.rer.nat., for all the helpful discussions, insights, advice and all his patience throughout the journey. I am also grateful to my friends and family for their unwavering support and encouragement.  
%Na tomto mieste sa chcem poďakovať svojmu vedúcemu práce Mgr. Petrovi Mészárosovi, PhD. za užitočné rady a pripomienky, ktoré mi pomohli pri písaní tejto práce.

\vspace*{\fill}
\section*{Declaration}
I declare that I have written this thesis on my own under the guidance of my supervisor and using the literature listed in the bibliography.

\vspace{1.5cm}
\begin{multicols}{2}
	\begin{flushleft} Bratislava 2024 \end{flushleft}
	\begin{flushright} %\makebox[6cm]{\hrulefill} %
		\par \makebox[3cm]{Bc. Marek Horňák}\makebox[1.5cm]{} \end{flushright}
\end{multicols}
%--------------------------------------------------------------------------------------	
\begin{comment}
	\newpage
	\thispagestyle{empty}
	\vspace*{\fill}
	\section*{Acknowledgements}
	I would like to express my thanks to my supervisor, Mgr. Martin Krššák, Dr.rer.nat., for all the helpful discussions, insights, advice and all his patience throughout the journey. I am also grateful to my friends and family for their unwavering support and encouragement.  
	%Na tomto mieste sa chcem poďakovať svojmu vedúcemu práce Mgr. Petrovi Mészárosovi, PhD. za užitočné rady a pripomienky, ktoré mi pomohli pri písaní tejto práce.
	
	\vspace*{\fill}
\end{comment}
%-------------------------------------------------------------------------------------
\cleardoublepage
\frontmatter

\newgeometry{
	a4paper,
	inner=25mm,
	outer=30mm,
	bottom=25mm,
}

\newpage
\thispagestyle{plain}
\section*{Abstract}
\hspace{1cm}
\begin{tabular}{ll}
	\textit{Author}: & Bc. Marek Horňák \\
	\textit{Title}: & Duality Operator in Teleparallel Gravity \\
	\textit{University}: & Comenius University in Bratislava \\
	\textit{Faculty}: & Faculty of Mathematics, Physics and Informatics \\
	\textit{Department}: & Department of Theoretical Physics \\
	\textit{Supervisor}: & Mgr. Martin Krššák, Dr.rer.nat. \\
	\textit{City}: & Bratislava \\
	\textit{Date}: & \leavevmode\hbox{\the\day.\the\month.\the\year} \\
	\textit{Number of pages}: & \pageref{LastPage} \\
	\textit{Type of thesis}: & Master's thesis
\end{tabular}
\vspace{1cm}

Teleparallel gravity is a theory of gravity which replaces the Levi-Civita connection by a teleparallel connection -- a metric-compatible connection with vanishing curvature. 
Teleparallel equivalent of general relativity (TEGR) is a special case from this class of theories, and its dynamics is governed by equations which are equivalent to Einstein field equations of general relativity. 
In 2009 Lucas and Pereira presented the idea of a new operator that would allow us to express the action of TEGR in a form reminiscent of the Yang-Mills action for a gauge theory, for which torsion is the field strength.
The authors constructed the operator as a duality operator acting on torsion 2-forms and curvature 2-forms, taking into account all the contractions of these forms with the volume form of the spacetime manifold. 
The definition of this operator for other forms, however, remained unclear.  
In this thesis, we seek to give a mathematically consistent definition of the operator, and we follow the results published by Lucas and Pereira.
We show that the approach taken by Lucas and Pereira does not lead to a consistent definition of this operator, because it results in the operator with the desired properties either being non-unique or non-existent.

%After formulating rules for determining all possible contractions of the volume form with the form on which this operator is being applied, we applied these to the cases of torsion 2-forms, connection 1-forms and curvature 2-forms, and parametrized the action of the duality operator on these forms. 
%In each of these cases, we imposed the condition of duality, which led to systems of equations. 
%To remedy the non-uniqueness of their solution, we devised another rule, which for the case of torsion gave the result presented by Lucas and Pereira. 
%In the remaining cases, this rule reduced the corresponding systems of equations to systems with no solutions. 

\begin{flushleft}
	\textbf{Keywords:} 
	
	duality operator, teleparallel gravity, tetrad formalism, torsion
\end{flushleft}

	%----------------------------------------------------------------------------------------------
	\pagestyle{plain}
	\addtocontents{toc}{\protect\thispagestyle{plain}}
	\tableofcontents
	
	\let\cleardoublepage\clearpage
	\cleardoublepage
	\pagestyle{fancy}
	
	%----------------------------------------------------------------------------------------------

	%----------------------------------------------------------------------------------------------
	
	\mainmatter
	
	\chapter{Introduction}  %or Preface?

General relativity is an exceptionally successful theory of gravity. To this day, it stood several observational tests on almost all scales of physics and remains together with quantum field theory one of the pillars of modern physics. 

Albert Einstein formulated the field equations of general relativity in 1915. The fundamental idea of the theory is that gravity is a manifestation of the curvature of spacetime. It entirely disposed of the notion of gravitational force and described the motion of small bodies in the gravitational field as the natural motion of the bodies in the curved spacetime. The source of the curvature of spacetime is energy and momentum, as expressed by the Einstein field equations.  

In 1920s, Einstein -- who, of course, could not have been at that time aware of the need to modify his newly-established theory of gravity -- was attempting to develop a unified theory of gravity and electromagnetism based on the use of tetrads (orthonormal frames) and distant parallelism (teleparallelism). After publishing eight papers on the topic in the \textit{Sitzungsberichte} of the Prussian Academy, he abandoned his idea because it did not lead to his ultimate goal of unification of electromagnetism and gravity \cite{Sauer_2006}. Nevertheless, Einstein's work involved several important novelties, namely using tetrads instead of the metric tensor and considering other than Riemannian geometries to describe gravity. In 1960s and 1970s it became clear that Einstein teleparallelism can be a successful theory of gravity only, i.e. without attempts to incorporate electromagnetism \cite{Krssak:2024xeh}.

The development of quantum mechanics in 1920s and then quantum field theory in the following years started to expose some of the limitations of general relativity. Another challenge for general relativity came with the observations of the accelerating expansion of the universe \cite{SupernovaSearchTeam:1998fmf}. One possible approach to explain these phenomena is making modifications in the matter sector of the theory, which requires modifications of the Standard Model of particle physics. However, it may also be the case that it is not necessary to significantly restructure the Standard Model (the matter section), but that the gravitational section requires further revisiting. Thus, alternative theories of gravity are of considerable interest, because, instead of changing the matter content of the theory, an alternative theory of gravity may potentially provide an explanation for phenomena which within the current cosmological paradigm based on general relativity are ascribed to the presence of dark energy and dark matter \cite{Clifton:2011jh, Heisenberg:2018vsk}.

Our current understanding of elementary particles, which make up matter, and three of the four fundamental interactions of nature -- electromagnetic, strong and weak -- is captured by the Standard Model of particle physics. It is a quantum field theory in which all the interactions are mediated by so-called gauge bosons\footnote{The gauge bosons of Standard Model are photons (which mediate the electromagnetic interaction), eight gluons (for the strong interaction) and $W^+, W^-$ and $Z^0$ bosons (they mediate the weak interaction).}. Standard Model is an example of a gauge theory. 

%In order to meet the observational challenges, modifications in the Standard Model (the matter section) have been considered. However, it may also be the case that it is not necessary to significantly restructure the Standard Model, but that the gravitational section requires further revisiting.

%Quantum field theory is a theoretical framework which combines classical field theory, special relativity and quantum mechanics, and is used in particle physics for the construction of physical models of subatomic particles and their interactions, and in condensed matter physics to construct models of quasiparticles. Each type of particle has its corresponding quantum field, and individual particles are treated as excited states of these fields. A particular quantum field theory is specified by its Lagrangian, which is a function of the fields. 

Gauge theories have a special place among the field theories. One of the reasons is that they can be quantized. Gauge bosons are particles which correspond to gauge fields, which are fields that have to be introduced into a theory to make a global symmetry of this theory to be local. 
% A field theory is given by specifying its Lagrangian, which is a function of the fields of the theory. Suppose we have a theory given by a Lagrangian that does not change under some global transformation of the fields. If we require that the theory should not change under the corresponding \textit{local} transformation, additional fields have to be introduced to compensate the extra terms in the transformed Lagrangian, which arise from the locality of the considered transformation. These additional fields are the gauge fields, and a theory containing gauge fields is called gauge theory.
In other words, gauge theories have some unphysical degrees of freedom. The same physical fields correspond to several different configurations of the gauge fields. This is a kind of redundancy in the theoretical description of reality. 

Examples of gauge theories include the Standard Model and electrodynamics (classical, as well as quantum), which is a $U(1)$-gauge theory -- which means that it is invariant under $U(1)$ local transformations -- and the gauge bosons of this theory are photons. Of great importance for particle physics are Yang-Mills theories, which are non-abelian gauge theories. An example of a Yang-Mills theory is quantum chromodynamics (QCD), which is a $SU(3)$-gauge theory.

General relativity is coordinate-independent, which means that the laws are invariant under changes of coordinates. In other words, the equations take the same form in any coordinates. This, however, does not make it a gauge theory from the mathematical point of view, because conceptually this is a different kind of redundancy in the physical model. On the other hand, all the other fundamental forces of physics are described by gauge theories, so it is natural to expect that gravity too can be described in the framework of a gauge theory. 

% Simple dimensional arguments show that the effects of quantum gravity become relevant in physical phenomena which are characterized by the length scale $l_{Planck} \sim 10^{-35} m$ called the "Planck length". Current technology is not yet capable of observing physical effects at that are so small. Despite this absence of direct experimental guidance, a tremendous effort is being put by today's theoretical physicists in developing a quantum theory of gravity. Three of the fundamental forces of nature have been already described by quantum field theory, while gravity still remains the odd one out. Much has been written on the topic, and several prospective theories have been proposed, for example string theory, quantum loop gravity, noncommutative geometry, causal set theory or twistor theory, to name just a few \cite{Rovelli:2008}. 

% In analogy to the other forces, which are described by quantum field theories in which they are mediated by gauge bosons, it is natural to expect that the particle mediating gravitational interaction would also correspond to a gauge field, if gravity is to be a quantum field theory. General relativity -- which is neither a quantum theory, nor a gauge theory -- is invariant under coordinate transformations, which is another strong indication that quantum gravity might be a gauge theory. 

There exist several modified theories of gravity, and also attempts to conceptualize general relativity as a gauge theory. An interesting possibility that has been gaining momentum in the recent decades is \textit{teleparallel gravity}. The interest in teleparallelism was revived in 1960s and 1970s by the works of M\o{}ller concerning the definition of the energy-momentum for gravity \cite{Moller1961,Moller1966,Moller1978}, and then by Pellegrini and Pleba\'{n}ski \cite{Pellegrini1963}. Another impulse for revisiting Einstein's teleparalellism came from the work of Hayashi and Nakano on gauge aspects of gravity \cite{Hayashi:1967se}. 

While in general relativity the gravitational interaction is realized via the curvature of the Levi-Civita connection, which is metric-compatible and torsion-free, in teleparallel gravity the interaction is an effect of torsion of a teleparallel connection, that is, a metric-compatible connection with zero curvature. Torsion in this case acts as a force, which similarly to the Lorentz force equation of electromagnetism appears as an effective force term on the right-hand side of the equation of motion of a freefalling particle \cite{Krssak:2018ywd}.

One of the many theories that this torsion-based approach has produced is the teleparallel equivalent of general relativity (TEGR), which is dynamically equivalent to general relativity, meaning that they cannot be distinguished through classical experiments. However, they are conceptually quite different and in scenarios such as quantum gravity the theories may exhibit different behaviors.

In 1967 Hayashi and Nakano published a paper \cite{Hayashi:1967se} which discussed teleparallel gravity as gauge theory of translations only, instead of gauging the Lorentz or Poincar\'{e} group. The Lagrangian density of the theory is quadratic in torsion, which is identified as the field strength of the translational group. Later, Cho has shown \cite{Cho:1976} that requiring covariance of the field equations restricts the form of the translational Lagrangian to the Lagrangian of TEGR, which Einstein considered already in 1929 \cite{Einstein1929a}. TEGR may thus be regarded as a gauge theory of translations. 

The action of TEGR may be rewritten in a form revealing a striking similarity with the Yang-Mills action for a gauge theory with field strength $\mathcal{F}$:
\begin{equation}
	\mathcal{S}_{YM} = - \int_{\mathcal{M}}{\Tr \left( \mathcal{F} \wedge \star \mathcal{F} \right)}.
\end{equation}
The Hodge $\star$ operator, however, has to be replaced by a different operator, which we shall denote $\ostar$. It turns out that if $\ostar$ is to be an operator, then $\ostar \ostar T = - T$, where $T$ is the torsion. This is true for the Hodge $\star$ operator too: $\star \star T = - T$. 

In the paper titled \enquote{Hodge Dual for Soldered Bundles} \cite{Lucas:2008gs}, Lucas and Pereira suggest that this $\ostar$ actually \textit{is} an operator\footnote{Lucas and Pereira denote it simply by $\star$, but we would like to make the distinction between this newly-proposed operator and the Hodge operator clear.}, and that its existence is a consequence of the presence of the solder form on the frame bundle. Their primary condition is that the operator be a duality operator, so that $\ostar \ostar = \pm 1$. However, the details of the definition of this operator are unclear. 

The aim of this thesis is to present the fundamental features of teleparallel gravity. Later, we explore the possibility of a proper and mathematically consistent definition of the $\ostar$ duality operator on soldered bundles presented in \cite{Lucas:2008gs, Lucas:thesis, Pereira}.

In chapter \ref{chapter2}, we start in a more general setting of metric-affine geometries. We derive the formula for decomposition of the coefficients of a general affine connection on a Riemannian manifold into the sum of three terms. 

Chapter \ref{chapter3} contains the description of the tetrad formalism and the definition of connection 1-forms, torsion 2-forms and curvature 2-forms. We also show that the general teleparallel spin connection may be expressed as $\w{\omega}^a_{\phantom{a} b \mu} = \left( \Lambda^{-1}\right)^a_{\phantom{a} c} \partial_{\mu} \Lambda^c_{\phantom{c} b}$ for some $\Lambda \in SO(1,3)$.

In chapter \ref{chapter4} we finally turn our attention to teleparallel gravity. We derive the action of TEGR, which leads to equations of motion. Then we rewrite the action in a form resembling the Yang-Mills action of a gauge theory. 

Chapter \ref{chapter5} explores the possibility of a proper definition of the new duality operator on soldered bundles introduced by Lucas and Pereira \cite{Lucas:2008gs}. Their construction of this operator is not entirely clear, so in this chapter we state the rules for its construction in an attempt to be consistent. These rules are motivated by expecting this new operator to be similar to the Hodge star operator and also by observing the results of Lucas and Pereira. We proceed to determine how the operator behaves when applied on torsion 2-forms, connection 1-forms and curvature 2-forms. 

The final chapter, chapter \ref{chapter6}, is dedicated to potential applications of this new duality operator for soldered bundles. We also give a general idea behind the gauge structure of TEGR. 

\subsection*{Notation}

The signature of a Lorentzian metric is $(-,+,+,+)$ and the Minkowski metric is $\eta = \diag \left( -1, +1, +1, +1 \right) $. Objects related to the Levi-Civita connection will by denoted by a circle over the symbol, e.g. $\rlc{\nabla}, \rlc{\Gamma}^{\alpha}_{\mu \nu}, \rlc{R}$, while objects related to the metric teleparallel connection will be denoted by a full circle over the symbol, e.g. $\w{\nabla}, \w{\Gamma}^{\alpha}_{\mu \nu}, \w{R}$. Throughout the entire thesis, we use units in which the speed of light is $c = 1$.

	\chapter{Metric-affine geometries} \label{chapter2}

The mathematical formalism of general relativity appears to only require a differentiable spacetime manifold with with one geometric object defined on it -- the metric tensor $g$. A differentiable manifold endowed with a metric tensor is called Riemannian manifold. However, a closer investigation reveals that there is another geometric structure needed in order to define covariant derivative, which is necessary to define the curvature tensor. This structure is called affine connection. In general relativity, the choice of a connection is special in the sense that it is completely dependent on the metric, and nothing else. In a more general geometric context, however, a connection on the manifold is independent of the metric. Even non-Riemannian manifolds may have a connection defined on them.

\section{Linear connection}

For the basics of differential geometry we follow here the exposition and notation from the textbook by Marián Fecko \cite{Fecko}.

\begin{definition}
	\textit{Affine connection} $\nabla$ on a differentiable manifold $\mathcal{M}$ is a mathematical structure such that to any vector field $W$ on $\mathcal{M}$ it assigns an operator $\nabla_W$, the covariant derivative in the direction of $W$, with the following properties:
	\begin{enumerate}
		\item it is a linear operator on the tensor algebra, and it preserves the type of the tensor
		\begin{align*}
			\nabla_W : \mathcal{T}^p_q \left( \mathcal{M} \right) \longrightarrow \mathcal{T}^p_q \left( \mathcal{M} \right),  && \\
			\nabla_W \left( A + \lambda B \right) = \nabla_W A + \lambda \nabla_W B, && A,B \in \mathcal{T}^p_q \left( \mathcal{M} \right), \lambda \in \mathbb{R}.
		\end{align*}
		\item its behavior on the tensor product obeys Leibniz rule
		\begin{align*}
			\nabla_W \left( A \otimes B\right) = \left( \nabla_W A \right) \otimes B + A \otimes \left( \nabla_W B \right), && A \in \mathcal{T}^p_q \left( \mathcal{M} \right), B \in \mathcal{T}^{p^{\prime}}_{q^{\prime}} \left( \mathcal{M} \right).
		\end{align*}
		\item on type $\binom{0}{0}$ tensors (functions) it gives
		\begin{align*}
			\nabla_W \psi = W \psi, && \psi \in \mathcal{F}\left( \mathcal{M}\right) \equiv \mathcal{T}^0_0 \left( \mathcal{M} \right).
		\end{align*}
		\item it commutes with contractions
		\begin{align*}
			\nabla_W \circ C = C \circ \nabla_W,
		\end{align*}
		where $C$ is any contraction.
		\item it is $\mathcal{F}$-linear in $W$, i.e.
		\begin{align*}
			\nabla_{V + fW} = \nabla_V + f \nabla_W, && V, W \in \mathcal{T}^1_0 \left( \mathcal{M} \right), f \in \mathcal{F} \left( \mathcal{M} \right).
		\end{align*}
	\end{enumerate}
\end{definition}

Let $\mathcal{M}$ be a smooth manifold with a metric tensor $g$ and an affine connection $\nabla$ and let $\partial_{\mu}$ be a coordinate basis on $\mathcal{M}$. We denote $\Gamma^{\rho}_{\nu \mu}$ the Christoffel symbols (of the second kind) for the connection: $\nabla_{\partial_{\mu}} \partial_{\nu} = \Gamma^{\rho}_{\nu \mu} \partial_{\rho}$. The Christoffel symbols of the first kind are then defined by $\Gamma_{\rho \nu \mu} \coloneqq g_{\rho \lambda} \Gamma^{\lambda}_{\nu \mu}$.

\begin{definition}
	\textit{Curvature} $R$ is a tensor of type $\binom{1}{3}$, i.e. $R \in \mathcal{T}^1_3 \left( \mathcal{M} \right) $, defined by
	\begin{equation}
		R \left( \alpha , U, V, W \right) = \langle \alpha, \left(  \left[ \nabla_U , \nabla_V \right]  - \nabla_{\left[ U,V \right]} \right) W \rangle .
	\end{equation}
	The components of curvature with respect to the basis $\partial_{\mu}$ are
		\begin{equation}
			R^{\alpha}_{\phantom{\alpha} \beta \mu \nu} = \partial_{\mu} \Gamma^{\alpha}_{\beta \nu} - \partial_{\nu} \Gamma^{\alpha}_{\beta \mu} + \Gamma^{\alpha}_{\lambda \mu} \Gamma^{\lambda}_{\beta \nu} - \Gamma^{\alpha}_{\lambda \nu} \Gamma^{\lambda}_{\beta \mu}.
		\end{equation}
\end{definition}

\begin{definition}
	\textit{Torsion} $T$ is a tensor of type $\binom{1}{2}$, i.e. $T \in \mathcal{T}^1_2 \left( \mathcal{M} \right) $, defined by 
	\begin{equation}
		T \left( \alpha , U, V \right) =  \langle \alpha, \nabla_U V - \nabla_V U - \left[ U,V \right] \rangle .
	\end{equation}
	The components of torsion with respect to $\partial_{\mu}$ are
	\begin{equation} \label{torsion_coordinate}
		T^{\rho}_{\mu \nu} = \Gamma^{\rho}_{\nu \mu} - \Gamma^{\rho}_{\mu \nu}.
	\end{equation}
\end{definition}

\begin{definition}
	\textit{Non-metricity} $Q$ is a tensor of type $\binom{0}{3}$, i.e. $Q \in \mathcal{T}^0_3 \left( \mathcal{M} \right) $, defined by
	\begin{equation}
		Q \left( U, V, W \right) = \left( \nabla_U g \right) \left( V, W \right).
	\end{equation}
	The components of non-metricity with respect to $\partial_{\mu}$ are
	\begin{equation}
		Q_{\mu \nu \rho} = \nabla_{\mu} g_{\nu \rho} = \partial_{\mu} g_{\nu \rho} - \Gamma^{\lambda}_{\nu \mu} g_{\lambda \rho} - \Gamma^{\lambda}_{\rho \mu} g_{\nu \lambda}.
	\end{equation}
\end{definition}

\section{Special cases of metric-affine geometries}

Metric-affine geometries are concerned with the study of geometry on smooth manifolds equipped with a metric tensor and an affine connection. For a general metric-affine setting, any of the tensorial quantities defined in the previous section may be nonzero. It may also happen that all of them are nonvanishing. However, several special cases of metric-affine geometries are obtained by requiring that some of these tensors be zero \cite{Hehl:1994ue,Bahamonde:2021gfp}.

\begin{itemize}
	\item \textbf{Flat connection}: $R^{\alpha}_{\phantom{\alpha} \beta \mu \nu} \equiv 0$ \\
	If both torsion and non-metricity are present, this case is known as (general) teleparallel geometry. \cite{Adak:2023ymc}
	\item \textbf{Symmetric connection}: $T^{\rho}_{\phantom{\rho} \mu \nu} \equiv 0$ \\
	In this case, $\Gamma^{\rho}_{\nu \mu} = \Gamma^{\rho}_{\mu \nu}$. This case is considered, e.g., in Weyl geometry used in Weyl gravity. 
	\item \textbf{Metric-compatible connection}: $Q_{\mu \nu \rho} \equiv 0$ \\
	Geometry with metric-compatible connection is known as Riemann-Cartan geometry. Such connections are used in Poincaré gauge theory. \cite{Hehl:1994ue}
	\item \textbf{Metric teleparallel geometry}: $R^{\alpha}_{\phantom{\alpha} \beta \mu \nu} \equiv 0$, $Q_{\mu \nu \rho} \equiv 0$ \\
	We will be primarily concerned with this case. From now, we will refer to this case as \underline{\textit{teleparallel geometry}}. \cite{Pereira}
	\item \textbf{Symmetric teleparallelism}: $R^{\alpha}_{\phantom{\alpha} \beta \mu \nu} \equiv 0$, $T^{\rho}_{\phantom{\rho} \mu \nu} \equiv 0$ \cite{Nester:1998mp}
	\item \textbf{RLC connection} \footnote{\enquote{RLC} in the name of the connection stands for \enquote{Riemann-Levi-Civita}, but is also a pun used by Marián Fecko in his book \cite{Fecko}.}: $T^{\rho}_{\phantom{\rho} \mu \nu} \equiv 0$, $Q_{\mu \nu \rho} \equiv 0$ \\
	The most well-known class of geometries, which is used in the description of gravity in general relativity. We will denote the Christoffel symbols of the RLC connection by a circle: $\accentset{\circ}{\Gamma}^{\rho}_{\mu \nu}$.
	\item \textbf{ (Locally) Minkowski space}: $R^{\alpha}_{\phantom{\alpha} \beta \mu \nu} \equiv 0$ , $T^{\rho}_{\phantom{\rho} \mu \nu} \equiv 0$, $Q_{\mu \nu \rho} \equiv 0$ \\
	Note that here (as well as in the previous cases) only local geometry of the space is being discussed. Its global topology may be different from that of Minkowski space. Consider, for example \cite{Fecko}, the metric induced from embedding of torus $T^4$ in $\mathbb{R}^8$ with metric tensor 
	\begin{equation}
		g = - dx^1 \otimes dx^1 - dx^2 \otimes dx^2 + dx^3 \otimes dx^3 + \dots + dx^8 \otimes dx^8.
	\end{equation}
	On $T^4$ we use coordinates $\tau, \alpha, \beta, \gamma$ so that
	\begin{align*}
		x^1 = \cos \tau, && x^2 = \sin \tau, && x^3 = \cos \alpha, && x^4 = \sin \alpha, \\
		x^5 = \cos \beta, && x^6 = \sin \beta, && x^7 = \cos \gamma, && x^8 = \sin \gamma.
	\end{align*}
	The metric induced on $T^4$ is then
	\begin{equation}
		g = - d\tau \otimes d\tau + d\alpha \otimes d\alpha + d\beta \otimes d\beta + d\gamma \otimes d\gamma.
	\end{equation}
	Obviously, the RLC connection of $T^4$ with this metric has zero curvature, while the global topology of $T^4$ is different from Minkowski space $\mathbb{R}^4$.
	\begin{comment}
		\textcolor{red}{\textbf{(Ukázať, že hocijaký takýto priestor je difeomorfný s Minkowského priestorom.) - TOTO JE BLBOSŤ, VIĎ FECKO PRÍKLAD (3.2.3)! Je tam torus $T^4$ je indukovaná metrika z vloženia do $E^4$, a má tvar $g = d\alpha \otimes d\alpha + d\beta \otimes d\beta$. Ak na $T^4$ uvažujeme RLC konexiu, tá bude mať nulový Riemannov tenzor (=krivosť) (zo zrejmých dôvodov, keďže komponenty metriky voči súradniciam $\alpha, \beta$ sú konštantné, teda aj Christoffelove symboly sú v týchto súradniciach nulové). Pritom určite platí, že táto Riemannovská varieta $T^4$ nie je difeomorfná s varietou $\mathbb{R}^2$.}}
	\end{comment}
\end{itemize}

Throughout the text, we will denote the quantities related to the (metric) teleparallel connection by a full circle over the symbol, e.g. the teleparallel connection will be $\w{\nabla}$.

\section{Connection decomposition}

The Christoffel symbols of a RLC connection on $\mathcal{M}$ may be expressed in terms of the metric tensor $g_{\mu \nu}$ and its derivatives as
\begin{equation} \label{RLC}
	\accentset{\circ}{\Gamma}^{\rho}_{\mu \nu} = \dfrac{1}{2} g^{\rho \lambda} \left( g_{\lambda \mu , \nu} + g_{\lambda \nu , \mu} - g_{\mu \nu , \lambda} \right) .
\end{equation}
We may expect that for a general connection we will obtain a similar result, but now containing terms that depend on torsion and non-metricity. We start with the three equations
\begin{align*}
	Q_{\mu \nu \rho} &= g_{\rho \nu , \mu} - \Gamma^{\lambda}_{\nu \mu} g_{\lambda \rho} - \Gamma^{\lambda}_{\rho \mu} g_{\nu \lambda}, \\
	Q_{\nu \mu \rho} &= g_{\rho \mu , \nu} - \Gamma^{\lambda}_{\mu \nu} g_{\lambda \rho} - \Gamma^{\lambda}_{\rho \nu} g_{\mu \lambda}, \\
	- Q_{\rho \nu \mu} &= - g_{\mu \nu , \rho} + \Gamma^{\lambda}_{\nu \rho} g_{\lambda \mu} + \Gamma^{\lambda}_{\mu \rho} g_{\nu \lambda}.
\end{align*} 
Taking the sum and using (\ref{RLC}), we get
\begin{align}
	Q_{\mu \nu \rho} +	Q_{\nu \mu \rho} - Q_{\rho \nu \mu} &= 2 g_{\rho \lambda} \accentset{\circ}{\Gamma}^{\lambda}_{\mu \nu} - g_{\lambda \rho} \left( \Gamma^{\lambda}_{\mu \nu} + \Gamma^{\lambda}_{\nu \mu} \right) \\
	&+ g_{\lambda \mu} \left( \Gamma^{\lambda}_{\nu \rho} - \Gamma^{\lambda}_{\rho \nu} \right) + g_{\lambda \nu} \left( \Gamma^{\lambda}_{\mu \rho} - \Gamma^{\lambda}_{\rho \mu} \right),   
\end{align}
and finally, using (\ref{torsion_coordinate})
\begin{align}
	Q_{\mu \nu \rho} +	Q_{\nu \mu \rho} - Q_{\rho \nu \mu} &= 2 g_{\rho \lambda} \accentset{\circ}{\Gamma}^{\lambda}_{\mu \nu} - 2 g_{\lambda \rho} \Gamma^{\lambda}_{\mu \nu} - g_{\lambda \rho} T^{\lambda}_{\mu \nu} + g_{\lambda \mu} T^{\lambda}_{\rho \nu} + g_{\lambda \nu} T^{\lambda}_{\rho \mu}.   
\end{align}
Rearranging and raising the index $\rho$
\begin{align}
	\Gamma^{\rho}_{\mu \nu} &= \accentset{\circ}{\Gamma}^{\rho}_{\mu \nu} + \dfrac{1}{2} \left( - T^{\rho}_{\phantom{\rho} \mu \nu} + T^{\phantom{\mu} \rho}_{\mu \phantom{\rho} \nu} + T^{\phantom{\nu} \rho}_{\nu \phantom{\rho} \mu} \right) - \dfrac{1}{2}\left( Q_{\mu \nu}^{\phantom{\mu \nu} \rho} +	Q_{\nu \mu}^{\phantom{\nu \mu} \rho} - Q^{\rho}_{\phantom{\rho} \nu \mu}  \right).
\end{align}
We see that the connection (or more precisely, the Christoffel symbols of the connection) may be decomposed and expressed as \cite{Bahamonde:2021gfp}
\begin{equation} \label{con_decomposition}
	\Gamma^{\rho}_{\mu \nu} = \accentset{\circ}{\Gamma}^{\rho}_{\mu \nu} + K^{\rho}_{\phantom{\rho} \mu \nu} + L^{\rho}_{\phantom{\rho} \mu \nu},
\end{equation}
where we have introduced the \textit{\textbf{contortion}}
\begin{equation} \label{contorsion}
	K^{\rho}_{\phantom{\rho} \mu \nu} \coloneqq \dfrac{1}{2} \left( T^{\phantom{\mu} \rho}_{\mu \phantom{\rho} \nu} + T^{\phantom{\nu} \rho}_{\nu \phantom{\rho} \mu} - T^{\rho}_{\phantom{\rho} \mu \nu} \right) 
\end{equation}
and the \textit{\textbf{disformation}}
\begin{equation}
	L^{\rho}_{\phantom{\rho} \mu \nu} \coloneqq \dfrac{1}{2} \left( Q^{\rho}_{\phantom{\rho} \mu \nu} - Q^{\phantom{\mu} \rho}_{\mu \phantom{\rho} \nu} - Q^{\phantom{\nu} \rho}_{\nu \phantom{\rho} \mu} \right).
\end{equation}
Both are tensors of type $\binom{1}{2}$. Note that while disformation is symmetric, contortion is neither symmetric, nor antisymmetric in the lower indices. 

	\chapter{Tetrads} \label{chapter3}

General relativity is essentialy just geometry on a Riemannian manifold. In order to do any calculations on a manifold we need to use coordinates, although the results of such calculations remain independent of the choice of coordinates. In the standard formulation of general relativity, all tensors are expressed with respect to coordinate bases and Einstein field equations are also expressed using a coordinate basis. However, it is possible, and sometimes even convenient, to express them with respect to a non-coordinate basis.

\section{Holonomic and anholonomic frames}

Let $e_a$ be a frame defined in some region $\mathcal{O} \subset \mathcal{M}$ (i.e. a basis of $T_p \mathcal{M}$ in each $p \in \mathcal{O}$) and let $x^{\alpha}$ be coordinates in some region $\tilde{\mathcal{O}} \subset \mathcal{M}$ such that $\mathcal{O} \cap \tilde{\mathcal{O}} \neq \emptyset$. As $e_a$ and $\partial_{\alpha}$ are both bases, there exists a (non-singular) matrix $e_a^{\phantom{a} \mu} (x)$ such that
\begin{equation}
	e_a = e_a^{\phantom{a} \mu} (x) \partial_{\mu}.
\end{equation}
Similarly, we define the matrix $e^a_{\phantom{a} \mu} (x)$, which relates the corresponding dual bases:
\begin{equation}
	e^a = e^a_{\phantom{a} \mu} (x) dx^{\mu}.
\end{equation}
The condition of duality gives us
\begin{align}
	&e^a_{\phantom{b} \mu} e_b^{\phantom{b} \mu} = \delta^a_b ,	&e_a^{\phantom{a} \mu} e^a_{\phantom{a} \nu} = \delta^{\mu}_{\nu}.
\end{align}

Now we have two frames in a certain region: the coordinate frame $\partial_{\alpha}$ and the frame $e_a$. The components of a tensor $t$ with respect to the coordinate frame $\partial_{\alpha}$ are
\begin{equation} \label{t_tensor}
	t^{\alpha \dots \beta}_{\phantom{\alpha \dots \beta} \gamma \dots \delta} = t(dx^{\alpha}, \dots, dx^{\beta}; \partial_{\gamma}, \dots, \partial_{\delta}) ,
\end{equation}
while its components with respect to the frame $e_a$ are
\begin{equation}
	t^{a \dots b}_{\phantom{a \dots b} c \dots d} = t(e^{a}, \dots, e^{b}; e_{c}, \dots, e_{d}).
\end{equation}
We may wish to express the components with mixed indices, e.g.
\begin{equation}
	t^{\alpha \dots \beta a \dots b}_{\phantom{\alpha \dots \beta a \dots b} \gamma \dots \delta c \dots d} = t(dx^{\alpha}, \dots, dx^{\beta}, e^a, \dots, e^b; \partial_{\gamma}, \dots, \partial_{\delta}, e_c, \dots, e_d).
\end{equation}
Strictly speaking, these are not the components of the tensor (\ref{t_tensor}). The components of a tensor are defined by the tensor evaluated on vectors of one basis and covectors of the corresponding dual basis -- like in (\ref{t_tensor}) -- but here we are using two bases $\partial_{\alpha}$ and $e_a$.

It is clear that an upper Greek index may be changed to a Latin index by contracting with $e^a_{\phantom{a} \mu}$:
\begin{equation}
	t^{\dots a \dots}_{\phantom{\dots a \dots} \dots \dots} = e^a_{\phantom{a} \mu} t^{\dots \mu \dots}_{\phantom{\dots \mu \dots} \dots \dots},
\end{equation}
an upper Latin index may be changed to a Greek index by contracting with $e_a^{\phantom{a} \mu}$:
\begin{equation}
	t^{\dots \mu \dots}_{\phantom{\dots \mu \dots} \dots \dots} = e_a^{\phantom{a} \mu} \, t^{\dots a \dots}_{\phantom{\dots a \dots} \dots \dots},
\end{equation}
etc.

We define the \textit{\textbf{coefficients of anholonomy}} $c^a_{bc} (x)$ for a given frame $e_a$ by the relation
\begin{equation}
	\left[ e_b , e_c \right] = c^a_{bc} (x) e_a.
\end{equation}
For any coordinate frame,
\begin{equation}
	\left[ \partial_{\alpha} , \partial_{\beta} \right] = 0.
\end{equation}
That means that the coefficients of anholonomy for a coordinate frame vanish. Therefore, these frames are \textit{\textbf{holonomic}}.
The reverse statement is also true: if, for a given frame $e_a$, it holds that $\left[ e_a , e_b \right] = 0$, then $e_a$ is a coordinate frame, that is, in some neighbourhood of any point there exist local coordinates $x^{a}$ such that $e_a = \partial_a$. Frames for which $c^a_{bc} (x) \neq 0$ are called \textit{\textbf{anholonomic}} \cite{Fecko}.

Any two bases $E_a, \tilde{E}_b$ of the same 4-dimensional real vector space $L$ are related via some $A \in GL \left( 4,\mathbb{R} \right)$:
\begin{equation}
	E_a = A^b_{\phantom{b} a} \tilde{E}_b .
\end{equation} 
As any two frames $e_a, \tilde{e}_a$ on a 4-dimensional manifold $\mathcal{M}$ are at any given point $x \in \mathcal{M}$ just two bases of the tangent space $T_x \mathcal{M}$, which is a real vector space, the frames are related by local $GL\left( 4,\mathbb{R}\right)$ transformation:
\begin{equation}
	e_a (x) = A^b_{\phantom{b} a} (x) \tilde{e}_b (x).
\end{equation}

\section{Orthonormal frames}

Suppose that $\mathcal{M}$ is a Riemannian manifold with metric $g$. A special class of frames on $\mathcal{M}$ are the orthonormal frames. These frames are called vielbeins, and in the case $\dim \mathcal{M} = 4$ they are called vierbeins or \textit{\textbf{tetrads}} \cite{Carroll}. We will adopt the following convention: we will denote general frames by $e_a$, while tetrads by $h_a$. Authors commonly use the term tetrad to refer to any of $h_a, h^a, h^a_{\phantom{a} \mu}, h_a^{\phantom{a} \mu}$. 

By definition,
\begin{equation}
	g(h_a, h_b) = \eta_{ab},
\end{equation}
which means that
\begin{align} \label{metric_from_tetrad}
	&g = \eta_{ab} h^a \otimes h^b,	&g_{\mu \nu} = \eta_{ab} h^a_{\phantom{a} \mu} h^b_{\phantom{b} \nu}.
\end{align}
Of course, we can lower the Greek indices by contraction with $g_{\mu \nu}$ and raise them by contracting with $g^{\mu \nu}$. However, Latin indices are lowered by $\eta_{ab}$ and raised by $\eta^{ab}$.

An important corollary of (\ref{metric_from_tetrad}) is that we can change our viewpoint. We started this section by assuming that a metric $g$ is given on $\mathcal{M}$. Then we defined tetrads as orthonormal frames. However, we may also start from a tetrad $h_a$ on $\mathcal{M}$ and then use (\ref{metric_from_tetrad}) to construct the metric $g$ for which $h_a$ is a tetrad.

Let $h_a, \tilde{h}_a$ be two tetrads on $\mathcal{M}$. The tetrads are frames, so they are related by a local $GL\left( 4, \mathbb{R} \right)$ transformation
\begin{equation}
	h_a (x) = A^b_{\phantom{b} a} (x) \tilde{h}_b (x).
\end{equation}
However, $h_a$ and $\tilde{h}_b$ are tetrads, therefore
\begin{equation}
	\eta_{ab} =	g(h_a, h_b) = A^c_{\phantom{c} a} A^d_{\phantom{d} b} g(\tilde{h}_c, \tilde{h}_d) = \left( A^{T} \eta A \right)_{ab}.
\end{equation}
Thus, assuming that the signature of $g$ is $(1,3)$, two tetrads $h_a, \tilde{h}_a$ with the same orientation\footnote{From now on we will always assume that our tetrads are all right-handed.} are always related by a matrix $\Lambda(x) \in SO(1,3)$ \cite{Fecko}, i.e. by a local Lorentz transformation \cite{Carroll}.

\section{Connection, torsion and curvature forms}

Let $\mathcal{M}$ be a manifold, $\nabla$ a connection on $\mathcal{M}$ and let $e_a$ be a frame defined in some region $\mathcal{O} \subset \mathcal{M}$. The following definitions adhere to the exposition and notation from \cite{Fecko}.
\begin{definition} \label{connection_forms_def}
	The \textit{connection 1-forms} $\omega^a_{\phantom{a} b} \in \Omega^1 \left( \mathcal{M} \right)$ with respect to the frame $e_a$ are defined by
	\begin{equation}
		\omega^a_{\phantom{a} b}(V) e_a \coloneqq \nabla_V e_b.
	\end{equation}
\end{definition}
\begin{definition}
	The \textit{curvature 2-forms} $\Omega^a_{\phantom{a} b} \in \Omega^2 \left( \mathcal{M} \right)$ with respect to the frame $e_a$ are defined by
	\begin{equation}
		\Omega^a_{\phantom{a} b} \coloneqq \dfrac{1}{2} R^a_{\phantom{a} b \mu \nu} dx^{\mu} \wedge dx^{\nu}.
	\end{equation}
	We also define
	\begin{align}
		\omega_{ab} \coloneqq g_{ac} \omega^c_{\phantom{c} b} && \Omega_{ab} \coloneqq g_{ac} \Omega^c_{\phantom{a} b}.
	\end{align}
\end{definition} 
\begin{definition}
	The \textit{torsion 2-forms} $T^a \in \Omega^2 \left( \mathcal{M} \right)$ with respect to the frame $e_a$ are defined by
	\begin{equation}
		T^a \coloneqq \dfrac{1}{2} T^a_{\phantom{a} \mu \nu} dx^{\mu} \wedge dx^{\nu}.
	\end{equation}
\end{definition}

Further in this thesis, we will consider a connection which is compatible with a metric tensor of signature $(1,3)$ on a 4-dimensional manifold. In that case, if $\omega^a_{\phantom{a} b}$ are the 1- forms of this connection with respect to a tetrad (i.e. orthonormal frame) $h_a$, the fields $\omega^a_{\phantom{a} b \mu} (x)$ are commonly called \textit{\textbf{spin connection}} \cite{Fecko}. The connection 1-forms may be viewed as a single 1-form $\omega$ with values in the Lie algebra $\mathfrak{so}(1,3)$ of the Lorentz group.

\begin{theorem}
	Let $\nabla$ be a metric-compatible connection on $\mathcal{M}$, $\dim \mathcal{M} = 4$, for metric of signature $\left( 1,3 \right)$. The 1-forms of this connection $\omega^a_{\phantom{a} b}$ with respect to tetrad $h_a$ constitute a matrix from $\mathfrak{so}(1,3)$ algebra.
\end{theorem}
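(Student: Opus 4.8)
The plan is to translate the defining condition of $\mathfrak{so}(1,3)$ into a statement about the lowered components $\omega_{ab}$ and then to extract that statement directly from metric compatibility. Recall that $\mathfrak{so}(1,3)$ consists precisely of those matrices $X$ satisfying $X^T \eta + \eta X = 0$; written in components with $X_{ab} \coloneqq \eta_{ac} X^c_{\phantom{c} b}$, this is equivalent to the antisymmetry $X_{ab} = -X_{ba}$. Since on tetrad indices the metric acts as $\eta$ (because $g(h_a,h_b)=\eta_{ab}$, so Latin indices are lowered by $\eta$), it therefore suffices to show that $\omega_{ab} = -\omega_{ba}$, where $\omega_{ab} = \eta_{ac}\,\omega^c_{\phantom{c} b}$.

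First I would fix an arbitrary vector field $V$ and differentiate the constant $\eta_{ab} = g(h_a, h_b)$ along $V$. Since $g(h_a,h_b)$ is the function obtained by contracting $g \otimes h_a \otimes h_b$, properties 2, 3 and 4 in the definition of the affine connection (the Leibniz rule, the action on functions, and commutation with contractions) give
\begin{equation*}
	V\bigl(g(h_a,h_b)\bigr) = (\nabla_V g)(h_a,h_b) + g(\nabla_V h_a, h_b) + g(h_a, \nabla_V h_b).
\end{equation*}
The left-hand side vanishes because $\eta_{ab}$ is constant, and the first term on the right vanishes by metric compatibility, $\nabla_V g = 0$.

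Next I would insert the definition of the connection 1-forms, $\nabla_V h_b = \omega^c_{\phantom{c} b}(V)\, h_c$, into the two remaining terms and contract using $g(h_c, h_d) = \eta_{cd}$. This turns $g(\nabla_V h_a, h_b)$ into $\omega^c_{\phantom{c} a}(V)\,\eta_{cb} = \omega_{ba}(V)$ and $g(h_a, \nabla_V h_b)$ into $\omega^c_{\phantom{c} b}(V)\,\eta_{ac} = \omega_{ab}(V)$, so the identity collapses to $\omega_{ab}(V) + \omega_{ba}(V) = 0$. As $V$ was arbitrary, this is an equality of 1-forms, $\omega_{ab} = -\omega_{ba}$, which is exactly the $\mathfrak{so}(1,3)$ condition identified in the first step.

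There is no serious obstacle; the result is essentially a one-line consequence of metric compatibility once the bookkeeping is in place. The only point demanding care is the very first step — applying the Leibniz rule and the commutation-with-contraction property correctly to the scalar $g(h_a,h_b)$, and tracking which index of $\omega_{ab}$ ends up lowered after contracting with $\eta$ — so I would state the index conventions explicitly to rule out a spurious sign or a transposed relation.
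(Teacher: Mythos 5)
Your proof is correct, but it is not the argument the paper gives. You proceed algebraically: you characterize $\mathfrak{so}(1,3)$ by the condition $X^T\eta + \eta X = 0$, i.e.\ antisymmetry of the $\eta$-lowered matrix, and then derive $\omega_{ab} = -\omega_{ba}$ by differentiating the constant $\eta_{ab} = g(h_a,h_b)$ along an arbitrary $V$, using the Leibniz rule, commutation with contractions, and $\nabla_V g = 0$. The paper instead argues geometrically through parallel transport: metricity implies that parallel transport carries orthonormal frames to orthonormal frames, so the transported tetrad satisfies $h_a^{\parallel}(p) = \Lambda^b_{\phantom{b} a}(\varepsilon)\, h_b(p)$ with $\Lambda(\varepsilon) \in SO(1,3)$ and $\Lambda(0)=\mathbb{I}$, and then the limit definition of $\nabla_V$ exhibits $\omega^b_{\phantom{b} a}(V)$ as the velocity at the identity of a curve in $SO(1,3)$, hence an element of its Lie algebra. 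Your route is the more self-contained one relative to the paper's own axioms: it uses only the five defining properties of the connection and $\nabla g = 0$, and it yields the explicit antisymmetry $\omega_{ab} = -\omega_{ba}$ that the paper later invokes (e.g.\ in the rules of section 5.2.1). The paper's route buys geometric insight — it makes visible \emph{why} the Lie algebra of the structure group appears — but it leans on two facts it does not prove (that parallel transport of an orthonormal frame stays orthonormal and lands in the same $SO(1,3)$ component, and the differentiability of $\varepsilon \mapsto \Lambda(\varepsilon)$), whereas your computation closes those gaps at the cost of being less illuminating about the frame-bundle picture.
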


\begin{proof}

	Metric compatibility of a connection means that parallel transport preserves lengths and angles between vectors. Let us consider a vector field $V$, choose a point $p$ and consider the integral curve $\gamma(t)$ passing through it so that $\gamma(0) = p$, and let us parallel-transport vector $h_a$ of the tetrad from point $\gamma(\varepsilon)$ to point $p$ along $\gamma$. We denote the result of this parallel transport $h_a^{\parallel} (p)$. The covariant derivative is 
	\begin{equation}
		\left. \nabla_V h_a \right|_p = \lim_{\varepsilon \rightarrow 0} \dfrac{h_a^{\parallel} (p) - h_a (p)}{\varepsilon}.
	\end{equation}
	Because of metricity of the connection, $h_a^{\parallel} (p) = \Lambda^b_{\phantom{b} a} (\varepsilon) h_b (p)$, where $\Lambda \in SO(1,3)$ and $\Lambda (0) = \mathbb{I}$. Therefore by Definition \ref{connection_forms_def}
	\begin{equation}
		\left. \omega^b_{\phantom{b} a} (V) h_b \right|_p = \lim_{\varepsilon \rightarrow 0} \dfrac{\Lambda^b_{\phantom{b} a} (\varepsilon) - \delta^b_a }{\varepsilon} h_b (p),
	\end{equation}
	and so $\omega^b_{\phantom{b} a} (V)$ must be components of a matrix from the Lie algebra $\mathfrak{so}(1,3)$ of the Lorentz  group \cite{Fecko}.
\end{proof}

\begin{comment}
	not only on  if we start with an orthonormal frame, we 
	If we started with the bundle $L\mathcal{M}$ of all frames, we would be able to obtain a general (not necessarily metric-compatible) connection on $\mathcal{M}$. 
\end{comment}

Cartan structure equations allow us to express the torsion 2-forms and curvature 2-forms with respect to frame (tetrad) $h_a$ using the coframe (cotetrad) $h^a$ and the connection 1-forms $\omega^a_{\phantom{a} b}$:
\begin{align}
	T^a &= dh^a + \omega^a_{\phantom{a} b} \wedge h^b,    \label{Cartan_struct_torsion} \\
	\Omega^a_{\phantom{a} b} &= d\omega^a_{\phantom{a} b} + \omega^a_{\phantom{a} c} \wedge \omega^c_{\phantom{c} b}.    \label{Cartan_struct_curvature}
\end{align}

\section{General teleparallel connection}

\begin{theorem}
	Let $\w{\nabla}$ be a metric teleparallel connection on $\mathcal{M}$, $\dim \mathcal{M} = 4$, for metric of signature $\left( 1,3 \right)$. There exists some (local) orthonormal frame (tetrad) $h_a$ on $\mathcal{M}$ such that the forms $\w{\omega}^a_{\phantom{a} b}$ of this connection with respect to $h_a$ vanish, that is 
	\begin{equation}
		\w{\omega}^a_{\phantom{a} b}=0.
	\end{equation}
\end{theorem}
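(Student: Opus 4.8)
The plan is to exploit the two defining properties of a metric teleparallel connection separately: vanishing curvature will let me construct a parallel frame, and metric compatibility will guarantee that this frame is orthonormal. Concretely, I would fix a point $p_0 \in \mathcal{M}$, choose any orthonormal basis $\{h_a(p_0)\}$ of $T_{p_0}\mathcal{M}$, and then define a frame field $h_a$ on a neighbourhood $\mathcal{O}$ of $p_0$ by parallel-transporting these vectors with respect to $\w{\nabla}$. Such a frame would satisfy $\w{\nabla}_V h_a = 0$ for every $V$ by construction, and Definition \ref{connection_forms_def} would then immediately give $\w{\omega}^b_{\phantom{b}a}(V)\,h_b = 0$; since the $h_b$ form a basis, this forces $\w{\omega}^a_{\phantom{a}b} = 0$, which is the claim.

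The main obstacle is making the parallel-transported frame well defined, that is, independent of the chosen path from $p_0$ to $p$. This is exactly where the hypothesis $\w{R}=0$ enters: vanishing curvature means the holonomy around every sufficiently small contractible loop is trivial, so on a simply-connected neighbourhood parallel transport is path-independent and $h_a$ is a genuine smooth frame field. I would justify this either through the identification of curvature with the infinitesimal holonomy, or, more explicitly, through Frobenius' theorem, since the existence of a covariantly constant frame is equivalent to the integrability of the horizontal distribution on the frame bundle, whose integrability condition is precisely the vanishing of the curvature 2-forms.

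An equivalent and more computational route is to start from an arbitrary tetrad $\tilde{h}_a$, whose connection 1-forms $\w{\omega}$ are $\mathfrak{so}(1,3)$-valued by the previous theorem and satisfy $d\w{\omega} + \w{\omega}\wedge\w{\omega} = 0$ by the Cartan structure equation (\ref{Cartan_struct_curvature}) together with $\w{\Omega}^a_{\phantom{a}b}=0$. I would then seek a local Lorentz transformation $\Lambda \colon \mathcal{O} \to SO(1,3)$ solving the linear system $d\Lambda + \w{\omega}\,\Lambda = 0$, whose integrability condition $(d\w{\omega} + \w{\omega}\wedge\w{\omega})\,\Lambda = 0$ is guaranteed by flatness; Frobenius again supplies a local solution with $\Lambda(p_0) = \mathbb{I}$. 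Under the change of tetrad induced by $\Lambda$ the connection forms transform to $\Lambda^{-1}\w{\omega}\,\Lambda + \Lambda^{-1}d\Lambda = 0$, so in the new tetrad $h_a$ the connection vanishes.

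Finally, in either approach I must verify orthonormality. In the transport picture this is immediate: metric compatibility of $\w{\nabla}$ means parallel transport preserves $g$, so $g(h_a,h_b) = g(h_a(p_0),h_b(p_0)) = \eta_{ab}$ throughout $\mathcal{O}$. In the gauge picture one checks that $\Lambda^{T}\eta\,\Lambda$ is constant: differentiating and using $d\Lambda = -\w{\omega}\,\Lambda$ together with the $\mathfrak{so}(1,3)$ condition $\w{\omega}^{T}\eta + \eta\,\w{\omega} = 0$ yields $d(\Lambda^{T}\eta\,\Lambda)=0$, so $\Lambda$ stays in $SO(1,3)$ and the transformed frame is again a tetrad. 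This confirms that the frame in which $\w{\omega}^a_{\phantom{a}b}=0$ can be chosen orthonormal, completing the argument.
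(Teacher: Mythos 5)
Your proposal is correct, and both of your routes establish the theorem, but they take a genuinely different path from the paper's proof. The paper works upstairs on the orthonormal frame bundle $O\mathcal{M}$: it first proves an auxiliary lemma that a flat principal connection admits a local horizontal section (Frobenius applied to the horizontal distribution on $O\mathcal{M}$), and then spends most of its effort on the translation step you never need — verifying that the vanishing of the curvature 2-forms $\w{\Omega}^a_{\phantom{a} b}$ on the base implies the vanishing of the $\mathfrak{so}(1,3)$-valued curvature $\Omega$ of the principal connection on the bundle, by expanding $\Omega = d\omega + \frac{1}{2}\left[ \omega \wedge \omega \right]$ in components and pulling back by arbitrary sections. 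Your arguments stay on the base manifold: the parallel-transport construction makes the geometry transparent (flatness gives path-independence of transport, metricity gives orthonormality), and your gauge-theoretic version — solving $d\Lambda + \w{\omega}\Lambda = 0$ by Frobenius and checking $d\left( \Lambda^T \eta \Lambda \right) = 0$ — is precisely the paper's horizontal-section argument written in a local trivialization, since a horizontal section through a point is the graph of such a $\Lambda$. What your route buys is the avoidance of the bundle-versus-base curvature bookkeeping, together with an explicit check that the resulting frame is orthonormal (the paper gets this for free by working in $O\mathcal{M}$ from the start, which is exactly where it invokes metric compatibility); what the paper's route buys is an auxiliary lemma that applies verbatim to any flat connection on any principal bundle, at the cost of heavier machinery.
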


\begin{proof}
	First, we prove an auxiliary statement \cite{Fecko}:
	
	Let $\omega$ be a connection 1-form on $O\mathcal{M}$ (the bundle of all orthonormal right-handed frames on $\mathcal{M}$) and $\Omega = D \omega = 0$ its curvature. Then there exists a (local) section $\sigma : \mathcal{M} \supset \mathcal{O} \longrightarrow O\mathcal{M}$ such that $\sigma^* \omega = 0$. \cite{Fecko}
	
	To prove this auxiliary statement, recall that $\omega$ is a 1-form whose component forms $\omega^i$ determine the horizontal distribution $\mathcal{D}$ on $O\mathcal{M}$, that is, a vector $U \in \mathcal{D}$ if and only if $\omega^i (U) = 0$. Further, by Frobenius theorem $\mathcal{D}$ is integrable if and only if $d\omega^i (U,V) = 0$ for every $U,V \in \mathcal{D}$. Equivalently, $\mathcal{D}$ is integrable if and only if $D\omega = \hor d \omega = 0$. That is because $\hor d \omega (U,V) = d \omega(\hor U,\hor V)$, and $\hor U$ and $\hor V$ are (by definition) horizontal vectors for any vectors $U,V$ on $O\mathcal{M}$. Now, because $\Omega \equiv D\omega = 0$, the horizontal distribution determined by $\omega$ is integrable. Consider an integral submanifold $\mathcal{S} \subset O\mathcal{M}$ given by distribution $\mathcal{D}$, and a section $\sigma : \mathcal{M} \supset \mathcal{O} \longrightarrow \mathcal{S}$. Then $\sigma$ is a horizontal section, which means that for any vector $W$ tangent to $\mathcal{M}$ we have $\sigma^* \omega (W) = \omega (\sigma_* W) = 0$ because $\sigma_* W$ is tangent to $\mathcal{S}$, hence horizontal. Thus $\sigma^* \omega = 0$.
	
	Now we use this auxiliary statement to prove our proposition. First, the metricity of $\w{\nabla}$ is necessary to ensure that the connection can be constructed as a connection on $O\mathcal{M}$. Second, we know that the curvature forms $\w{\Omega}^a_{\phantom{a} b} = 0$. However, $\w{\Omega}^a_{\phantom{a} b} \in \Omega^2 \left( \mathcal{M} \right)$, while in the auxiliary statement, $\Omega \in \Omega^2 (O\mathcal{M}, \Ad)$. We have to show that the curvature $\Omega \in \Omega^2 (O\mathcal{M}, \Ad)$ corresponding to $\w{\Omega}^a_{\phantom{a} b} \in \Omega^2 \left( \mathcal{M} \right)$ is also zero. We will use equation
	\begin{equation}
		\Omega = d\omega + \dfrac{1}{2} \left[ \omega \wedge \omega \right],
	\end{equation}
	where $\omega$ and $\Omega$ are matrix forms, i.e. $\omega^a_{\phantom{a} b}$ and $\Omega^a_{\phantom{a} b}$ are the components of $4 \times 4$ matrices, and we will treat them as such. Let us remark that $\left[ \omega \wedge \omega \right] = \omega^a \wedge \omega^b \left[ E_a, E_b \right] $, that is, the product of the component forms $\omega^a$ is the exterior product, and the basis vectors $E_a$ of the Lie algebra are treated as elements of algebra. Hence\footnote{$E^a_b$ is a matrix with components $\left( E^a_b \right)^c_d = \delta^a_d \delta^c_b$.}
	\begin{align*}
		\Omega^a_{\phantom{a} b} E^b_a &= d\omega^a_{\phantom{a} b} E^b_a + \dfrac{1}{2} \omega^a_{\phantom{a} c} \wedge \omega^d_{\phantom{d} b} \left[ E^c_a , E^b_d \right] \\
		&= d\omega^a_{\phantom{a} b} E^b_a + \dfrac{1}{2} \omega^a_{\phantom{a} c} \wedge \omega^d_{\phantom{d} b} \left( E^c_a E^b_d - E^b_d E^c_a \right) \\
		&= d\omega^a_{\phantom{a} b} E^b_a + \dfrac{1}{2} \omega^a_{\phantom{a} c} \wedge \omega^d_{\phantom{d} b} \left( \delta^c_d E^b_a - \delta^b_a E^c_d \right) \\
		&= d\omega^a_{\phantom{a} b} E^b_a + \dfrac{1}{2} \left( \omega^a_{\phantom{a} c} \wedge \omega^c_{\phantom{c} b} E^b_a - \omega^a_{\phantom{a} c} \wedge \omega^d_{\phantom{d} a} E^c_d \right) \\
		&= d\omega^a_{\phantom{a} b} E^b_a + \dfrac{1}{2} \left( \omega^a_{\phantom{a} c} \wedge \omega^c_{\phantom{c} b} - \omega^c_{\phantom{c} b} \wedge \omega^a_{\phantom{a} c} \right) E^b_a \\
		&= \left( d\omega^a_{\phantom{a} b} + \omega^a_{\phantom{a} c} \wedge \omega^c_{\phantom{c} b} \right) E^b_a.
	\end{align*}
	Applying pull-back $\sigma^*$ gives
	\begin{equation}
		\sigma^* \Omega^a_{\phantom{a} b} = d \left( \sigma^* \omega^a_{\phantom{a} b} \right) + \left( \sigma^* \omega^a_{\phantom{a} c} \right) \wedge \left( \sigma^* \omega^c_{\phantom{c} b} \right) = d \w{\omega}^a_{\phantom{a} b} + \w{\omega}^a_{\phantom{a} c} \wedge \w{\omega}^c_{\phantom{c} b} = \w{\Omega}^a_{\phantom{a} b}.
	\end{equation}
	Here $\w{\Omega}^a_{\phantom{a} b}$ are curvature 2-forms with respect to the tetrad $h_a$ that corresponds to the section $\sigma$. However, curvature is a tensor, so $\w{\Omega}^a_{\phantom{a} b} = 0$ with respect to any tetrad. Therefore $\sigma^* \Omega^a_{\phantom{a} b} = 0$ for any $\sigma$, which means $\Omega^a_{\phantom{a} b} = 0$. By the auxiliary statement, there exists a local section $\tilde{\sigma}$ such that $\w{\omega}^a_{\phantom{a} b} \equiv \tilde{\sigma}^* \omega^a_{\phantom{a} b} = 0$\footnote{This gauge is known as the \textit{Weitzenb\"{o}ck gauge} \cite{Bahamonde:2021gfp}.}. The section $\tilde{\sigma}$ corresponds to some tetrad $\tilde{h}_a$ on $\mathcal{M}$.
\end{proof}

\begin{corollary}
	Let $\w{\nabla}$ be a metric teleparallel connection on $\mathcal{M}$, $\dim \mathcal{M} = 4$, for metric of signature $\left( 1,3 \right)$, and let $\w{\omega}^a_{\phantom{a} b}$ be the 1-forms of this connection with respect to tetrad $h_a$. There exists a matrix 0-form $\Lambda \in \Omega^0 \left( \mathcal{M},SO(1,3) \right)$ such that \cite{Bahamonde:2021gfp}
	\begin{equation}
		\w{\omega}^a_{\phantom{a} b} = \left( \Lambda^{-1}\right)^a_{\phantom{a} c} d\Lambda^c_{\phantom{c} b}.
	\end{equation}
\end{corollary}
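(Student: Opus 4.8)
The plan is to combine the preceding theorem with the inhomogeneous transformation law for connection 1-forms under a change of frame. The theorem supplies a (local) tetrad, which I denote $\tilde{h}_a$ (the Weitzenb\"ock tetrad), with respect to which the 1-forms of $\w{\nabla}$ vanish; write $\w{\tilde\omega}^a_{\phantom{a} b} = 0$ for them. Since $h_a$ and $\tilde{h}_a$ are two tetrads for the same metric of signature $(1,3)$, the result established earlier guarantees that they are related by a local Lorentz transformation: there is a smooth matrix 0-form $\Lambda \in \Omega^0(\mathcal{M}, SO(1,3))$, defined on the common region, with $h_b = \Lambda^c_{\phantom{c} b}\, \tilde{h}_c$. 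The whole problem thus reduces to expressing $\w{\omega}^a_{\phantom{a} b}$ in terms of $\Lambda$ and $\w{\tilde\omega}^a_{\phantom{a} b}$, and then setting the latter to zero.

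To derive the transformation law, I would apply Definition \ref{connection_forms_def} and the Leibniz rule to the relation $h_b = \Lambda^c_{\phantom{c} b}\, \tilde{h}_c$:
\begin{equation*}
	\nabla_V h_b = \left( V\Lambda^c_{\phantom{c} b} \right) \tilde{h}_c + \Lambda^c_{\phantom{c} b}\, \w{\tilde\omega}^d_{\phantom{d} c}(V)\, \tilde{h}_d.
\end{equation*}
On the other hand, by the definition of $\w{\omega}^a_{\phantom{a} b}$ together with $h_a = \Lambda^d_{\phantom{d} a}\, \tilde{h}_d$ we have $\nabla_V h_b = \w{\omega}^a_{\phantom{a} b}(V)\, \Lambda^d_{\phantom{d} a}\, \tilde{h}_d$. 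Matching the coefficients of the basis vectors $\tilde{h}_d$ and recalling $V\Lambda^d_{\phantom{d} b} = d\Lambda^d_{\phantom{d} b}(V)$ gives, in matrix notation,
\begin{equation*}
	\w{\omega} = \Lambda^{-1}\, d\Lambda + \Lambda^{-1}\, \w{\tilde\omega}\, \Lambda.
\end{equation*}

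The final step is immediate: inserting the Weitzenb\"ock gauge $\w{\tilde\omega} = 0$ collapses this to $\w{\omega} = \Lambda^{-1}\, d\Lambda$, i.e. $\w{\omega}^a_{\phantom{a} b} = \left(\Lambda^{-1}\right)^a_{\phantom{a} c}\, d\Lambda^c_{\phantom{c} b}$, as claimed. I do not anticipate any genuine difficulty; the points requiring care are purely bookkeeping. First, one must keep the index positions consistent so that the inhomogeneous term comes out as $\Lambda^{-1} d\Lambda$ and not $d\Lambda\, \Lambda^{-1}$ — this is dictated by the convention $h_b = \Lambda^c_{\phantom{c} b}\, \tilde{h}_c$ fixed at the outset. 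Second, one should record that $\Lambda$ takes values in $SO(1,3)$ rather than a larger group, because both tetrads are right-handed and orthonormal for the signature-$(1,3)$ metric, as shown earlier. Finally, since both the Weitzenb\"ock tetrad and hence $\Lambda$ exist only locally, the corollary is a local statement, inherited from the \enquote{(local)} qualifier of the theorem.
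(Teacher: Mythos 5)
Your proposal is correct and follows essentially the same route as the paper: invoke the preceding theorem to obtain a Weitzenb\"ock tetrad $\tilde{h}_a$ with vanishing connection 1-forms, relate $h_a$ to $\tilde{h}_a$ by a local Lorentz transformation $\Lambda \in SO(1,3)$, apply the inhomogeneous transformation law $\w{\omega} = \Lambda^{-1}\tilde{\w{\omega}}\Lambda + \Lambda^{-1}d\Lambda$, and set $\tilde{\w{\omega}} = 0$. The only difference is cosmetic: the paper cites the transformation law from the literature, whereas you rederive it from Definition \ref{connection_forms_def} and the Leibniz rule, which is a sound (and self-contained) substitute.
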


\begin{proof}
	Let $\w{\omega}^a_{\phantom{a} b}$ be the connection 1-forms with respect to tetrad $h_a$ and $\tilde{\w{\omega}}^a_{\phantom{a} b}$ the connection 1-forms with respect to tetrad $\tilde{h}_a$. The tetrads are related by a local Lorentz transformation
	\begin{equation}
		h_a (x) = \Lambda^b_{\phantom{b} a} (x) \tilde{h}_b,
	\end{equation}
	where $\Lambda (x) \in SO(1,3)$, and the connection 1-forms are transformed as \cite{Fecko}
	\begin{equation}
		\w{\omega}^a_{\phantom{a} b} = \left( \Lambda^{-1} \right)^a_{\phantom{a} c} \tilde{\w{\omega}}^c_{\phantom{c} d} \Lambda^d_{\phantom{d} b} + \left( \Lambda^{-1}\right)^a_{\phantom{a} c} d\Lambda^c_{\phantom{c} b}. 
	\end{equation}
	Now suppose that $\tilde{h}_a$ is a tetrad with respect to which $\tilde{\w{\omega}}^a_{\phantom{a} b} = 0$. Then 
	\begin{equation}
		\w{\omega}^a_{\phantom{a} b} = \left( \Lambda^{-1}\right)^a_{\phantom{a} c} d\Lambda^c_{\phantom{c} b},
	\end{equation}
	and we can view $\Lambda (x)$ as a matrix 0-form.
\end{proof}

We have thus found the connection 1-forms of a general metric teleparallel connection. Note that the the theorem and its corollary above may be restated for a manifold $\mathcal{M}$ of any dimension and a metric of any signature.

	\chapter{Teleparallel gravity} \label{chapter4}

\section{Action of teleparallel gravity}

The dynamics of general relativity is governed by the Einstein field equations, which can be obtained by varying action $\mathcal{S} = \rlc{\mathcal{S}} + \mathcal{S}_m$, where $\mathcal{S}_m$ is the matter action and $\rlc{\mathcal{S}}$ the Einstein-Hilbert action 
\begin{equation}
	\rlc{\mathcal{S}} = \frac{1}{2 \kappa} \int_{\mathcal{M}}{\rlc{R} \sqrt{-g} \, d^4 x}.
\end{equation}
We used the standard notation $g = \det (g_{\mu \nu})$, $\kappa = 8 \pi G$, where $G$ is the gravitational constant, and $\rlc{R}$ is the Ricci scalar, also called scalar curvature
\begin{equation}
	\rlc{R} \left( g_{\mu \nu} \right) \equiv \rlc{R}^{\alpha \beta}_{\phantom{\alpha \beta} \alpha \beta} 
\end{equation}
Variation of $\mathcal{S}$ leads to Einstein field equations (without the cosmological constant)
\begin{equation}
	\rlc{G}_{\mu \nu} = \kappa \Theta_{\mu \nu}.
\end{equation}
where $\rlc{G}_{\mu \nu} = \rlc{R}_{\mu \nu} - \frac{1}{2} \rlc{R} g_{\mu \nu}$ is the Einstein tensor, $\rlc{R}_{\mu \nu} = \rlc{R}^{\lambda}_{\phantom{\lambda} \mu \lambda \nu}$, $g_{\mu \nu}$ is the metric and $\Theta_{\mu \nu}$ is the energy-momentum tensor \cite{Carroll}.

To obtain an alternative mathematical description of gravity which is based on teleparallelism, we have to find the action for this theory \cite{Pereira}. The question is: What is our starting point? We know that general relativity is in good agreement with observations and experimental evidence. Therefore, our criterion for the construction of the Lagrangian of teleparallel gravity will be that the theory must agree with the results of general relativity. That means that the dynamical content of the equations of motion derived from the Lagrangian of teleparallel gravity must be the same as that of equations derived from the Einstein-Hilbert action, i.e. the Einstein field equations. 

We start \cite{Pereira, Bahamonde:2021gfp} by writing the the curvature tensor for a teleparallel connection\footnote{The result of this calculation is independent of the choice of frame, therefore we can do it in a holonomic frame $\partial_{\alpha}$.}:
\begin{equation}
	\w{R}^{\alpha}_{\phantom{\alpha} \beta \mu \nu} = \partial_{\mu} \w{\Gamma}^{\alpha}_{\beta \nu} - \partial_{\nu} \w{\Gamma}^{\alpha}_{\beta \mu} + \w{\Gamma}^{\alpha}_{\lambda \mu} \w{\Gamma}^{\lambda}_{\beta \nu} - \w{\Gamma}^{\alpha}_{\lambda \nu} \w{\Gamma}^{\lambda}_{\beta \mu}. 
\end{equation}
By (\ref{con_decomposition}), the (metric) teleparallel connection decomposes as
\begin{equation}
	\w{\Gamma}^{\rho}_{\mu \nu} = \rlc{\Gamma}^{\rho}_{\mu \nu} + \w{K}^{\rho}_{\phantom{\rho} \mu \nu} ,
\end{equation}
leading to 
\begin{equation}
	\begin{aligned}
		\w{R}^{\alpha}_{\phantom{\alpha} \beta \mu \nu} &= \rlc{R}^{\alpha}_{\phantom{\alpha} \beta \mu \nu} + 
		\partial_{\mu} \w{K}^{\alpha}_{\phantom{\alpha} \beta \nu} - \partial_{\nu} \w{K}^{\alpha}_{\phantom{\alpha} \beta \mu} + 
		\w{K}^{\alpha}_{\phantom{\alpha} \lambda \mu} \w{K}^{\lambda}_{\phantom{\lambda} \beta \nu} - \w{K}^{\alpha}_{\phantom{\alpha} \lambda \nu} \w{K}^{\lambda}_{\phantom{\lambda} \beta \mu} + \\
		& + \w{K}^{\alpha}_{\phantom{\alpha} \lambda \mu} \rlc{\Gamma}^{\lambda}_{\beta \nu} - \w{K}^{\alpha}_{\phantom{\alpha} \lambda \nu} \rlc{\Gamma}^{\lambda}_{\beta \mu} + 
		\rlc{\Gamma}^{\alpha}_{\lambda \mu} \w{K}^{\lambda}_{\phantom{\lambda} \beta \nu} - \rlc{\Gamma}^{\alpha}_{\lambda \nu} \w{K}^{\lambda}_{\phantom{\lambda} \beta \mu} \\
		&= \rlc{R}^{\alpha}_{\phantom{\alpha} \beta \mu \nu} + 
		\w{K}^{\alpha}_{\phantom{\alpha} \lambda \mu} \w{K}^{\lambda}_{\phantom{\lambda} \beta \nu} - \w{K}^{\alpha}_{\phantom{\alpha} \lambda \nu} \w{K}^{\lambda}_{\phantom{\lambda} \beta \mu} + \\
		& + \left( \partial_{\mu} \w{K}^{\alpha}_{\phantom{\alpha} \beta \nu} - \w{K}^{\alpha}_{\phantom{\alpha} \lambda \nu} \rlc{\Gamma}^{\lambda}_{\beta \mu} + \rlc{\Gamma}^{\alpha}_{\lambda \mu} \w{K}^{\lambda}_{\phantom{\lambda} \beta \nu} \right) 
		- \left( \partial_{\nu} \w{K}^{\alpha}_{\phantom{\alpha} \beta \mu} - \w{K}^{\alpha}_{\phantom{\alpha} \lambda \mu} \rlc{\Gamma}^{\lambda}_{\beta \nu} + \rlc{\Gamma}^{\alpha}_{\lambda \nu} \w{K}^{\lambda}_{\phantom{\lambda} \beta \mu} \right). 
	\end{aligned}
\end{equation}
Adding a zero term in the form $\w{K}^{\alpha}_{\phantom{\alpha} \beta \lambda} \rlc{\Gamma}^{\lambda}_{\mu \nu} - \w{K}^{\alpha}_{\phantom{\alpha} \beta \lambda} \rlc{\Gamma}^{\lambda}_{\nu \mu}$ allows us to simplify:
\begin{align*}
		\w{R}^{\alpha}_{\phantom{\alpha} \beta \mu \nu} &= \rlc{R}^{\alpha}_{\phantom{\alpha} \beta \mu \nu} + 
		\w{K}^{\alpha}_{\phantom{\alpha} \lambda \mu} \w{K}^{\lambda}_{\phantom{\lambda} \beta \nu} - \w{K}^{\alpha}_{\phantom{\alpha} \lambda \nu} \w{K}^{\lambda}_{\phantom{\lambda} \beta \mu} + \\
		& + \left( \partial_{\mu} \w{K}^{\alpha}_{\phantom{\alpha} \beta \nu} - \w{K}^{\alpha}_{\phantom{\alpha} \lambda \nu} \rlc{\Gamma}^{\lambda}_{\beta \mu} + \w{K}^{\lambda}_{\phantom{\lambda} \beta \nu} \rlc{\Gamma}^{\alpha}_{\lambda \mu} - \w{K}^{\alpha}_{\phantom{\alpha} \beta \lambda} \rlc{\Gamma}^{\lambda}_{\nu \mu} \right) \\
		& - \left( \partial_{\nu} \w{K}^{\alpha}_{\phantom{\alpha} \beta \mu} - \w{K}^{\alpha}_{\phantom{\alpha} \lambda \mu} \rlc{\Gamma}^{\lambda}_{\beta \nu} + \w{K}^{\lambda}_{\phantom{\lambda} \beta \mu} \rlc{\Gamma}^{\alpha}_{\lambda \nu} - \w{K}^{\alpha}_{\phantom{\alpha} \beta \lambda} \rlc{\Gamma}^{\lambda}_{\mu \nu} \right) \\
		&= \rlc{R}^{\alpha}_{\phantom{\alpha} \beta \mu \nu} + 
		\w{K}^{\alpha}_{\phantom{\alpha} \lambda \mu} \w{K}^{\lambda}_{\phantom{\lambda} \beta \nu} - \w{K}^{\alpha}_{\phantom{\alpha} \lambda \nu} \w{K}^{\lambda}_{\phantom{\lambda} \beta \mu} + \rlc{\nabla}_{\mu} \w{K}^{\alpha}_{\phantom{\alpha} \beta \nu} - \rlc{\nabla}_{\nu} \w{K}^{\alpha}_{\phantom{\alpha} \beta \mu} .
\end{align*}

The curvature of teleparallel connection vanishes, which implies that the scalar curvature
\begin{equation}
	0 = \rlc{R} + \w{K}^{\alpha}_{\phantom{\alpha} \lambda \alpha} \w{K}^{\lambda \beta}_{\phantom{\lambda \beta} \beta} - \w{K}^{\alpha}_{\phantom{\alpha} \lambda \beta} \w{K}^{\lambda \beta}_{\phantom{\lambda \beta} \alpha} + \rlc{\nabla}_{\alpha} \w{K}^{\alpha \beta}_{\phantom{\alpha \beta} \beta} - \rlc{\nabla}_{\beta} \w{K}^{\alpha \beta}_{\phantom{\alpha \beta} \alpha} . 
\end{equation}
Defining \textit{\textbf{torsion scalar}} \cite{Bahamonde:2021gfp}
\begin{equation} \label{torsion_scalar}
	\w{T} = \w{K}^{\alpha}_{\phantom{\alpha} \lambda \alpha} \w{K}^{\lambda \beta}_{\phantom{\lambda \beta} \beta} - \w{K}^{\alpha}_{\phantom{\alpha} \lambda \beta} \w{K}^{\lambda \beta}_{\phantom{\lambda \beta} \alpha} 
\end{equation}
and noticing
\begin{equation}
	\begin{aligned}
		\w{K}^{\alpha \beta}_{\phantom{\alpha \beta} \beta} - \w{K}^{\beta \alpha}_{\phantom{\alpha \beta} \beta} &= \dfrac{1}{2} \left( \w{T}^{\beta \alpha}_{\phantom{\beta \alpha} \beta} + \w{T}^{\phantom{\beta} \alpha \beta}_{\beta \phantom{\alpha \beta}} - \w{T}^{\alpha \beta}_{\phantom{\alpha \beta} \beta} - \w{T}^{\alpha \beta}_{\phantom{\beta \alpha} \beta} - \w{T}^{\phantom{\beta} \beta \alpha}_{\beta \phantom{\alpha \beta}} + \w{T}^{\beta \alpha}_{\phantom{\alpha \beta} \beta} \right) \\
		&= - 2 \w{T}^{\phantom{\beta} \beta \alpha}_{\beta \phantom{\alpha \beta}} ,
	\end{aligned}
\end{equation}
we can express the scalar curvature of the RLC connection as
\begin{equation}
	\rlc{R} = - \w{T} - \rlc{\nabla}_{\alpha} \left( \w{K}^{\alpha \beta}_{\phantom{\alpha \beta} \beta} - \w{K}^{\beta \alpha}_{\phantom{\beta \alpha} \beta} \right) = - \w{T} + 2 \rlc{\nabla}_{\alpha} \w{T}^{\phantom{\beta} \beta \alpha}_{\beta} .
\end{equation}
We will adopt the handy notation $h = \lvert \det (h^a_{\phantom{a} \mu}) \rvert$. Looking at (\ref{metric_from_tetrad}) we see that
\begin{equation*}
	\sqrt{-g} = \sqrt{- \det (g_{\mu \nu})} = \sqrt{- \det (\eta_{ab}) \det (h^a_{\phantom{a} \mu}) \det (h^b_{\phantom{a} \nu})} = \lvert \det (h^a_{\phantom{a} \mu}) \rvert = h. 
\end{equation*}
All this should serve as a good motivation for postulating the following:
\begin{definition}
	The \textit{action of teleparallel gravity} is
	\begin{equation}
		\w{\mathcal{S}} = - \dfrac{1}{2 \kappa}\int_{\mathcal{M}}{\w{T} h \, d^4 x}.
	\end{equation}
\end{definition}

To be more precise, we should call this the action of \textit{teleparallel equivalent of general relativity} (TEGR) \cite{Bahamonde:2021gfp}, but some authors (e.g. Aldrovandi and Pereira \cite{Pereira}) actually mean TEGR when they speak of teleparallel gravity.

The equations of motion may be obtained through the variation of this action. It is vitally important to realize that the dynamics of the theory with this action is by construction equivalent to the dynamics of general relativity, in which the dynamics is determined by the Einstein-Hilbert action. The reason is that $\left( 2 \kappa \right)^{-1} \int_{\mathcal{M}}{2 \rlc{\nabla}_{\alpha} \w{T}^{\phantom{\beta} \beta \alpha}_{\beta} \sqrt{-g} \, d^4 x}$ constitutes only a surface term, which is irrelevant when varying the action.
To show this, we use the formula \cite{Carroll}
\begin{equation}
	\rlc{\nabla}_{\mu} V^{\mu} = \dfrac{1}{\sqrt{-g}} \partial_{\mu} \left( \sqrt{-g} V^{\mu} \right),
\end{equation}
which allows us to write
%Indeed, using what we just learned, we can write
\begin{equation}
	\dfrac{1}{2 \kappa}\int_{\mathcal{M}}{2 \rlc{\nabla}_{\alpha} \w{T}^{\phantom{\beta} \beta \alpha}_{\beta \phantom{\alpha \beta}} \sqrt{-g} \, d^4 x} = \dfrac{1}{2 \kappa}\int_{\mathcal{M}}{2 \partial_{\alpha} \left( \sqrt{-g} \w{T}^{\phantom{\beta} \beta \alpha}_{\beta \phantom{\alpha \beta}} \right) d^4 x} .
\end{equation}

Now let us rewrite the torsion scalar in terms of the torsion $\w{T}^{\lambda}_{\alpha \beta}$. Using (\ref{torsion_scalar}) and (\ref{contorsion}) we have
\begin{align*}
		\w{T} &= \w{T}^{\alpha}_{\phantom{\alpha} \alpha \lambda}  \w{T}^{\beta \lambda}_{\phantom{\mu \lambda} \beta} - \dfrac{1}{4} \left( \w{T}^{\phantom{\lambda} \alpha}_{\lambda \phantom{\alpha} \beta} + \w{T}^{\phantom{\beta} \alpha}_{\beta \phantom{\alpha} \lambda} - \w{T}^{\alpha}_{\phantom{\alpha} \lambda \beta} \right)  \left( \w{T}^{\beta \lambda}_{\phantom{\beta \lambda} \alpha} + \w{T}^{\phantom{\alpha} \lambda \beta}_{\alpha \phantom{\lambda \beta}} - \w{T}^{\lambda \beta}_{\phantom{\lambda \beta} \alpha} \right) \\
		&= \w{T}^{\alpha}_{\phantom{\alpha} \alpha \lambda}  \w{T}^{\beta \lambda}_{\phantom{\mu \lambda} \beta} - \dfrac{1}{4} \left( \w{T}_{\lambda \alpha \beta} + \w{T}_{\beta \alpha \lambda} - \w{T}_{\alpha \lambda \beta} \right)  \left( \w{T}^{\beta \lambda \alpha} + \w{T}^{\alpha \lambda \beta} - \w{T}^{\lambda \beta \alpha} \right) \\
		&= \w{T}^{\alpha}_{\phantom{\alpha} \alpha \lambda}  \w{T}^{\beta \lambda}_{\phantom{\mu \lambda} \beta} - \dfrac{1}{4} \w{T}_{\lambda \alpha \beta} \left( \w{T}^{\beta \lambda \alpha} + \w{T}^{\alpha \lambda \beta} - \w{T}^{\lambda \beta \alpha} \right) - \\
		&- \dfrac{1}{4} \w{T}_{\beta \alpha \lambda} \left( \w{T}^{\beta \lambda \alpha} + \w{T}^{\alpha \lambda \beta} - \w{T}^{\lambda \beta \alpha} \right) + \dfrac{1}{4} \w{T}_{\alpha \lambda \beta} \left( \w{T}^{\beta \lambda \alpha} + \w{T}^{\alpha \lambda \beta} - \w{T}^{\lambda \beta \alpha} \right).
\end{align*}
Notice that 
\begin{equation*}
	\w{T}_{\beta \alpha \lambda} \left( \w{T}^{\beta \lambda \alpha} + \w{T}^{\alpha \lambda \beta} - \w{T}^{\lambda \beta \alpha} \right) = \w{T}_{\lambda \alpha \beta} \left( \w{T}^{\lambda \beta \alpha} + \w{T}^{\alpha \beta \lambda} - \w{T}^{\beta \lambda \alpha} \right),
\end{equation*} so the first two terms with the parantheses cancel. What remains is
\begin{align}
	\w{T} &= \w{T}^{\alpha}_{\phantom{\alpha} \alpha \lambda}  \w{T}^{\beta \lambda}_{\phantom{\mu \lambda} \beta} + \dfrac{1}{4} \w{T}_{\alpha \lambda \beta} \left( \w{T}^{\beta \lambda \alpha} + \w{T}^{\alpha \lambda \beta} - \w{T}^{\lambda \beta \alpha} \right) \\
	&= \w{T}^{\alpha}_{\phantom{\alpha} \alpha \lambda}  \w{T}^{\beta \lambda}_{\phantom{\beta \lambda} \beta} + \dfrac{1}{4} \w{T}_{\lambda \alpha \beta} \left( \w{T}^{\beta \alpha \lambda} + \w{T}^{\lambda \alpha \beta} - \w{T}^{\alpha \beta \lambda} \right) \\
	&= \w{T}^{\alpha}_{\phantom{\alpha} \alpha \lambda}  \w{T}^{\beta \lambda}_{\phantom{\beta \lambda} \beta} + \dfrac{1}{4} \w{T}_{\lambda \alpha \beta} \left( \w{T}^{\beta \alpha \lambda} + \w{T}^{\lambda \alpha \beta} + \w{T}^{\beta \alpha \lambda} \right) \\
	&= \dfrac{1}{4} \w{T}_{\lambda \alpha \beta}  \w{T}^{\lambda \alpha \beta} + \dfrac{1}{2} \w{T}_{\lambda \alpha \beta} \w{T}^{\beta \alpha \lambda} - \w{T}^{\alpha}_{\phantom{\alpha} \lambda \alpha}  \w{T}^{\beta \lambda}_{\phantom{\mu \lambda} \beta} \\
	&= \dfrac{1}{4} \w{T}^{\lambda}_{\phantom{\lambda} \alpha \beta} \w{T}_{\lambda}^{\phantom{\lambda} \alpha \beta} + \dfrac{1}{2} \w{T}^{\lambda}_{\phantom{\lambda} \alpha \beta} \w{T}_{\phantom{\beta \alpha} \lambda}^{\beta \alpha} - \w{T}^{\alpha}_{\phantom{\alpha} \lambda \alpha}  \w{T}^{\beta \lambda}_{\phantom{\mu \lambda} \beta}.
\end{align}
The Lagrangian of teleparallel gravity is therefore
\begin{equation}
	\w{\mathcal{L}} = - \dfrac{h}{2 \kappa} \left( \dfrac{1}{4} \w{T}^{\lambda}_{\phantom{\lambda} \alpha \beta} \w{T}_{\lambda}^{\phantom{\lambda} \alpha \beta} + \dfrac{1}{2} \w{T}^{\lambda}_{\phantom{\lambda} \alpha \beta} \w{T}_{\phantom{\beta \alpha} \lambda}^{\beta \alpha} - \w{T}^{\alpha}_{\phantom{\alpha} \lambda \alpha}  \w{T}^{\beta \lambda}_{\phantom{\mu \lambda} \beta} \right).
\end{equation}

\section{Equations of motion}

We are now ready to derive the equations determining the dynamics in teleparallel gravity. Before we can even approach the variation of the action, we need to clarify what the variables of the Lagrangian are. As we have mentioned in the first chapter, the fundamental variables on manifold $\mathcal{M}$ are metric $g$ and connection $\w{\nabla}$, therefore the Lagrangian naturally depends on the tetrad $h^a_{\phantom{a} \mu}$ (from which we construct the metric by (\ref{metric_from_tetrad})) and the spin connection $\w{\omega}^a_{\phantom{a} b \mu}$. Writing the Euler-Lagrange equations
\begin{equation}
	\partial_{\sigma} \dfrac{\partial \w{\mathcal{L}}}{\partial \left( \partial_{\sigma} h^a_{\phantom{a} \rho} \right)} - \dfrac{\partial \w{\mathcal{L}}}{\partial h^a_{\phantom{a} \rho}} = 0
\end{equation}
for the variation of the Lagrangian given in the section above with respect to the tetrad, we arrive at the equations 
\begin{equation} \label{eqs_of_motion}
	\partial_{\sigma} \left( h \w{S}_a^{\phantom{a} \rho \sigma} \right) - \kappa h \w{\mathcal{J}}_a^{\phantom{a} \rho} = \kappa h \Theta_a^{\phantom{a} \rho},
\end{equation}
where $\Theta_a^{\phantom{a} \rho} = -\dfrac{1}{h} \dfrac{\delta \mathcal{L}_s}{\delta h^a_{\phantom{a} \rho}}$ is the matter energy-momentum tensor and $\mathcal{L}_s$ is the Lagrangian of a general source field \cite{Krssak:2018ywd}.
The other two objects appearing in eq. (\ref{eqs_of_motion}) are the so-called \textit{\textbf{superpotential}}
\begin{equation} \label{superpotential}
	\w{S}_a^{\phantom{a} \rho \sigma} = \dfrac{1}{2} \left( \w{T}_{\phantom{\sigma \rho} a}^{\sigma \rho} + \w{T}_{a}^{\phantom{a} \rho \sigma} - \w{T}_{\phantom{\rho \sigma} a}^{\rho \sigma} \right) - h_a^{\phantom{a} \sigma} \w{T}_{\phantom{\theta \rho} \theta}^{\theta \rho} + h_a^{\phantom{a} \rho} \w{T}_{\phantom{\theta \sigma} \theta}^{\theta \sigma}
\end{equation}
and the so-called \textit{\textbf{gauge current}}
\begin{equation}
	\w{\mathcal{J}}_a^{\phantom{a} \rho} = \dfrac{1}{\kappa} h_a^{\phantom{a} \lambda} \w{S}_c^{\phantom{c} \nu \rho} \w{T}^{c}_{\phantom{c} \nu \lambda} + \dfrac{h_a^{\phantom{a} \rho}}{h} \w{\mathcal{L}} + \dfrac{1}{\kappa} \w{\omega}^{c}_{\phantom{c} a \sigma} \w{S}_c^{\phantom{c} \rho \sigma}.
\end{equation}

Now half the work is done, but we have yet to execute the variation with respect to the other variable of the Lagrangian, the spin connection $\accentset{\bullet}{\omega}^a_{\phantom{a} b \mu}$. For this, we will resort to a dirty trick \cite{Krssak:2015lba, Golovnev:2017dox}. In the previous section, we have shown that 
\begin{equation}
	\w{\mathcal{L}} \left( h^a_{\phantom{a} \rho}, \w{\omega}^a_{\phantom{a} b \mu} \right) = \rlc{\mathcal{L}} \left( h^a_{\phantom{a} \rho} \right) - \dfrac{1}{\kappa} \partial_{\alpha} \left( h \w{T}^{\phantom{\beta} \beta \alpha}_{\beta \phantom{\alpha \beta}} \right),
		\end{equation}
where 
\begin{equation}
	\accentset{\circ}{\mathcal{L}} \left( h^a_{\phantom{a} \rho} \right) = \frac{1}{2 \kappa} h \rlc{R} \left( h^a_{\phantom{a} \rho} \right) \equiv \frac{1}{2 \kappa} \sqrt{-g} \rlc{R} \left( g_{\mu \nu} \right)
\end{equation} 
is the Einstein-Hilbert Lagrangian, which, as we know, does not depend on spin connection $\accentset{\bullet}{\omega}^a_{\phantom{a} b \mu}$, but only on the metric, and therefore (by (\ref{metric_from_tetrad})) on the tetrad $h^a_{\phantom{a} \rho}$. Due to the fact that the second term on the right is just a 4-divergence, it will only contribute a surface term in the action. However, surface terms are irrelevant when it comes to variation of the action, so variations of $\accentset{\bullet}{\mathcal{S}}$ with respect to $\accentset{\bullet}{\omega}^a_{\phantom{a} b \mu}$ are identically vanishing, hence they give no equation \cite{Krssak:2018ywd}. 

%The variable in (\ref{eqs_of_motion}) is the tetrad $h^a_{\phantom{a} \mu}$, in analogy to metric tensor $g_{\mu \nu}$ being the variable in the Einstein field equations. \\
%\textbf{\textcolor{red}{(Opraviť? Vymazať?)}}

\section{Alternative form of action} \label{action__dual_rewrite}

Notice that the Lagrangian of teleparallel gravity may be rewritten in a different form \cite{Pereira}. First, we factor out $\w{T}^{\lambda}_{\phantom{\lambda} \mu \nu}$
\begin{equation}
	\begin{aligned}
		\w{\mathcal{L}} &= - \dfrac{h}{2 \kappa} \left( \dfrac{1}{4} \w{T}^{\lambda}_{\phantom{\lambda} \mu \nu} \w{T}_{\lambda}^{\phantom{\lambda} \mu \nu} + \dfrac{1}{2} \w{T}^{\lambda}_{\phantom{\lambda} \mu \nu} \w{T}_{\phantom{\nu \mu} \lambda}^{\nu \mu} - \w{T}^{\lambda}_{\phantom{\lambda} \mu \lambda} \w{T}_{\phantom{\nu \mu} \nu}^{\nu \mu} \right) \\
		&= - \dfrac{h}{2 \kappa} \w{T}^{\lambda}_{\phantom{\lambda} \mu \nu} \left( \dfrac{1}{4} \w{T}_{\lambda}^{\phantom{\lambda} \mu \nu} + \dfrac{1}{2} \w{T}_{\phantom{\nu \mu} \lambda}^{\nu \mu} - \delta^{\nu}_{\lambda} \w{T}_{\phantom{\rho \mu} \rho}^{\rho \mu} \right) \\
		&= - \dfrac{h}{2 \kappa} \eta_{ab} h^b_{\phantom{b} \lambda} \w{T}^a_{\phantom{a} \mu \nu} \left( \dfrac{1}{4} \w{T}^{\lambda \mu \nu} + \dfrac{1}{2} \w{T}^{\nu \mu \lambda} - g^{\nu \lambda} \w{T}_{\phantom{\rho \mu} \rho}^{\rho \mu} \right) \\
		& \equiv - \dfrac{h}{2 \kappa} \eta_{ab} h^b_{\phantom{b} \lambda} \w{T}^a_{\phantom{a} \mu \nu} \w{H}^{\mu \nu \lambda}.
	\end{aligned}
\end{equation}
Next, we use Kronecker deltas to change the indices on $\w{H}$ and use the fact that $\w{T}^{\lambda}_{\phantom{\lambda} \mu \nu} = \w{T}^{\lambda}_{\phantom{\lambda} \left[ \mu \nu \right] }$
\begin{equation}
	\begin{aligned}
		\w{\mathcal{L}} &= - \dfrac{h}{2 \kappa} \eta_{ab} h^b_{\phantom{b} \lambda} \w{T}^a_{\phantom{a} \mu \nu} \delta^{\mu}_{\alpha} \delta^{\nu}_{\beta} \w{H}^{\alpha \beta \lambda} \\
		&= - \dfrac{h}{2 \kappa} \eta_{ab} h^b_{\phantom{b} \lambda} \w{T}^a_{\phantom{a} \mu \nu} \delta^{\left[\mu \right. }_{\alpha} \delta^{ \left. \nu \right] }_{\beta} \w{H}^{\alpha \beta \lambda}. 
	\end{aligned}
\end{equation}
Using the identity $\varepsilon^{\mu \nu \gamma \sigma} \varepsilon_{\alpha \beta \gamma \sigma} = 4 \delta^{\left[\mu \right. }_{\alpha} \delta^{ \left. \nu \right] }_{\beta}$ \cite{Fecko} this further leads to
\begin{equation}
	\begin{aligned}
		\w{\mathcal{L}} &= - \dfrac{h}{2 \kappa} \eta_{ab} h^b_{\phantom{b} \lambda} \w{T}^a_{\phantom{a} \mu \nu} \dfrac{1}{4} \varepsilon^{\mu \nu \gamma \sigma} \varepsilon_{\alpha \beta \gamma \sigma} \w{H}^{\alpha \beta \lambda} \\
		&= - \dfrac{h}{2 \kappa} 6 \eta_{ab} h^b_{\phantom{b} \lambda} \w{T}^a_{\phantom{a} \mu \nu} \varepsilon_{\gamma \sigma \alpha \beta}  \w{H}^{\alpha \beta \lambda} \dfrac{1}{4!} \varepsilon^{\mu \nu \gamma \sigma} \\
		&= - \dfrac{h}{2 \kappa} \eta_{ab} h^b_{\phantom{b} \lambda} 6 \w{T}^a_{\phantom{a} \left[ \mu \nu \right. } \varepsilon_{\left. \gamma \sigma \right]  \alpha \beta} \w{H}^{\alpha \beta \lambda} \dfrac{1}{4!} \varepsilon^{\mu \nu \gamma \sigma}.
	\end{aligned}
\end{equation}
Interestingly, $\w{T}^a_{\phantom{a} \left[ \mu \nu \right. } \varepsilon_{\left. \gamma \sigma \right]  \alpha \beta} \w{H}^{\alpha \beta \lambda}$ would almost look like the exterior product of the torsion forms $\w{T}^a$ with some other form, were it not for the fact that at this point we do not know if $\varepsilon_{\gamma \sigma \alpha \beta} \w{H}^{\alpha \beta \lambda}$ is a form. In order to be more explicit, let us rewrite $6 = \dfrac{(2+2)!}{2!2!}$ and define 
\begin{equation} \label{ostar_T}
	\left( \ostar \w{T} \right)^b_{\gamma \sigma}  = h^b_{\phantom{b} \lambda} h \varepsilon_{\gamma \sigma \alpha \beta} \left( \dfrac{1}{4} \w{T}^{\lambda \alpha \beta} + \dfrac{1}{2} \w{T}^{\beta \alpha \lambda} - g^{\beta \lambda} \w{T}_{\phantom{\rho \alpha} \rho}^{\rho \alpha} \right).
\end{equation}
Then
\begin{equation}
	\begin{aligned}
		\w{\mathcal{L}} &= - \dfrac{1}{2 \kappa} \eta_{ab} \dfrac{(2+2)!}{2!2!} \w{T}^a_{\phantom{a} \left[ \mu \nu \right. } \varepsilon_{\left. \gamma \sigma \right]  \alpha \beta} h^b_{\phantom{b} \lambda} h \w{H}^{\alpha \beta \lambda} \dfrac{1}{4!} \varepsilon^{\mu \nu \gamma \sigma} \\
		& \equiv - \frac{1}{2 \kappa} \eta_{ab} \left( \w{T}^a \wedge \ostar \w{T}^b \right)_{\mu \nu \gamma \sigma } \dfrac{1}{4!} \varepsilon^{\mu \nu \gamma \sigma}.
	\end{aligned}
\end{equation}
We see that $\w{H}$ is a tensor, while $h \varepsilon_{\gamma \sigma \alpha \beta} = \omega_{\gamma \sigma \alpha \beta}$ is the (metric-compatible) volume form (see Definition \ref{vol_form}), and therefore $\ostar \w{T}^b$ is a contraction of two tensors, thus a tensor. Moreover, this tensor is antisymmetric (in the lower indices), so it is also a 2-form. Hence the use of the exterior product $\wedge$ is justified. 

As $\ostar \w{T}^b$ are 2-forms and the underlying manifold $\mathcal{M}$ is 4-dimensional, it must be that for any $a,b$ 
\begin{align}
	\left( \w{T}^a \wedge \ostar \w{T}^b \right) &\propto \omega, \\
	 \left( \w{T}^a \wedge \ostar \w{T}^b \right)_{\mu \nu \gamma \sigma} &\equiv \left( \w{T}^a, \w{T}^b \right)_g \omega_{\mu \nu \gamma \sigma}. 
\end{align}
Recalling that\footnote{$\sgn g$ is the sign of $\det \left( g_{\mu \nu}\right) $.}  $\omega^{\mu \nu \gamma \sigma} = \dfrac{\sgn g}{h} \varepsilon^{\mu \nu \gamma \sigma} = - \dfrac{1}{h} \varepsilon^{\mu \nu \gamma \sigma}$, we can now rewrite the action
\begin{equation}
	\begin{aligned}
		\w{\mathcal{S}} &= \int_{\mathcal{M}}{\w{\mathcal{L}} \, d^4 x} \\
		&= \int_{\mathcal{M}}{\frac{1}{2 \kappa} \eta_{ab} \left( \w{T}^a, \w{T}^b \right)_g \omega_{\mu \nu \gamma \sigma}  \dfrac{1}{4!} \omega^{\mu \nu \gamma \sigma} h \, d^4 x} \\
		&= \int_{\mathcal{M}}{\frac{1}{2 \kappa} \eta_{ab} \left( \w{T}^a, \w{T}^b \right)_g \omega} \\
		&= \int_{\mathcal{M}}{\frac{1}{2 \kappa} \eta_{ab} \left( \w{T}^a \wedge \ostar \accentset{\bullet}{T}^b \right)} \label{action_gauge_similar}.
	\end{aligned}
\end{equation}
The action of teleparallel gravity expressed in this form resembles the Yang-Mills action for a gauge theory for which torsion $\w{T}^a$ is the field strength \cite{Pereira}.

\chapter{New duality operator} \label{chapter5}

\section{Duality}

In the last section, we have introduced a new symbol $\ostar$, but it remains unclear what this symbol represents, and whether it represents a well-defined operator. 
The similarity of (\ref{action_gauge_similar}) with the Yang-Mills action for a gauge theory, in which $\ostar$ is replaced by $\star$, leads us to turn our attention to the Hodge operator.
Here we give a brief review of the Hodge star operator and its most important properties \cite{Fecko}.
\begin{definition}
	A \textit{volume form} $\omega$ on a $n$-dimensional vector space $L$ is any nonzero $n$-form in $L$. \\
	Let $L$ be, in addition, an oriented vector space with a metric tensor $g$, and let $h_a$ be its right-handed orthonormal basis. Then the \textit{metric-compatible volume form} is
	\begin{equation}
		\omega = h^1 \wedge \dots \wedge h^n.
	\end{equation}
\end{definition}
\begin{definition} \label{Hodge}
	Let $L$ be a $n$-dimensional vector space with metric $g$ and orientation $o$ and let $\omega$ be the metric-compatible volume form. \textit{Hodge star operator} is a map $\star \colon \Lambda^p L^* \longrightarrow \Lambda^{n-p} L^*$, $\alpha \mapsto \star \alpha$ defined by
	\begin{equation}
		(\star \alpha)_{a \dots b} \coloneqq \dfrac{1}{p!} \alpha^{c \dots d} \omega_{a \dots b c \dots d},
	\end{equation}
	where $\alpha^{c \dots d} \equiv g^{cr} \dots g^{ds} \alpha_{r \dots s}$.
\end{definition}

We see immediately that $\star$ is a linear operator on $\Lambda^p L^*$. Perhaps the most important property of the Hodge operator is
\begin{equation}
	\star \star = \sgn g (-1)^{p(n+1)},
\end{equation} 
i.e. it is a duality operator.

\begin{definition} \label{vol_form}
	A \textit{volume form} $\omega$ on an $n$-dimensional manifold $\mathcal{M}$ is any $n$-form which is nonzero everywhere on $\mathcal{M}$. \\
	Let $\mathcal{M}$ be, in addition, an orientable\footnote{From now on, we will always assume this is the case.} manifold with a metric tensor $g$, and let $h_a$ be a right-handed tetrad. Then the \textit{metric-compatible volume form} is
	\begin{equation}
		\omega = h^1 \wedge \dots \wedge h^n.
	\end{equation}
\end{definition}

Interestingly, it turns out that $\left( \ostar \ostar \w{T}\right) = - \w{T}$. To show this, we first have to specify what we mean by applying $\ostar$ on $\ostar \w{T}$, because so far we have only defined the action of $\ostar$ on $\w{T}$. This step, however, is rather trivial. Looking more closely at the definition of $\ostar \w{T}$ in (\ref{ostar_T}) it is obvious that, algebraically, there is no difference between $\w{T}$ and $\ostar \w{T}$ -- both are antisymmetric in the lower indices, and $\w{T}$ is not special in any way. Therefore, for the second application of $\ostar$ we use the same formula:
\begin{equation} \label{ostar_T}
	\left( \ostar \ostar \w{T} \right)^b_{\gamma \sigma}  = h^b_{\phantom{b} \lambda} h \varepsilon_{\gamma \sigma \alpha \beta} \left[ \dfrac{1}{4} \left( \ostar \w{T}\right)^{\lambda \alpha \beta} + \dfrac{1}{2} \left( \ostar \w{T}\right)^{\beta \alpha \lambda} - g^{\beta \lambda} \left( \ostar \w{T}\right)_{\phantom{\rho \alpha} \rho}^{\rho \alpha} \right].
\end{equation}
We will skip the tedious calculation, but we will approach the problem from the opposite side, as this will enable us to discover something more interesting along the way and the result will make it obvious that the statement above is indeed true. We attempt to construct a duality operator using all possible contractions of $\w{T}$ with the volume form $\omega$, following the suggestion of Lucas and Pereira \cite{Lucas:2008gs}. 

\section[Dual torsion]{Dual torsion $\ostar \w{T}$} \label{dual_torsion}
Now, let us be more specific. The $\star$-operator in Definition \ref{Hodge} includes all possible contractions of $\alpha_{c \dots d}$ with $\omega_{a \dots b c \dots d}$. However, torsion is a tensor with one upper index and two lower indices in which it is antisymmetric. Therefore, there exist possible contractions which would be omitted if we used Definition \ref{Hodge}: 
\begin{equation}
	\left( \star \w{T} \right)^{a}_{\alpha \beta} = \dfrac{1}{2} \omega_{\alpha \beta \mu \nu} \w{T}^{a \mu \nu}.
\end{equation}
If we choose to view torsion as a 2-form, we have to keep in mind that it is a vector-valued form and $\w{T}^a$ are the component 2-forms. Thus, the index $a$ is not a spacetime index, so the Hodge operator \enquote{ignores it}.

Before trying to find all the possible contractions, let us first formulate some rules, which will be also useful later. 

\subsection{Rules for possible contractions} \label{contraction_rules}

We have already mentioned that the connection 1-form $\w{\omega}$, as well as the curvature 2-form $\Omega$, is a Lie algebra-valued form. They are matrix forms, the matrices being $4 \times 4$, and the component forms are $\w{\omega}^a_{\phantom{a}b}$ and $\Omega^a_{\phantom{a} b}$. Torsion may be viewed as a 2-form $\w{T}$ with values in the Lie algebra of the group $\mathbb{T}_4$ of translations in 4 dimensions -- the component forms\footnote{Although we will only calculate $\ostar T$ and $\ostar{\omega}$ for the case of teleparallel connection, from the treatment it will be easily seen that the calculation is valid for a general metric-compatible connection.} are $\w{T}^a$. We will call indices $a,b$ the \textbf{\textit{algebraic indices}} of these forms, as they pertain to the algebra in which the forms take values. $\mathbb{R}$-valued forms are tensors, so the forms $\w{\omega}^a_{\phantom{a}b}$, $\Omega^a_{\phantom{a} b}$ and $\w{T}^a$ will have additional indices, which we will call \textbf{\textit{tensorial indices}}, e.g. in $\Omega^a_{\phantom{a} b \mu \nu}$ the algebraic indices are $a,b$, and the tensorial indices $\mu, \nu$ can be changed to $c,d$ using tetrad $h_c^{\phantom{c} \mu}$.

We will work in a tetrad basis, and all the components of forms will be expressed with respect to the same tetrad $h_a$. Suppose $\alpha$ is a $p$-form with $q$ algebraic indices. When determining all possible contractions of $\alpha$ with the volume form $\varepsilon_{abcd}$, we will abide by the following rules:
	\begin{enumerate}
		\item \underline{\textbf{\textit{$\left( \ostar \alpha \right)$ has $q$ algebraic indices and $(n-p)$ tensorial indices.}}} \\
			This requirement ensures that $\ostar$ is similar to $\star$, which behaves this way when applied on $p$-forms with $q$ algebraic indices (these algebraic indices have nothing to do with the manifold or coordinates).
		\item \underline{\textbf{\textit{The tensorial indices of $\left( \ostar \alpha \right)$ must come from the volume form $\varepsilon_{abcd}$.}}} \\
			The motivation behind this rule also lies in our desire for $\ostar$ to be similar to $\star$. This rule ensures that $\ostar \alpha$ will be a $(n-p)$-form.
		\item \underline{\textbf{\textit{The behavior of $\left( \ostar \alpha \right)$ in the $q$ algebraic indices is the same as the}}} \\
			\underline{\textbf{\textit{behavior of $\alpha$ in its $q$ algebraic indices.}}} \\
			For example, for the connection $\w{\omega}^a_{\phantom{a} b c}$ it holds that $\w{\omega}_{ab c} = - \w{\omega}_{ba c}$ (it is pseudo-antisymmetric in $a,b$), so $\left( \ostar \w{\omega} \right)^a_{\phantom{a} b cde}$ must also be pseudoantisymmetric in $a,b$.
	\end{enumerate}
\begin{comment}
		\textcolor{red}{
	Ukázať, čo keby napríklad na forme $\Omega^{a}_{\phantom{a}b \tilde{c} \tilde{d}}$ sú indexy $a,b$ voči tetráde $h_a$ a indexy $\tilde{c}, \tilde{d}$ sú voči tetráde $\tilde{h}_a$ - že tam je potom problém, že pri kontrakcii s formou objemu $\varepsilon_{abcd}$ nemá zmysel kontrakcia indexu $c$ s indexom $\tilde{c}$ (aj keď by sa na prvý pohľad mohlo zdať, že "to je jedno - $c$ ako $c$... Jedno beží od 0 po 3, aj druhé beží od 0 po 3."). A preto nedáva zmysel ten pohľad, že $\Omega^a_{\phantom{a}b}$ sú proste komponenty matice - lebo potom indexy $a,b$ nevieme zmysluplne kontrahovať. Indexy $a,b$ na forme objemu sa transformujú cez Lorentzovskú maticu $\Lambda$, ale indexy $a,b$ na $\Omega^a_{\phantom{a}b}$ sa netransformujú - jedine že by "vedeli", že sa vzťahujú na tú tetrádu, voči ktorej formy $\Omega^a_{\phantom{a}b}$ sú formy krivosti. Inak nemá zmysel "transformovať" maticové indexy...}
\end{comment}

\subsection{Duality condition}

All the possible contractions of torsion with the volume form according to the rules stated above are\footnote{The result of this calculation is independent of the choice of frame, therefore we can work in a holonomic frame $\partial_{\alpha}$. }:
$\omega_{\alpha \beta \mu \nu} \w{T}^{\gamma \mu \nu}, \omega_{\alpha \beta \mu \nu} \w{T}^{\mu \gamma \nu}$ and $\omega_{\alpha \beta \mu \nu} \w{T}^{\lambda \mu}_{\phantom{\lambda \mu} \lambda} g^{\nu \gamma}$.

We try to construct an operator $\ostar$ using these contractions:
\begin{equation}
	\left( \ostar \w{T} \right)^{\gamma}_{\phantom{\gamma} \alpha \beta} = \omega_{\alpha \beta \mu \nu} \left( a \w{T}^{\gamma \mu \nu} + b \w{T}^{\mu \gamma \nu} + c \w{T}^{\lambda \mu}_{\phantom{\lambda \mu} \lambda} g^{\nu \gamma} \right), 
\end{equation}
and we will require that $\left( \ostar \ostar \w{T} \right) = - \w{T}$. In Appendix \ref{append:torsion} we show that this condition leads us to two equations
\begin{align}
	4a^2 + 2ab + 2ac - 2bc &= + 1,    \\
	b^2 + 2ab - 2ac + 2bc &= 0.  \label{second}
\end{align}
Taking the sum of these equations, we obtain the following:
\begin{equation}
	+ 1 = 4a^2 + 4ab + b^2 = (2a + b)^2 ,
\end{equation}
hence
\begin{equation} \label{b_parameter}
	b = - 2a \pm 1 .
\end{equation}
Combining the last equation with eq. (\ref{second}) yields
\begin{equation}
	c = \pm \dfrac{b}{2\left( a - b \right) } = \dfrac{\mp 2a + 1}{6a \mp 2}.
\end{equation}
We see that there are infinitely many possible constructions of $\ostar$-operator. Choosing $a = \frac{1}{2}$ and using the upper sign in (\ref{b_parameter}) leads to $b = c = 0$, so that $\ostar = \star$ acts exactly the same way as the Hodge operator. Choosing $a = \frac{1}{4}$ and again using the upper sign in (\ref{b_parameter}) leads to $b = \frac{1}{2}, c = -1$, which is exactly\footnote{It is easy to see, that $\varepsilon_{\gamma \sigma \alpha \beta} \w{T}^{\beta \alpha \lambda} = - \varepsilon_{\gamma \sigma \beta \alpha} \w{T}^{\beta \alpha \lambda} = - \varepsilon_{\gamma \sigma \alpha \beta} \w{T}^{\alpha \beta \lambda} = \varepsilon_{\gamma \sigma \alpha \beta} \w{T}^{\alpha \lambda \beta}$.} the operator constructed in a different way in the previous section!

\subsection{Additional rule} \label{add_rule}

To make the construction of $\ostar \w{T}$ unique, we might apply an additional assumption. Looking at Definition \ref{Hodge}, we can see that the summation implied by the repeated indices in $(\star \alpha)_{a \dots b} =\frac{1}{p!} \alpha^{c \dots d} \omega_{a \dots b c \dots d}$ contains only one independent term, because in each term in the sum the indices $c,...,d$ may be rearranged in ascending order. For example, for a 3-form in 5D, 
\begin{alignat*}{2}
	(\star \alpha)_{13} &=\dfrac{1}{3!} \alpha^{cde} \omega_{13 cde} &&\\
	&= \dfrac{1}{3!} \left( \alpha^{245} \omega_{13 245} \right. &&+ \left. \alpha^{425} \omega_{13 425} + \alpha^{254} \omega_{13 254} \right. \\
	& &&\left. + \alpha^{452} \omega_{13 452} + \alpha^{524} \omega_{13 524} + \alpha^{542} \omega_{13 542} \right) \\
	&= \alpha^{245} \omega_{13 245}. &&
\end{alignat*}
Therefore, the factor $\frac{1}{p!}$ prevents double-counting.

If we apply a similar rule for $\ostar$, that is, we include each independent term the same number of times, we have to require $b = 2a$, because $\w{T}^{\gamma \mu \nu}$ is antisymmetric in $\mu, \nu$, while in general there is no relationship between $\w{T}^{\mu \gamma \nu}$ and $\w{T}^{\nu \gamma \mu}$. This rule does not tell us anything about $c$ and its relationship with $a$, because $\w{T}^{\lambda \mu}_{\phantom{\lambda \mu} \lambda} g^{\nu \gamma}$ is a different tensor from $\w{T}^{\gamma \mu \nu}$. Using this additional rule, we obtain 
\begin{align}
	a = \pm \frac{1}{4}, && b = \pm \frac{1}{2}, && c = \mp 1,
\end{align}
hence $\ostar \w{T}$ is now given uniquely, up to a sign. Notice that the solution with the upper sign corresponds to (\ref{ostar_T}), which is what we were after. 

\section[Dual connection]{Dual connection $\ostar \w{\omega}$} \label{dual_connection}

Now we would like to see whether there are any other objects on which we can apply our would-be operator $\ostar$. Until now, we have only found one such object, the torsion (or any 2-form with 1 algebraic index, i.e. vector-valued 2-form), which means that $\ostar$ does not yet deserve the noble title \enquote{operator}. We can start by looking at the case of connection $\w{\omega}$.

\subsection{Possible contractions and parametrizations}

The whole procedure will be very similar to our treatment of the case of torsion from the previous section. Connection $\w{\omega}^a_{\phantom{a} b \mu}$ has 3 indices, and $\w{\omega}_{a b \mu} = - \w{\omega}_{b a \mu}$. The result should be a 3-form, and the three indices must be inherited from the volume form, which has therefore only 1 index left for contractions. Hence, the only possible terms are $\w{\omega}^a_{\phantom{a} bc} \varepsilon^b_{\phantom{b} def}$, $\w{\omega}^a_{\phantom{a} bc} \varepsilon^c_{\phantom{c} def}$ and $\w{\omega}^{af}_{\phantom{af} f} \varepsilon_{bcde}$:
\begin{equation} \label{ostar_omega}
	\left( \ostar \w{\omega} \right)^a_{\phantom{a} bcde} \equiv a \left( \w{\omega}^a_{\phantom{a} fb} \varepsilon^f_{\phantom{f} cde} - \w{\omega}^{\phantom{bf} a}_{bf} \varepsilon^f_{\phantom{f} cde} \right) + b \w{\omega}^a_{\phantom{a} bf} \varepsilon^f_{\phantom{f} cde} + c \left( \w{\omega}^{af}_{\phantom{af} f} \varepsilon_{bcde} - \w{\omega}^{\phantom{b} f}_{b \phantom{f} f} \varepsilon^a_{\phantom{a} cde} \right).
\end{equation}
The terms in brackets are antisymmetrised so that pseudo-antisymmetry of $\w{\omega}^a_{\phantom{a} b \mu}$ is passed on to $\left( \ostar \w{\omega} \right)^a_{\phantom{a} b \mu \nu \lambda}$. 

Now, as we wish to apply $\ostar$ on $\ostar \w{\omega}$, and $\ostar \w{\omega} \equiv \w{\Xi}$ is a 3-form, we have to repeat the same exercise we did for $\w{\omega}^a_{\phantom{a} b \mu}$, but now for $\w{\Xi}^a_{\phantom{a} b \mu \nu \lambda}$. Three indices of the volume form may be contracted. The only possible terms are $\w{\Xi}^a_{\phantom{a} bcde} \varepsilon^{cde}_{\phantom{cde} f}$, $\w{\Xi}^a_{\phantom{a} bcde} \varepsilon^{bcd}_{\phantom{bcd} f}$, $\w{\Xi}^a_{\phantom{a} bcde} \varepsilon^{\phantom{a} bc}_{a \phantom{bc} f}$, then  $\w{\Xi}^{ae}_{\phantom{ae} cde} \varepsilon^{cd}_{\phantom{cd} fg}$, $\w{\Xi}^{a}_{\phantom{a} bcae} \varepsilon^{bc}_{\phantom{bc} fg}$, $\w{\Xi}^{a \phantom{b} c}_{\phantom{a} b \phantom{c} ce} \varepsilon^{\phantom{a} b}_{a \phantom{b} fg}$ (however, the last of these is zero), and finally $\w{\Xi}^{de}_{\phantom{de} cde} \varepsilon^{c}_{\phantom{c} fgh}$, $\w{\Xi}^{c \phantom{bc} d}_{\phantom{c} bc \phantom{d} d} \varepsilon^{b}_{\phantom{b} fgh}$ (but the last one is zero):

\begin{equation}
	\begin{aligned}
		\left( \ostar \w{\Xi} \right)^a_{\phantom{a} bf} &\equiv \alpha \w{\Xi}^a_{\phantom{a} bcde} \varepsilon^{cde}_{\phantom{cde} f} + \beta \left( \w{\Xi}^a_{\phantom{a} cdeb} - \w{\Xi}^{\phantom{bcde} a}_{bcde} \right) \varepsilon^{cde}_{\phantom{cde} f} + \gamma \w{\Xi}^{c \phantom{deb} a}_{\phantom{c} deb} \varepsilon^{\phantom{c} de}_{c \phantom{de} f} + \\
		&+ \mu \left( \w{\Xi}^{ae}_{\phantom{ae} cde} \varepsilon^{cd}_{\phantom{cd} bf} - \w{\Xi}^{\phantom{b} e}_{b \phantom{e} cde} \varepsilon^{cda}_{\phantom{cda} f} \right) + \nu \left( \w{\Xi}^{e \phantom{cde} a}_{\phantom{e} cde} \varepsilon^{cd}_{\phantom{cd} bf} - \w{\Xi}^{e}_{\phantom{e} cdeb} \varepsilon^{cda}_{\phantom{cda} f} \right) + \\
		&+ \lambda \w{\Xi}^{de}_{\phantom{de} cde} \varepsilon^{ca}_{\phantom{ca} bf}.
	\end{aligned}
\end{equation}

\subsection{Duality condition}

Now, we have too many constants -- nine, in fact -- and only two conditions:
\begin{align}
	\left( \ostar \ostar \w{\omega} \right) &= \w{\omega}, \\
	\left( \ostar \ostar \w{\Xi} \right) &= \w{\Xi}.
\end{align}
We cannot expect that these conditions determine the constants \textit{uniquely}. As the calculations would be tedious and prone to error if done by hand, we resort to using \textit{xTensor}, a \textit{Mathematica} package from a suite of packages called \textit{xAct}. 

In Appendix \ref{append:connection}, we obtain 10 equations from these conditions, but then we discover that only 6 are independent. We divide them into two triples of equations. The first triple contains no $c$-term
% $Ca=\alpha$
% $Cb=\beta$
% $Cc=\gamma$
% $Cdd=\mu$
% $Ce=\nu$
% $Cf=\lambda$
% $OA=a$
% $OB=b$
% $OC=c$
\begin{align}
	1 &= \left( 2a + b \right) \left( 6\alpha + 4\beta - 2\gamma \right), \label{qqq} \\
	0 &= a \left( 6 \alpha + 4 \beta - 2 \gamma - 2 \mu + \nu \right) + b \left( -2 \mu + \nu \right), \label{xxx} \\
	0 &= a \left( 4 \mu -2 \nu +2 \lambda \right) -b \lambda \label{zzz},
\end{align}
and the other triple is
\begin{align}
	0 &= 4a \beta + 2b \beta + c \left( -6 \alpha +10 \beta +4 \gamma \right), \\
	0 &= a \left( 6 \alpha + 2\beta - 2\mu \right) -2b \mu + c \left( - 2\mu - 2\nu \right), \\
	0 &= 4a \gamma +2b \gamma + c \left( 8 \beta +8 \gamma \right),
\end{align}
The very first equation tells us that we can not have $a=b=0$ (which means that the other two equations must be linearly dependent), and that $b \neq -2a$. For what follows, it is useful to notice that also $b \neq -a$: if $b =-a$ were true, then eq. (\ref{xxx}) would reduce to $0 = a \left( 6 \alpha + 4 \beta - 2 \gamma \right)$, which leads to either $a=b=0$, in contradiction with (\ref{qqq}), or $6 \alpha + 4 \beta - 2 \gamma = 0$, in contradiction with (\ref{qqq}).

Now, let us express everything in terms of $a, b$  and $c$. First, we take the first three independent equations and simplify them by denoting $A \equiv 6 \alpha + 4 \beta - 2 \gamma$ and $B \equiv -2\mu +\nu$:
\begin{align}
	1 &= \left( 2a + b \right) A,  \\
	0 &= a \left( A + B \right) + bB, \label{rrr} \\
	0 &= a \left( -2B +2 \lambda \right) -b \lambda. \label{www}
\end{align}
Multiplying (\ref{rrr}) by $\left( 2a+b \right)$ leads to 
\begin{equation}
	0 = a + B\left( a+b\right) \left( 2a+b \right),
\end{equation}
and multiplying (\ref{www}) by $\left( a+b\right) \left( 2a+b \right)$ gives
\begin{equation}
	0 = 2a^2 + \left( 2a - b\right) \left( a+b\right) \left( 2a+b \right) \lambda.
\end{equation}
Having $b=2a$ would lead to $a=0$, but in that case the solution would actually have to be $a=b=0$, which is impossible. Therefore, we have
\begin{align}
	A &= \dfrac{1}{ 2a + b }, \label{eq:A} \\
	B &= - \dfrac{a}{\left( a+b\right) \left( 2a+b \right)}, \\
	\lambda &= - \dfrac{2a^2}{\left( 2a - b\right) \left( a+b\right) \left( 2a+b \right)}.
\end{align}
Now we can attempt to express $\alpha, \beta, \gamma, \mu, \nu$ in terms of $a,b,c$ using equations
\begin{align}
	0 &= -6c \alpha + \left( 4a + 2b + 10c\right)\beta + 4c \gamma, \label{eq:1} \\
	0 &= 8c\beta + \left(  4a + 2b + 8c \right) \gamma, \label{eq:2} \\
	A &= 6 \alpha + 4 \beta - 2 \gamma, \label{eq:AA} \\
	3a\alpha + a\beta &= \left( a + b + c \right)\mu + c \nu, \\
	B &= -2\mu +\nu. \label{eq_mu_nu}
\end{align}
The first three equations do not contain $\mu, \nu$, so we can perform Gauss elimination to express $\alpha, \beta$ and $\gamma$, and then express $\mu, \nu$ from the last two equations\footnote{If the denominator in the expression for $\gamma$ is zero, the system has no solution.}:
\begin{align}
	\alpha &= \frac{1}{6\left( 2a + b \right)} - \dfrac{2}{3} \beta + \dfrac{1}{3} \gamma, \label{alpha} \\
	\beta &= - \dfrac{2a + b + 4c}{4c} \gamma, \label{beta} \\
	\gamma &= - \dfrac{2c}{(2a + b)^2 + 24c^2 + 11c (2a + b)} \cdot \dfrac{c}{2a + b},  \label{gamma}\\
	\mu &= \dfrac{a \left( 3\alpha + \beta \right) - Bc}{a + b + 3c}, \\
	\nu &= B + 2\dfrac{a \left( 3\alpha + \beta \right) - Bc}{a + b + 3c}.
\end{align}
This is a solution only in case $c\neq 0$ and $a+b+3c \neq 0$. 

In case $c \neq 0$ and $a+b+3c = 0$ we have $\alpha$, $\beta$ and $\gamma$ given by (\ref{alpha}), (\ref{beta}) and (\ref{gamma}). In this case, we get a condition
\begin{equation}
	a \left( \dfrac{1}{2\left( 2a + b \right) } - \beta + \gamma \right) = - \dfrac{ac}{\left( a + b \right) \left( 2a + b \right) },
\end{equation}
which means that we can not write the solution in terms of $a,b,c$, but we have to replace one of them by $\mu$ or $\nu$, and then express the other one from equation (\ref{eq_mu_nu}).

In case $c = 0$ the solution is 
\begin{align}
	\alpha &= \frac{1}{6\left( 2a + b \right)}, \\
	\beta &= 0, \\
	\gamma &= 0, \\
	\mu &= \dfrac{a}{2\left( a + b \right) \left( 2a + b \right)}, \\
	\nu &= 0.
\end{align}

\subsection{Using the additional rule}

The additional rule discussed in section \ref{add_rule} would give us
\begin{align}
	\beta &= (-1)^{\xi} 3 \alpha, \\
	\gamma &= 3 \alpha, \label{eq:gamma} \\
	\nu &= (-1)^{\zeta} 2 \mu,
\end{align}
where $\xi, \zeta \in \{0,1\}$. However, looking at eq. (\ref{eq:AA}) we see that we can not have any of $\alpha, \beta, \gamma$ equal to $0$, as in that case $\alpha = \beta = \gamma = 0$, while $A \neq 0$, which contradicts (\ref{eq:AA}). According to (\ref{eq:2}), $\beta \neq - \gamma$, otherwise we have
\begin{equation}
	0 = -8c\gamma + \left(  4a + 2b + 8c \right) \gamma = 2 \left(  2a + b \right) \gamma ,
\end{equation}
which is impossible. Thus $\beta = \gamma = 3 \alpha$. Then, subtracting (\ref{eq:2}) from (\ref{eq:1}) gives
\begin{equation}
	0 = -6c \alpha + 2c \beta - 4c \beta = -12 c \alpha,
\end{equation}
leading to $c = 0$. This reduces eq. (\ref{eq:2}) to 
\begin{equation}
	0 = 2 \left(  2a + b \right) \gamma,
\end{equation}
again, an impossibility. Therefore, imposing the additional rule leads to a system of equations which has no solution.

\subsection{Transformation of dual connection}

We have to address one more issue: Is this construction of the $\ostar$-operator as an operator acting on connection forms canonical? That is, we are interested in its behavior under local Lorentz transformations.

The problem is that not all contractions are canonical tensor operations, e.g the contraction implied by summation through $b$ in $\w{\omega}^a_{\phantom{a} bc} \varepsilon^b_{\phantom{b} def}$ is, strictly speaking, \textit{not} a contraction of two tensors. 

Changing the tetrad
\begin{equation}
	h_a \longrightarrow \tilde{h}_a = \Lambda^b_{\phantom{b} a} (x) h_b
\end{equation}
leads to the transformation \cite{Fecko}
\begin{align}
	\eta^{ab} &\longrightarrow \eta^{cd} \left( \Lambda^{-1} \right)^a_{\phantom{a} c} \left( \Lambda^{-1} \right)^b_{\phantom{b} d} = \eta^{ab}, \\
	\varepsilon_{abcd} &\longrightarrow \varepsilon_{efgh} \Lambda^e_{\phantom{e} a} \Lambda^f_{\phantom{f} b} \Lambda^g_{\phantom{g} c} \Lambda^h_{\phantom{h} d} = \left( \det \Lambda \right) \varepsilon_{abcd} = \varepsilon_{abcd}, \\
	\w{\omega}^a_{\phantom{a} bc}  &\longrightarrow \tilde{\w{\omega}}^a_{\phantom{a} bc} = \left( \Lambda^{-1} \right)^a_{\phantom{a} d} \w{\omega} ^d_{\phantom{d} ef} \Lambda^e_{\phantom{e} b} \Lambda^f_{\phantom{f} c} + \tilde{h}_c^{\phantom{c} \mu} \left( \Lambda^{-1} \right)^a_{\phantom{a} d} \partial_{\mu} \Lambda^d_{\phantom{d} b}. \label{connection_transformation}
\end{align}
The first term on the right in (\ref{connection_transformation}) is exactly what should be there if $\w{\omega}$ were a tensor of type $\binom{1}{2}$. Therefore, to make the following diagram
\[
\begin{tikzcd}[row sep=huge,column sep=huge]
	\w{\omega} \arrow{d}{\ostar} \arrow{r}{\Huge{\Lambda}} & \tilde{\w{\omega}} \arrow{d}{\ostar} \\
	\ostar{\w{\omega}} \arrow{r}{\Lambda} & \ostar{\tilde{\w{\omega}}}
\end{tikzcd}
\]
commutative, where $\Lambda$ is a local Lorentz transformation, the dual connection has to respond to the local Lorentz transformation in the following manner:
\begin{align*}
	\begin{split}
		\left( \ostar \w{\omega} \right)^a_{\phantom{a} bcde} \longrightarrow	& \left( \ostar \tilde{\w{\omega}} \right)^a_{\phantom{a} bcde} = a \left( \tilde{\w{\omega}}^a_{\phantom{a} fb} - \tilde{\w{\omega}}^{p}_{\phantom{p} fq} \eta_{bp} \eta^{aq} \right) \eta^{fg} \varepsilon_{gcde} + \dots \\
		&= \left( \ostar \w{\omega} \right)^k_{\phantom{k} lmnr} \left( \Lambda^{-1} \right)^a_{\phantom{a} k} \Lambda^l_{\phantom{l} b} \Lambda^m_{\phantom{m} c} \Lambda^n_{\phantom{n} d} \Lambda^r_{\phantom{r} e} + \\
		&+ a \left[ \tilde{h}_b^{\phantom{c} \mu} \left( \Lambda^{-1} \right)^a_{\phantom{a} k} \left( \partial_{\mu} \Lambda^k_{\phantom{k} f} \right) -  \eta_{bp} \eta^{aq} \tilde{h}_q^{\phantom{c} \mu} \left( \Lambda^{-1} \right)^p_{\phantom{a} k} \left( \partial_{\mu} \Lambda^k_{\phantom{k} f} \right) \right]  \eta^{fg} \varepsilon_{gcde} + \dots \\
		&= \left( \ostar \w{\omega} \right)^k_{\phantom{k} lmnr} \left( \Lambda^{-1} \right)^a_{\phantom{a} k} \Lambda^l_{\phantom{l} b} \Lambda^m_{\phantom{m} c} \Lambda^n_{\phantom{n} d} \Lambda^r_{\phantom{r} e} + \\
		&+ a \left[ \delta^q_b \delta^a_p -  \eta_{bp} \eta^{aq} \right] \tilde{h}_q^{\phantom{c} \mu} \left( \Lambda^{-1} \right)^p_{\phantom{p} k} \left( \partial_{\mu} \Lambda^k_{\phantom{k} f} \right)  \eta^{fg} \varepsilon_{gcde} + \dots
	\end{split}
\end{align*}
\begin{comment}
	\begin{align}
		\left( \ostar \w{\omega} \right)^a_{\phantom{a} bcde} &= a \left( \w{\omega}^a_{\phantom{a} fb} - \w{\omega}^{p}_{\phantom{p} fq} \eta_{bp} \eta^{aq} \right) \eta^{fg} \varepsilon_{gcde} + b \w{\omega}^a_{\phantom{a} bf} \eta^{fg} \varepsilon_{gcde} + c \left( \w{\omega}^{a}_{\phantom{a} gf}  \eta^{fg} \varepsilon_{bcde} - \w{\omega}^{l}_{\phantom{l} gf} \eta^{fg} \eta_{bl} \varepsilon^a_{\phantom{a} cde} \right)
	\end{align}
\end{comment}
This just means that the dual connection $\left( \ostar \w{\omega}\right)$ has a rather strange trasformation rule. However, there is no problem with that, because the connection 1-form $\w{\omega}$ itself transforms in an unusual manner. It means that the dual connection $\left( \ostar \w{\omega} \right)$ is a different 3-form depending on the basis in which we are working.

This peculiarity is not present in the case of torsion $\w{T}$ or curvature $\Omega$, as all of their indices are essentially \enquote{tensorial} in nature -- $\Omega^a_{\phantom{a} bdc}$ can be thought of as components of a tensor. However, this problem reappears for general $\mathfrak{so}(1,3)$-valued 2-forms or translation algebra-valued 2-forms, in which case we do not know how they transform with respect to the \textit{algebraic} (\enquote{non-tensorial}) indices.

\section[Dual curvature]{Dual curvature $\ostar \Omega$}

Another object we may be interested in may be the dual curvature $\ostar \w{\Omega}$. Note however, that it is pointless to try to examine the behavior of $\ostar$-operator on curvature 2-forms of \textit{teleparallel} connection, because in that case $\w{\Omega} = 0$ and the desired linearity of $\ostar$ immediately gives $\ostar \w{\Omega} = 0$. 

We will therefore consider $\ostar \Omega$, where $\Omega$ is the curvature form of a general (not necessarily teleparallel) connection. Following the same steps as before, we should find all possible contractions of $R^{\alpha}_{\phantom{\alpha} \beta \mu \nu}$ with $\omega_{\alpha \beta \mu \nu}$. Then we parametrize the operator acting on $\Omega$. We get a 2-form $\ostar \Omega$ and we wish to apply the $\ostar$-operator again. But this time, the form under consideration may not have all the symmetries of the Riemann tensor $R^{\alpha}_{\phantom{\alpha} \beta \mu \nu}$. In \cite{Lucas:2008gs}, apparently only the case of curvature of RLC connection is considered, otherwise there would be more (independent) terms in the parametrization of $\ostar \Omega$. We will be more general here and require only that $\Omega$ be the curvature 2-form of a metric-compatible connection $\omega$, and we will point to the flaws in the treatment in \cite{Lucas:2008gs}.

\subsection{Possible contractions and parametrization}

For the curvature tensor of a metric-compatible connection, the following holds:
\begin{align}
	R_{\alpha \beta \mu \nu} &= - R_{\alpha \beta \nu \mu}, \\
	R_{\alpha \beta \mu \nu} &= - R_{\beta \alpha \mu \nu},
\end{align}
hence the possible independent contractions allow us to parametrize 
\begin{equation} \label{dual_curvature_parametrization}
	\begin{aligned}
		\left( \ostar \Omega \right)^{ab}_{\phantom{ab} cd} = \omega_{cdef} &\left[ \alpha R^{abef}  + \beta \left( R^{aebf} - R^{beaf} \right) + \gamma \left( g^{ea} R^{lfb}_{\phantom{lfb} l} - g^{eb} R^{lfa}_{\phantom{lfa} l} \right) + \right. \\
		 &+ \left. \mu g^{ae} g^{bf} R^{lk}_{\phantom{lk} lk} + \lambda R^{efab} + \kappa \left( g^{ea} R^{lbf}_{\phantom{lbf} l} - g^{eb} R^{laf}_{\phantom{laf} l} \right) \right].
	\end{aligned} 
\end{equation}
The terms in brackets are to ensure the pseudo-antisymmetry of $\left( \ostar \Omega \right)^a_{\phantom{a} bcd}$ in indices $a,b$. 
In the case of RLC connection,
\begin{equation} \label{Riemann_index_pair_exchange_symmetry}
	R_{\alpha \beta \mu \nu} = R_{\mu \nu \alpha \beta},
\end{equation}
which makes the term with $\lambda$ the same as the one with $\alpha$, and the term with $\kappa$ the same as the one with $\gamma$. This is the reason that these two terms do not appear in the treatment by Lucas and Pereira \cite{Lucas:2008gs, Lucas:thesis}, where they only have
\begin{equation} \label{dual_curvature_parametrization_Pereira}
	\begin{aligned}
		\left( \ostar \Omega \right)^{ab}_{\phantom{ab} cd} = \omega_{cdef} &\left[ \alpha R^{abef}  + \beta \left( R^{aebf} - R^{beaf} \right) + \right. \\
		&+ \left. \kappa \left( g^{ea} R^{lbf}_{\phantom{lbf} l} - g^{eb} R^{laf}_{\phantom{laf} l} \right) + \mu g^{ae} g^{bf} R^{lk}_{\phantom{lk} lk} \right].
	\end{aligned} 
\end{equation}

Another problem with the treatment by Lucas and Pereira is that $\left( \ostar \Omega \right)$ does not have all the symmetries of RLC curvature tensor -- even if $\Omega$ is the curvature 2-form of RLC connection. When applying $\ostar$ on $\left( \ostar \Omega \right)$, we would have to use a different parametrization from (\ref{dual_curvature_parametrization_Pereira}) to include \textit{all} independent contractions -- and that is exactly what we have done in (\ref{dual_curvature_parametrization}). Let us show that $\left( \ostar \Omega \right)$ lacks the symmetry (\ref{Riemann_index_pair_exchange_symmetry}). We are asking whether
\begin{equation*}
	\left( \ostar \Omega \right)_{abcd} = \omega_{cdef} \left[ \alpha R^{\phantom{ab} ef}_{ab}  + \beta \left( R^{\phantom{a} e \phantom{b} f}_{a \phantom{e} b} - R^{\phantom{b} e \phantom{a} f}_{b \phantom{e} a} \right) + \kappa \left( \delta^e_a R^{l \phantom{b} f}_{\phantom{l} b \phantom{f} l} - \delta^e_b R^{l \phantom{a} f}_{\phantom{l} a \phantom{f} l} \right) + \mu \delta^e_a \delta^f_b R^{lk}_{\phantom{lk} lk} \right]
\end{equation*}
is the same as
\begin{equation*} 
	\left( \ostar \Omega \right)_{cdab} = \omega_{abef} \left[ \alpha R^{\phantom{cd} ef}_{cd} + \beta \left( R^{\phantom{c} e \phantom{d} f}_{c \phantom{e} d} - R^{\phantom{d} e \phantom{c} f}_{d \phantom{e} c} \right) + \kappa \left( \delta^e_c R^{l \phantom{d} f}_{\phantom{l} d \phantom{f} l} - \delta^e_d R^{l \phantom{c} f}_{\phantom{l} c \phantom{f} l} \right) + \mu \delta^e_c \delta^f_d R^{lk}_{\phantom{lk} lk} \right].
\end{equation*}
Obviously, the last terms are the same, so now taking $a=0,b=1,c=2,d=3$, we get
\begin{equation*} 
	\begin{aligned}
		\left( \ostar \Omega \right)_{0123} - \left( \ostar \Omega \right)_{2301} =  \omega_{23ef} \left[ \alpha R^{\phantom{01} ef}_{01}  + \beta \left( R^{\phantom{0} e \phantom{1} f}_{0 \phantom{e} 1} - R^{\phantom{1} e \phantom{0} f}_{1 \phantom{e} 0} \right) + \right. & \left. \kappa \left( \delta^e_0 R^{l \phantom{1} f}_{\phantom{l} 1 \phantom{f} l} - \delta^e_1 R^{l \phantom{0} f}_{\phantom{l} 0 \phantom{f} l} \right) \right] \\
		- \omega_{01ef} \left[ \alpha R^{\phantom{23} ef}_{23} + \beta \left( R^{\phantom{2} e \phantom{3} f}_{2 \phantom{e} 3} - R^{\phantom{3} e \phantom{2} f}_{3 \phantom{e} 2} \right) + \right. & \left. \kappa \left( \delta^e_2 R^{l \phantom{3} f}_{\phantom{l} 3 \phantom{f} l} - \delta^e_3 R^{l \phantom{2} f}_{\phantom{l} 2 \phantom{f} l} \right) \right],
	\end{aligned} 
\end{equation*}
\begin{equation*} 
	\begin{aligned}
		\left( \ostar \Omega \right)_{0123} &- \left( \ostar \Omega \right)_{2301} \\
		&= \left[ 2 \alpha R^{\phantom{01} 01}_{01}  + \beta \left( R^{\phantom{0} 0 \phantom{1} 1}_{0 \phantom{0} 1} - R^{\phantom{0} 1 \phantom{1} 0}_{0 \phantom{1} 1} - R^{\phantom{1} 0 \phantom{0} 1}_{1 \phantom{0} 0} + R^{\phantom{1} 1 \phantom{0} 0}_{1 \phantom{1} 0} \right) + \kappa \left( R^{l \phantom{1} 1}_{\phantom{l} 1 \phantom{1} l} + R^{l \phantom{0} 0}_{\phantom{l} 0 \phantom{0} l} \right) \right] \\
		&- \left[ 2 \alpha R^{\phantom{23} 23}_{23} + \beta \left( R^{\phantom{2} 2 \phantom{3} 3}_{2 \phantom{2} 3} - R^{\phantom{2} 3 \phantom{3} 2}_{2 \phantom{3} 3} - R^{\phantom{3} 2 \phantom{2} 3}_{3 \phantom{2} 2} + R^{\phantom{3} 3 \phantom{2} 2}_{3 \phantom{3} 2} \right) + \kappa \left( R^{l \phantom{3} 3}_{\phantom{l} 3 \phantom{3} l} + R^{l \phantom{2} 2}_{\phantom{l} 2 \phantom{2} l} \right) \right],
	\end{aligned} 
\end{equation*}
which generally does not vanish. The $\alpha$-term and $\beta$-term from the first line is nowhere to be found in the second line, so to make this identically zero, we would need $\alpha = \beta = 0$. Our question reduces to whether
\begin{equation} 
	\begin{aligned}
		R_{01}^{\phantom{01} 10} + R_{21}^{\phantom{21} 12} &+ R_{31}^{\phantom{31} 13} + R_{10}^{\phantom{10} 01} + R_{20}^{\phantom{20} 02} + R_{30}^{\phantom{30} 03} \\
		&- \left( R_{03}^{\phantom{03} 30} + R_{13}^{\phantom{13} 31} + R_{23}^{\phantom{23} 32} + R_{02}^{\phantom{02} 20} + R_{12}^{\phantom{12} 21} + R_{32}^{\phantom{32} 23} \right)
	\end{aligned} 
\end{equation}
vanishes, which is equivalent to asking whether
\begin{equation}
		R_{10}^{\phantom{10} 01} = R_{32}^{\phantom{32} 23},
\end{equation}
which, obviously, does not hold in general (even in the case of RLC connection).

Note that in the entire reasoning above, we did not need to use (\ref{Riemann_index_pair_exchange_symmetry}), but even if we had attempted to use it, it would have been of no help to us in showing that $\left( \ostar \Omega \right)_{0123} - \left( \ostar \Omega \right)_{2301}$ vanishes.

\subsection{Duality condition}

Applying $\ostar$ to $\Omega$ twice, putting this into \textit{Mathematica} and requiring $\left( \ostar \ostar \Omega \right) = - \Omega$ yields a set of four non-linear equations
\begin{align}
	0 &= \beta \gamma + \alpha \mu - 2\beta \mu + \alpha \lambda + \beta \lambda - \gamma \lambda + \mu \lambda + \lambda^2 + \beta \kappa - \lambda \kappa, \\
	0 &= \alpha \beta + \beta^2 + \alpha \gamma - 2 \beta \gamma - 2\alpha \mu + 4 \beta \mu + \beta \lambda + \gamma \lambda - 2\mu \lambda + \alpha \kappa - 2\beta \kappa + \lambda \kappa, \\
	0 &= 2\alpha \beta + \beta^2 - 2\alpha \gamma - \gamma^2 + 4\alpha \lambda + 2\beta \lambda + 2\gamma \lambda + 2\alpha \kappa + 2\gamma \kappa - 2\lambda \kappa - \kappa^2, \\
	\dfrac{1}{4} &= \alpha^2 + \alpha \beta - \alpha \gamma + \beta \gamma + \alpha \mu - 2\beta \mu + \alpha \lambda + \mu \lambda - \alpha \kappa + \beta \kappa. 
\end{align}

\begin{comment}
	\begin{align}
		0 &= bc + ad - 2bd + ae + be - ce + de + e^2 + bf - ef \\
		0 &= ab + b^2 + ac - 2bc - 2ad + 4bd + be + ce - 2de + af - 2bf + ef \\
		0 &= 2ab + b^2 - 2ac - c^2 + 4ae + 2be + 2ce + 2af + 2cf - 2ef - f^2 \\
		\dfrac{1}{4} &= a^2 + ab - ac + bc + ad - 2bd + ae + de - af + bf.
	\end{align}
\end{comment}

Unsurprisingly, these equations do not have a unique solution. And, also unsurprisingly, because it is just a special case of ordinary Hodge operator, one of these solutions is $\alpha = \frac{1}{2}, \beta=\gamma=\mu=\lambda=\kappa=0$, which Lucas and Pereira present in \cite{Lucas:2008gs}.

\subsection{Applying the additional rule}

If we apply the additional rule described in section \ref{add_rule}, we have that $\lambda = \alpha, \beta = \pm 2 \alpha$ and $\kappa = \pm \gamma$. As the $\pm$ signs in $\beta$ and $\kappa$ are independent, let us write
\begin{align}
	\beta = (-1)^{\xi} 2 \alpha,   &&   \kappa = (-1)^{\zeta} \gamma,
\end{align}
where $\xi, \zeta \in \{0,1\}$. After substituting for $\lambda$ and $\kappa$ the equations reduce to
\begin{align}
	0 &= 2\alpha \mu - 2\beta \mu + 2\alpha^2 + \alpha \beta + \left[ 1 + (-1)^{\zeta} \right] \beta \gamma - \left[ 1 + (-1)^{\zeta} \right] \alpha \gamma, \\
	0 &= 2\alpha \beta + \beta^2 - 4\alpha \mu + 4 \beta \mu - 2 \left[ 1 + (-1)^{\zeta} \right] \beta \gamma + 2 \left[ 1 + (-1)^{\zeta} \right] \alpha \gamma, \\
	0 &= 4\alpha \beta + \beta^2 + 4\alpha^2 + 2 \left[ -1 + (-1)^{\zeta} \right] \gamma^2, \\
	\dfrac{1}{4} &= 2\alpha^2 + \alpha \beta + 2\alpha \mu - 2\beta \mu - \left[ 1 + (-1)^{\zeta} \right] \alpha \gamma + \left[ 1 + (-1)^{\zeta} \right] \beta \gamma. 
\end{align}
Taking twice the first equation and adding it to the second one gives
\begin{equation}
	0 = 4 \alpha^2 + 4\alpha \beta + \beta^2 = \left( 2\alpha + \beta \right)^2, 
\end{equation}
leading to $\beta = -2 \alpha$, or equivalently, $\xi = 1$. However, looking at the system more closely, it is clear that the first and the last equation are contradictory.

\subsection{Dual RLC curvature}

Now, not only does (\ref{dual_curvature_parametrization_Pereira}) not contain all possible contractions, but also it is not clear why in \cite{Lucas:2008gs}, the term with $\kappa$ is preferred over the term with $\gamma$, or the term with $\alpha$ over the term with $\lambda$. Moreover, using parametrization (\ref{dual_curvature_parametrization_Pereira}) while assuming that $\Omega$ is the curvature of a RLC connection and doing the same procedure as we have done starting from (\ref{dual_curvature_parametrization}), we arrive at equations which are equivalent to the system
\begin{align}
	\left( \alpha + \beta \right)^2 &= \dfrac{1}{4}, \\
	\beta^2 + \alpha \beta + \alpha \kappa &= 0, \\
	\alpha \mu + \beta \kappa - 2 \beta \mu &= 0,
\end{align}
which has infinitely many solutions:
\begin{align}
	\alpha &= \dfrac{1}{2} \left( \pm 1 -2 \beta \right), \\
	\mu &= \dfrac{2 \beta^2}{\pm 1 - 8\beta \pm 12\beta^2}, \\
	\kappa &= \frac{\beta}{-1 \pm 2\beta}.
\end{align}
Interestingly, repeating the process for parametrization
\begin{equation}
	\begin{aligned}
		\left( \ostar \Omega \right)^{ab}_{\phantom{ab} cd} = \omega_{cdef} &\left[ \alpha R^{abef}  + \beta \left( R^{aebf} - R^{beaf} \right) + \right. \\
		&+ \gamma \left( g^{ea} R^{lfb}_{\phantom{lfb} l} - g^{eb} R^{lfa}_{\phantom{lfa} l} \right) + \left. \mu g^{ae} g^{bf} R^{lk}_{\phantom{lk} lk} \right]
	\end{aligned} 
\end{equation}
leads to the same sytem of equations -- in these equations, $\kappa$ is replaced by $\gamma$. 

% We see that the result is highly dependent on rather arbitrary decision not to include some terms.
% Also, it is not clear why we consider the term with $\alpha$ and not the term with $\lambda$. 

To summarize, there are numerous unclear steps or unstated assumptions in the article by Lucas and Pereira. Attempting to clarify and do the calculation properly -- according to the rules stated in section \ref{contraction_rules} -- leads to a system of nonlinear equations, and the construction of the duality operator is not unique. Our calculation for the curvature of RLC connection also led to a non-unique solution. 

	\chapter{Applications} \label{chapter6}

\section{Teleparallel action as Yang-Mills action for gauge theory}

The Yang-Mills action for a gauge theory whose field strength is $\mathcal{F}$ takes the form 
\begin{align}
	\mathcal{S}_{YM} = - \int_{\mathcal{M}}{\Tr \left( \mathcal{F} \wedge \star \mathcal{F} \right)}.
\end{align}
We have already mentioned in section \ref{action__dual_rewrite} that the $\ostar$-operator may be used to rewrite the action of teleparallel gravity in the form
\begin{align}
	\w{\mathcal{S}} &= \int_{\mathcal{M}}{\frac{1}{2 \kappa} \eta_{ab} \left( \w{T}^a \wedge \ostar \w{T}^b \right)},
\end{align}
which looks almost like the Yang-Mills action for a gauge theory, with torsion $\w{T}$ being the field strength of a hypothetical gauge field. Here, the Hodge $\star$ operator is replaced by the $\ostar$-operator.

It is also possible to formulate teleparallel gravity as a gauge theory of translations \cite{Pereira:2019woq,Cho:1976, Bahamonde:2021gfp, Pereira}.

\subsection{Teleparallel gravity as gauge theory -- motivation}

Looking at Cartan structure equations (\ref{Cartan_struct_torsion}), (\ref{Cartan_struct_curvature}), we see that they are both similar. The right-hand side contains two terms, one of which is the exterior derivative $d$ of some differential form, while the other is the exterior product of certain forms. This might inspire us to explore the question whether this is just a coincidence, or there are some deeper reasons for this similarity. This question is even more urgent if we recall that in the theory of connections on principal bundles \cite{Fecko}, for a principal bundle $\pi : P \longrightarrow \mathcal{M}$ the curvature 2-form $F \in \Omega^2 \left(P, \Ad \right)$ for the connection $A \in \Omega^1 \left(P, \Ad \right) $ is given by
\begin{equation}
	\begin{aligned}
		F &= D A = d A + \dfrac{1}{2} \left[ A \wedge A \right] = d A^i E_i + \dfrac{1}{2} A^j \wedge A^k \left[ E_j , E_k \right],
	\end{aligned}
\end{equation}
which also bears some similarity to (\ref{Cartan_struct_torsion}).

Is it possible to construct a principal bundle in such a way that the curvature $F$ of the gauge connection $A$ on this bundle would be given by (\ref{Cartan_struct_torsion}), i.e. the curvature of such a connection would actually be torsion of an affine connection? 

Within the framework of gauge theories, the connection is related to the gauge field $\mathcal{A}$ via a representation $\rho^{\prime}$ of the Lie algebra $\mathcal{G}$ of the structure group $G$ of $P$, i.e. $\mathcal{A} = \sigma^*A^i \rho^{\prime} \left( E_i \right)$, where $\sigma$ is a local section. Then the field strength is
\begin{equation} \label{field_strength}
	\mathcal{F} = d \mathcal{A}^i \rho^{\prime} \left( E_i \right) + \dfrac{1}{2} \mathcal{A}^i \wedge \mathcal{A}^j \left[ \rho^{\prime} \left( E_i \right), \rho^{\prime} \left( E_j \right) \right] 
\end{equation}
Comparing this to 
\begin{equation} \label{Cartan_torsion}
	T^a = de^a + \omega^a_{\phantom{a} b} \wedge e^b,
\end{equation}
where $e^a$ is a coframe (not necessarily a tetrad) and $\omega^a_{\phantom{a} b}$ are connection 1-forms with respect to frame $e_a$, we hypothesize that $\mathcal{A}^i$ is somehow related to $h^a$. However, in the gauge theoretic view, there is no place for the affine connection $\omega^a_{\phantom{a} b}$. It might seem like the game's over. But we don't give up! To make (\ref{field_strength}) the same as (\ref{Cartan_torsion}), we have to get rid of the terms with $\wedge$. This can be done in (\ref{field_strength}) by considering an abelian $\mathcal{G}$, and in (\ref{Cartan_torsion}) by setting $\omega^a_{\phantom{a} b} \equiv 0$. 

What does this correspond to? First, notice that $a$ is an index taking on $4$ possible values: $0,1,2,3$. Thus, the structure group $G$ must be an abelian group of dimension $4$, so that index $i$ also runs through $4$ possible values. Second, we have not said that there is any metric defined on the manifold $\mathcal{M}$. We may therefore define $e_a$ to be a tetrad and construct a metric using (\ref{metric_from_tetrad}). Then the condition $\omega^a_{\phantom{a} b} \equiv 0$ would mean that the connection is actually a metric teleparallel connection, $e_a$ being just the tetrad with respect to which the connection 1-forms are vanishing. 

\subsection[Gauge theory of translation group]{Gauge theory of translation group $\mathbb{T}_4$}

So far, we have worked backwards. To reverse our reasoning, we have to find a $4$-dimensional abelian Lie group. The translation group $\mathbb{T}_4$ is a good candidate.

%(According to some theorem \textbf{\textcolor{red}{(FIND IT, CITE IT)}}, any connected abelian real Lie group is isomorphic to $\mathbb{R}^k \times \left( S^1 \right)^h$. If the group is, in addition, compact, then it is isomorphic to $\left( S^1 \right)^h$.)

In fact, any connected abelian real Lie group $G$ is isomorphic to $\mathbb{R}^k \times \left( S^1 \right)^h$ for some $h,k \in \mathbb{N} \cup \lbrace 0 \rbrace$ \cite{Knapp1988LieGB}. If the $G$ is, in addition, compact, then it is isomorphic to $\left( S^1 \right)^h$. A motivation for choosing the translational group $\mathbb{T}_4$ is that by Noether's theorem, energy-momentum conservation is induced by translational invariance of the Lagrangian of an isolated system \cite{Hehl:2020hhp, Feynman}.

%that energy and momentum are the Noether currents (or charges?) associated with translational symmetries \textbf{\textcolor{red}{(FIND IT, CITE IT)}}.

On a principal $\mathbb{T}_4$-bundle $P$ there exists a connection $A \in \Omega^1  \left( P, \Ad \right)$, whose curvature is
\begin{equation}
	F = D A = d A + \dfrac{1}{2} \left[ A \wedge A \right] = dA.
\end{equation}
Consider a local section $\sigma : \mathcal{M} \supset \mathcal{O} \longrightarrow P$. We define
\begin{equation} \label{gauge_tetrad}
	h^a \coloneqq \sigma^* A^a, 
\end{equation}
and require that this be the cotetrad such that $h_a$ is the tetrad with respect to which the (metric) teleparallel connection vanishes: $\w{\omega}^a_{\phantom{a} b} = 0$. Then the metric on $\mathcal{M}$ is 
\begin{equation}
	g = \eta_{ab} h^a \otimes h^b .
\end{equation}
Also, by requiring that connection 1-forms $\w{\omega}^a_{\phantom{a} b}$ transform in the way in which they should when changing the tetrad to $\tilde{h}_a (x) = \Lambda^b_{\phantom{b} a} (x) h_b$, via a Lorentz transformation $\Lambda (x) \in SO (1,3)$, we get 
\begin{equation}
	\tilde{\w{\omega}}^a_{\phantom{a} b} = \left( \Lambda^{-1}\right)^a_{\phantom{a} c} d\Lambda^c_{\phantom{c} b}.
\end{equation}
This is how the teleparallel connection enters the game when describing teleparallel gravity as a gauge theory of the translation group.

The field strength is defined as
\begin{equation}
	\mathcal{F}^a = \sigma^* F^a = d \left( \sigma^* A^a \right) = dh^a
\end{equation}
and has components
\begin{equation}
	\mathcal{F}^a_{\mu \nu} = dh^a \left( \partial_{\mu}, \partial_{\nu} \right) = \partial_{\mu} h^a \left( \partial_{\nu} \right) - \partial_{\nu} h^a \left( \partial_{\mu} \right) - h^a \left( \left[ \partial_{\mu}, \partial_{\nu} \right] \right)  = \partial_{\mu} h^a_{\phantom{a} \nu} - \partial_{\nu} h^a_{\phantom{a} \mu},
\end{equation}
while the components of torsion are 
	\begin{align*}
	\w{T} \left( h^a , \partial_{\mu}, \partial_{\nu} \right) &=  \langle h^a, \w{\nabla}_{\partial_{\mu}} \partial_{\nu} - \w{\nabla}_{\partial_{\nu}} \partial_{\mu} - \left[ \partial_{\mu}, \partial_{\nu} \right] \rangle \\
	&= \langle h^a, \w{\nabla}_{\partial_{\mu}} \left( h^b_{\phantom{b} \nu} h_b \right) - \w{\nabla}_{\partial_{\nu}} \left( h^b_{\phantom{b} \mu} h_b \right) \rangle \\
	&= \langle h^a, \left( \partial_{\mu} h^b_{\phantom{b} \nu} \right) h_b + h^b_{\phantom{b} \nu} \w{\omega}^c_{\phantom{c} b \mu} h_c - \left( \partial_{\nu} h^b_{\phantom{b} \mu} \right) h_b - h^b_{\phantom{b} \mu} \w{\omega}^c_{\phantom{c} b \nu} h_c \rangle \\
	&= \partial_{\mu} h^a_{\phantom{a} \nu} - \partial_{\nu} h^a_{\phantom{a} \mu}.
\end{align*}
because $\w{\omega}^a_{\phantom{a} b \mu} \equiv 0$. Therefore
\begin{equation}
	\mathcal{F}^a_{\mu \nu} = \w{T}^a_{\mu \nu}.
\end{equation}
Using a different tetrad $\tilde{h}_a = \Lambda^b_{\phantom{b} a} (x) h_b$ leads to
\begin{align*}
	\mathcal{F}^a &= dh^a = d \left( \Lambda^a_{\phantom{a} b} \tilde{h}^b \right) = \Lambda^a_{\phantom{a} b} d\tilde{h}^b + d \Lambda^a_{\phantom{a} b} \wedge \tilde{h}^b \\
	&= \Lambda^a_{\phantom{a} b} \left( d\tilde{h}^b + \left( \Lambda^{-1} \right)^b_{\phantom{b} c} d \Lambda^c_{\phantom{c} d} \wedge \tilde{h}^d \right) = \Lambda^a_{\phantom{a} b} \left( d\tilde{h}^b + \tilde{\w{\omega}}^b_{\phantom{b} d} \wedge \tilde{h}^d \right) = \Lambda^a_{\phantom{a} b} \tilde{\w{T}}^b ,
\end{align*}
where $\tilde{\w{T}}^a$ are torsion 2-forms with respect to $\tilde{h}_a$. All of this should be also clear from the Cartan structure equation (\ref{Cartan_struct_torsion}).

To sum up, any choice of frame generates a metric if we declare that this frame is a tetrad, and it also generates a (metric) teleparallel connection.

There is a subtle problem with the consistency of this clever construction, however. The tetrad with respect to which the connection vanishes is given by eq. (\ref{gauge_tetrad}). Using a different local section $\sigma^{\prime}$ leads to a different frame $h_a^{\prime}$ and may result in different metric $g^{\prime}$ and different teleparallel connection $\w{\omega}^{\prime a}_{\phantom{\prime a} b}$.

The gauge structure of teleparallel gravity is a topic that is still being discussed, see for example \cite{Pereira:2019woq,LeDelliou:2019esi}

\begin{comment}

\textbf{\textcolor{red}{(DOKONČIŤ!)}} - treba nejako poriešiť otázku, či je to konzistentné, t.j. či by sme pre rôzne rezy takto nedostávali rôzne metriky a konexie

Is it possible to construct a principal bundle in such a way that the curvature $\Omega$ of the connection $\omega$ on this bundle would be given by (\ref{Cartan_struct_torsion}), i.e. the curvature of such a connection would actually be torsion?

Problém: ak chcem teleparalelnú gravitáciu formulovať ako kalibračnú teóriu s translačnou grupou, potom krivosť tej konexie bude $\Omega = D \omega = d\omega$, lebo ten komutátor v druhom člene je nulový (translačná grupa je abelovská, preto komutátor v jej Lieovej algebre je vždy nulový), a to znamená, že ako krivosť tu NEDOSTANEM torziu. A ani to nemôžem očakávať - veď odkiaľ by sa v tom vzorci vzala "tá iná" konexia, ktorú v Cartanových štruktúrných rovniciach značíme $\omega$? 

Po stiahnutí rovnice $\Omega = d\omega + \dfrac{1}{2} \left[ \omega \wedge \omega \right]$ z fibrácie na bázu (nejakým rezom) chceme dostať rovnicu $T^a = de^a + \omega^a_b \wedge e^b$. Je možné, že by sme mohli komponentné formy $\omega^i$ (sú štyri, a sú lineárne nezávislé, lebo sú to formy konexie, viď Fecko 20.2.6) považovať za ORTONORMÁLNE korepérne polia? Neviem, problém by mohol byť, že ak sťahujeme hocijakým rezom $\sigma$, tak pre rôzne rezy by sme (ak by sme tie stiahnuté formy prehlásili za ortonormálne) dostali rôzne metrické tenzory.
\end{comment}

\section[Teleparallel gravity as analogue of electrodynamics]{Teleparallel gravity as analogue of electrodynamics} \label{el_dyn}

In the language of differential forms, the general electrodynamics can be formulated in terms of the electromagnetic field strength 2-form $F$, the closed current 3-form $J$ and the excitation 2-form $H$. 
General electrodynamics is then described by the field equations:
\begin{align} \label{Maxwell_eqs}
	dF = 0,  &&  d H = J,
\end{align}
as well as a relation between field strength and excitation called \textit{constitutive relation} $H = H(F)$. This approach allows for the description of the vast phenomenology of the electromagnetic interaction, from linear and non-linear media to various non-linear theories of electrodynamics, like Born-Infeld theory, in one unified language \cite{Itin:premetric}. 

In the simplest case of Maxwell electrodynamics in vacuum, the excitation form is related to the field strength via the Hodge dual operator and the vacuum impedance $\lambda_0 = \sqrt{\dfrac{\varepsilon_0}{\mu_0}}$
\begin{equation} \label{EM_excitation}
	H = \lambda_0 \star F.
\end{equation}
In the so-called premetric approach described above, the constitutive relation can be of any kind and does not necessarily involve a metric tensor.

In teleparallel gravity, torsion is defined by the first Cartan structure equation 
\begin{equation}
	T^a \equiv D h^a = dh^a + \omega^a_{\phantom{a} b} \wedge h^b,
\end{equation}
where $D = d + \omega^a_{\phantom{a} b} \wedge$ is exterior covariant derivative\footnote{This equation can be better understood in the following way: on the bundle of orthonormal frames $O\mathcal{M}$, there exists the so-called canonical $\mathbb{R}^n$-valued 1-form $\theta$ \cite{Fecko, Kobayashi} (which is the solder form), and a section $\sigma$ corresponds to a frame $h_a$. Then $\sigma^* \theta^a = h^a$. Understood in this way, it makes sense to apply the exterior covariant derivative $D$ on $h^a$.
	%It can indeed be shown that this equation \textit{does} make sense, i.e. it is possible to apply exterior covariant derivative $\D$ to coframe $h^a$. To show this, we have to view ALL the coframes $h^a$ as ONE $\mathbb{R}^4$-valued 1-form $h$ on $O\mathcal{M}$ (the bundle of all orthonormal frames on $\mathcal{M}$), so that applying pull-back by section $\sigma$, which corresponds to tetrad $h_a$, gives exactly the coframe $h^a$ to this frame. We construct $h$ to be a horizontal 1-form such that $R_A^* h^c = \left( A^{-1} \right)^c_{\phantom{c} d} h^d$ (this condition just reflects how coframes should transform under the transformation of frames).
	}. 
	Here, $T^a$ are 2-forms and $\omega^a_{\phantom{a}b}$ are 1-forms which satisfy the condition of vanishing curvature expressed by the second Cartan structure equation
\begin{equation} \label{zero_curvature}
	d \omega^a_{\phantom{a}b} + \omega^a_{\phantom{a}c} \wedge \omega^c_{\phantom{c}b} \equiv 0.
\end{equation}
Taking the exterior covariant derivative of the torsion and using (\ref{zero_curvature}) we obtain the Bianchi identity
\begin{equation} \label{Bianchi_torsion}
	D T^a = 0,
\end{equation}
which is an analogue of the first Maxwell equation in (\ref{Maxwell_eqs}). 

Setting $\kappa = 8 \pi G = 1$, the field equations of teleparallel gravity are 
\begin{equation} \label{field_eq_TEGR}
	D \tilde{H}_a - \tilde{\Upsilon}_a = \Sigma_a 
\end{equation}
and they play the role of the second equation in (\ref{Maxwell_eqs}). The teleparallel gravity excitation 2-forms $H_a$ are given by
\begin{equation}
	\tilde{H}_a \left( h^a, T^a \right) = \dfrac{1}{4} h \varepsilon_{\alpha \beta \rho \sigma} S_a^{\phantom{a} \rho \sigma} dx^{\alpha} \wedge dx^{\beta},
\end{equation}
where $S_a^{\phantom{a} \rho \sigma}$ is the superpotential (\ref{superpotential}), $h = \lvert \det (h^a_{\phantom{a} \mu}) \rvert$, and  $\Upsilon_a$ is the gravitational energy-momentum 3-form and $\Sigma_a$ the matter energy-momentum 3-form.
The field equations (\ref{field_eq_TEGR}) are obtained by varying the action $\tilde{S} \left[ h^a, \omega^a_{\phantom{a} b}, \Psi_I \right] = \tilde{S}_g \left[ h^a, \omega^a_{\phantom{a} b} \right] + S_m \left[ h^a, \Psi_I \right]$, where the gravitational action is
\begin{equation}
	\tilde{S}_g = \dfrac{1}{2} \int_{\mathcal{M}}{T^a \wedge \tilde{H}_a} = \dfrac{1}{4} \int_{\mathcal{M}}{T^a_{\phantom{a} \mu \nu} S_a^{\phantom{a} \mu \nu} h d^4 x}
\end{equation}
and $S_m$ describes the coupling between gravity and matter fields $\Psi_I$.

The use of the language of premetric electrodynamics can be extended from local and linear theories to local, but generally non-linear teleparallel gravity theories by consindering gravitational excitation tensors $H_a$ which are general functions of the tetrad and the torsion: $H_a = H_a \left( h^b, T^b \right)$. By varying the action $S = S_g + S_m$ with the generalized gravitational action
\begin{equation}
	S_g \left[ h^a, \omega^a_{\phantom{a} b} \right] = \dfrac{1}{2} \int_{\mathcal{M}}{T^a \wedge H_a}
\end{equation}
the field equations are obtained
\begin{equation}
	D \Pi_a - \Upsilon_a = \Sigma_a, 
\end{equation}
where $\Pi_a$ are 2-forms obtained from $H_a$, and $\Upsilon_a$ is analogous to $\tilde{\Upsilon}_a$. The Bianchi identity (\ref{Bianchi_torsion}) is the same as in the previous case \cite{Hohmann:2017duq}. 

In the case of teleparallel equivalent of general relativity discussed above, $\Pi_a = H_a = \tilde{H}_a$. 
Notice that 
\begin{equation}
	\tilde{H}_a = \eta_{ab} \left( \ostar \w{T} \right)^b. 
\end{equation}
This constitutive relation in teleparallel gravity theories is very similar to (\ref{EM_excitation}) in electrodynamics. If the $\ostar$ could be defined properly as an operator, the teleparallel equivalent of general relativity would be an analogue of Maxwell electrodynamics in vacuum. Thus, the teleparallel equivalent of general relativity would gain the same importance among the various teleparallel gravity theories that vacuum Maxwell electrodynamics has among the various theories of premetric electrodynamics.

	\backmatter
	
	\chapter{Conclusions} 
%\addcontentsline{toc}{chapter}{Conclusions}
\pagestyle{plain}

The description of gravity in terms of torsion of a metric teleparallel connection and using the tetrad formalism proved to be a very clever idea. From the mathematical perspective, TEGR is an elegant theory of gravity, which can be formulated as a gauge theory of the group $\mathbb{T}_4$ of translations in 4D, as we have seen in chapter \ref{chapter6}.

The main focus of this thesis was the search for a proper definition of a duality operator $\ostar$, whose existence was proposed by Lucas and Pereira \cite{Lucas:2008gs}. 
We have inquired into the problem whether this $\ostar$ operator can be constructed consistently from the mathematical point of view. Our analysis of the results of Lucas and Pereira \cite{Lucas:2008gs}, together with our motivation for $\ostar$ to be similar to the Hodge $\star$ operator, led to our formulation of an algorithm for its construction in a general case, i.e. for the application of $\ostar$ on any $p$-form $\alpha$ on $n$-dimensional manifold $\mathcal{M}$. The cases of $p$-form $\alpha$ with $q$ algebraic indices (see section \ref{contraction_rules}) and $(n-p)$-form $\beta$ with $q$ algebraic indices have to be considered simultaneously. First, all the possible contractions of $\alpha$ and $\beta$ with the volume form on $\mathcal{M}$ are determined, according to rules stated in section \ref{contraction_rules}. After that, we parametrize $\ostar \alpha$ using these contractions and real parameters. Next, we impose duality conditions
\begin{align*}
	\ostar \ostar \alpha &= \sgn g (-1)^{p(n+1)} \alpha, \\
	\ostar \ostar \beta &= \sgn g (-1)^{(n-p)(n+1)} \beta,
\end{align*}
which yield a system of algebraic equations for the parameters.

We followed this algorithm for three cases: the torsion 2-forms $T^a$, the connection 1-forms $\omega^a_{\phantom{a} b}$ and for curvature 2-forms $\Omega^a_{\phantom{a} b}$ on a $4$-dimensional manifold with a metric tensor of signature $(1,3)$. The systems of equations did not have a unique solution in any of these cases (not even up to a sign).

To remedy this non-uniqueness, we devised another rule, see section \ref{add_rule}, which for torsion produced the result obtained in \cite{Lucas:2008gs}, but for connection 1-forms and curvature 2-forms failed to provide any solution at all. This is contrary to the claims of Lucas and Pereira, who assert they have obtained a unique solution for the case of curvature (up to a sign).

Thus, to sum up our results, we showed that the naive approach taken by Lucas and Pereira does not lead to a consistent definition of this operator, because it results in the operator with the desired properties either being non-unique or non-existent.

The Hodge $\star$ operator \enquote{ignores} the algebraic indices (see section \ref{contraction_rules}) of the forms. Thus, any attempt at a definition of the newly-proposed duality operator $\ostar$ that is based on utilizing the Hodge $\star$ operator at some point, no matter how sophisticated and clever the precise prescription, formula or procedure is, is destined for failure. 

For example, one approach to the construction might leverage the geometry of the frame bundle endowed with a connection $\omega$ and use the canonical $\mathbb{R}^n$-valued 1-form $\theta$ (which is the solder form) \cite{Fecko} and connection 1-forms to construct a (canonical) volume form and a (canonical) metric tensor, and use these in the definition of the Hodge $\star$ operator for forms on the frame bundle. This $\star$ operator could then be applied on forms on the frame bundle (because connection $\omega$, torsion $T = D \theta$ and curvature $\Omega = D \omega$ are forms defined on the frame bundle -- see section \ref{el_dyn}), and we could apply pull-back by a local section $\sigma$ on the result, so that we could obtain a result on the base manifold of the bundle. However, the $\star$ operator constructed and used in this way also \enquote{ignores} the algebraic indices.

For that reason, a deeper insight into the geometry underlying teleparallel gravity, frame bundles, soldered bundles or principal fiber bundles is necessary (but might not be sufficient) to obtain a proper coordinate-free definition of the operator, from which the geometric interpretation would be clear and such a definition would be canonical.

	\begin{appendices}
	\pagestyle{plain}
	\chapter[Calculation of dual torsion]{Calculation of $\left( \ostar \ostar \w{T} \right)$} \label{append:torsion}
	In section \ref{dual_torsion} we parametrized $\left( \ostar \w{T} \right)$ using three real parameters $a,b,c$. We now have to calculate $\left( \ostar \ostar \w{T} \right) $\footnote{The result of this calculation is independent of the choice of frame, therefore we can work in a holonomic frame $\partial_{\alpha}$. }:
	\begin{equation}
		\begin{aligned}
			\left( \ostar \ostar \w{T} \right)^{\gamma}_{\phantom{\gamma} \alpha \beta} &=  \omega_{\alpha \beta \mu \nu} \left[ a \left( \ostar \w{T} \right)^{\gamma \mu \nu} + b \left( \ostar \w{T} \right)^{\mu \gamma \nu} + c \left( \ostar \w{T} \right)^{\lambda \mu}_{\phantom{\lambda \mu} \lambda} g^{\nu \gamma} \right] \\
			&= \omega_{\alpha \beta \mu \nu} \left[ a \omega^{\mu \nu \rho \sigma} \left( a \w{T}^{\gamma}_{\phantom{\gamma} \rho \sigma} + b \w{T}^{\phantom{\rho} \gamma}_{\rho \phantom{\gamma} \sigma} + c \w{T}^{\lambda}_{\phantom{\lambda} \rho \lambda} \delta_{\sigma}^{\gamma} \right) \right. \\
			&+ \left. b \omega^{\gamma \nu \rho \sigma} \left( a \w{T}^{\mu}_{\phantom{\mu} \rho \sigma} + b \w{T}^{\phantom{\rho} \mu}_{\rho \phantom{\mu} \sigma} + c \w{T}^{\lambda}_{\phantom{\lambda} \rho \lambda} \delta_{\sigma}^{\mu} \right)  \right. \\
			& + \left. c \omega^{\mu \phantom{\lambda} \rho \sigma}_{\phantom{\mu} \lambda} \left( a \w{T}^{\lambda}_{\phantom{\lambda} \rho \sigma} + b \w{T}^{\phantom{\rho} \lambda}_{\rho \phantom{\lambda} \sigma} + c \w{T}^{\kappa}_{\phantom{\kappa} \rho \kappa} \delta_{\sigma}^{\lambda} \right) g^{\nu \gamma} \right].
		\end{aligned}
	\end{equation}
	The last term vanishes, because $\omega^{\mu \phantom{\lambda} \rho \sigma}_{\phantom{\mu} \lambda} \delta_{\sigma}^{\lambda} = \omega^{\mu \phantom{\sigma} \rho \sigma}_{\phantom{\mu} \sigma} = 0$. Now, remember that $\omega^{\mu \nu \gamma \sigma} = \dfrac{\sgn g}{h} \varepsilon^{\mu \nu \gamma \sigma} = - \dfrac{1}{h} \varepsilon^{\mu \nu \gamma \sigma}$ and $\omega_{\mu \nu \gamma \sigma} = h \varepsilon_{\mu \nu \gamma \sigma}$. Therefore
	\begin{equation}
		\begin{aligned}
			- \left( \ostar \ostar \w{T} \right)^{\gamma}_{\phantom{\gamma} \alpha \beta} &= 4a \delta^{\rho}_{\left[ \alpha \right.} \delta^{\sigma}_{\left. \beta \right] } \left( a \w{T}^{\gamma}_{\phantom{\gamma} \rho \sigma} + b \w{T}^{\phantom{\rho} \gamma}_{\rho \phantom{\gamma} \sigma} + c \w{T}^{\lambda}_{\phantom{\lambda} \rho \lambda} \delta_{\sigma}^{\gamma} \right)  \\
			&+ 3! b \delta^{\gamma}_{\left[ \alpha \right.} \delta^{\rho}_{\left. \beta \right.} \delta^{\sigma}_{\left. \mu \right]} \left( a \w{T}^{\mu}_{\phantom{\mu} \rho \sigma} + b \w{T}^{\phantom{\rho} \mu}_{\rho \phantom{\mu} \sigma} + c \w{T}^{\lambda}_{\phantom{\lambda} \rho \lambda} \delta_{\sigma}^{\mu} \right)  \\
			& + 3! c \delta^{\delta}_{\left[ \alpha \right.} \delta^{\rho}_{\left. \beta \right.} \delta^{\sigma}_{\left. \nu \right]} g_{\delta \lambda} \left( a \w{T}^{\lambda}_{\phantom{\lambda} \rho \sigma} + b \w{T}^{\phantom{\rho} \lambda}_{\rho \phantom{\lambda} \sigma} \right) g^{\nu \gamma} \\
			&= \left( 4 a^2 \w{T}^{\gamma}_{\phantom{\gamma} \alpha \beta} + 4ab \w{T}^{\phantom{\rho} \gamma}_{\left[ \alpha \phantom{\gamma} \beta \right]} - 4ac \w{T}^{\lambda}_{\phantom{\lambda} \lambda \left[ \alpha \right.} \delta_{\left. \beta \right] }^{\gamma} \right) \\
			&+ 6b \left( a \delta^{\gamma}_{\left[ \alpha \right.} \w{T}^{\mu}_{\phantom{\mu} \left. \beta \mu \right]} + b \delta^{\gamma}_{\left[ \alpha \right.} \w{T}^{\phantom{\rho} \mu}_{\left. \beta \phantom{\mu}  \mu \right]} + c \delta_{\left[ \mu \right.}^{\mu} \delta^{\gamma}_{\left. \alpha \right.} \w{T}^{\lambda}_{\phantom{\lambda} \left. \beta \right] \lambda} \right) \\
			& + 6c \left( a \w{T}_{\left[ \alpha \beta \nu \right]} + b \w{T}_{\left[ \beta \alpha \nu \right]} \right) g^{\nu \gamma}.
		\end{aligned}
	\end{equation}
	Notice that $a \w{T}_{\left[ \alpha \beta \nu \right]} + b \w{T}_{\left[ \beta \alpha \nu \right]} = \left(a - b \right) \w{T}_{\left[ \alpha \beta \nu \right]}$. At this point, we have to write all terms and notice that some of them cancel:
	\begin{align*}
		- \left( \ostar \ostar \w{T} \right)^{\gamma}_{\phantom{\gamma} \alpha \beta} &= \left( 4 a^2 \w{T}^{\gamma}_{\phantom{\gamma} \alpha \beta} + 2ab \w{T}^{\phantom{\rho} \gamma}_{\alpha \phantom{\gamma} \beta} - 2ab \w{T}^{\phantom{\rho} \gamma}_{\beta \phantom{\gamma} \alpha} + 2ac \delta_{\alpha}^{\gamma} \w{T}^{\lambda}_{\phantom{\lambda} \lambda \beta} - 2ac \delta_{\beta}^{\gamma} \w{T}^{\lambda}_{\phantom{\lambda} \lambda \alpha} \right) \\
		&+ \left[ ab \left( \delta^{\gamma}_{\alpha} \w{T}^{\mu}_{\phantom{\mu} \beta \mu} - \delta^{\gamma}_{\alpha} \w{T}^{\mu}_{\phantom{\mu} \mu \beta} + \delta^{\gamma}_{\mu} \w{T}^{\mu}_{\phantom{\mu} \alpha \beta} - \delta^{\gamma}_{\mu} \w{T}^{\mu}_{\phantom{\mu} \beta \alpha} + \delta^{\gamma}_{\beta} \w{T}^{\mu}_{\phantom{\mu} \mu \alpha} - \delta^{\gamma}_{\beta} \w{T}^{\mu}_{\phantom{\mu} \alpha \mu} \right) \right. \\
		&+ \left. b^2 \left( \delta^{\gamma}_{\alpha} \underbrace{\w{T}^{\phantom{\rho} \mu}_{\beta \phantom{\mu}  \mu}}_{=0} - \delta^{\gamma}_{\alpha} \w{T}^{\phantom{\rho} \mu}_{\mu \phantom{\mu} \beta} + \delta^{\gamma}_{\mu} \w{T}^{\phantom{\rho} \mu}_{\alpha \phantom{\mu} \beta} - \delta^{\gamma}_{\mu} \w{T}^{\phantom{\rho} \mu}_{\beta \phantom{\mu} \alpha} + \delta^{\gamma}_{\beta} \w{T}^{\phantom{\rho} \mu}_{\mu \phantom{\mu} \alpha} - \delta^{\gamma}_{\beta} \underbrace{\w{T}^{\phantom{\rho} \mu}_{\alpha \phantom{\mu} \mu}}_{=0} \right) \right. \\
		&+ \left. bc \left( \delta_{\mu}^{\mu} \delta^{\gamma}_{\alpha} \w{T}^{\lambda}_{\phantom{\lambda} \beta \lambda} - \delta_{\mu}^{\mu} \delta^{\gamma}_{\beta} \w{T}^{\lambda}_{\phantom{\lambda} \alpha  \lambda} + \delta_{\beta}^{\mu} \delta^{\gamma}_{\mu} \w{T}^{\lambda}_{\phantom{\lambda} \alpha \lambda} \right. \right.\\
		& \left. \left. \phantom{+ bc \left( \delta_{\mu}^{\mu} \delta^{\gamma}_{\alpha} \w{T}^{\lambda}_{\phantom{\lambda} \beta \lambda} \right) } - \delta_{\beta}^{\mu} \delta^{\gamma}_{\alpha} \w{T}^{\lambda}_{\phantom{\lambda} \mu \lambda} + \delta_{\alpha}^{\mu} \delta^{\gamma}_{\beta} \w{T}^{\lambda}_{\phantom{\lambda} \mu \lambda} - \delta_{\alpha}^{\mu} \delta^{\gamma}_{\mu} \w{T}^{\lambda}_{\phantom{\lambda} \beta \lambda} \right) \right]  \\
		& + \left( a - b \right) c \left( \w{T}^{\phantom{\alpha \beta} \gamma}_{\alpha \beta} - \w{T}^{\phantom{\alpha \beta} \gamma}_{\beta \alpha} + \w{T}^{\gamma}_{\phantom{\gamma} \alpha \beta} - \w{T}^{\phantom{\alpha} \gamma}_{\alpha \phantom{\gamma} \beta} + \w{T}^{\phantom{\beta} \gamma}_{\beta \phantom{\gamma} \alpha} - \w{T}^{\gamma}_{\phantom{\gamma} \beta \alpha} \right), 
	\end{align*}
	
	\begin{equation*}
		\begin{aligned}
			- \left( \ostar \ostar \w{T} \right)^{\gamma}_{\phantom{\gamma} \alpha \beta} &= \left( 4 a^2 \w{T}^{\gamma}_{\phantom{\gamma} \alpha \beta} + 2ab \w{T}^{\phantom{\rho} \gamma}_{\alpha \phantom{\gamma} \beta} - 2ab \w{T}^{\phantom{\rho} \gamma}_{\beta \phantom{\gamma} \alpha} + 2ac \delta_{\alpha}^{\gamma} \w{T}^{\lambda}_{\phantom{\lambda} \lambda \beta} - 2ac \delta_{\beta}^{\gamma} \w{T}^{\lambda}_{\phantom{\lambda} \lambda \alpha} \right) \\
			&+ \left[ 2ab \left( \delta^{\gamma}_{\alpha} \w{T}^{\mu}_{\phantom{\mu} \beta \mu} + \w{T}^{\gamma}_{\phantom{\mu} \alpha \beta} + \delta^{\gamma}_{\beta} \w{T}^{\mu}_{\phantom{\mu} \mu \alpha} \right) \right. \\
			&+ \left. b^2 \left( - \delta^{\gamma}_{\alpha} \w{T}^{\phantom{\rho} \mu}_{\mu \phantom{\mu}  \beta} + \w{T}^{\phantom{\rho} \gamma}_{\alpha \phantom{\mu}  \beta} - \w{T}^{\phantom{\rho} \gamma}_{\beta \phantom{\mu} \alpha} + \delta^{\gamma}_{\beta} \w{T}^{\phantom{\rho} \mu}_{\mu \phantom{\mu} \alpha} \right) \right. \\
			&+ \left. bc \left( 4 \delta^{\gamma}_{\alpha} \w{T}^{\lambda}_{\phantom{\lambda} \beta \lambda} - 4 \delta^{\gamma}_{\beta} \w{T}^{\lambda}_{\phantom{\lambda} \alpha  \lambda} + \delta_{\beta}^{\gamma} \w{T}^{\lambda}_{\phantom{\lambda} \alpha \lambda} - \delta^{\gamma}_{\alpha} \w{T}^{\lambda}_{\phantom{\lambda} \beta \lambda} + \delta^{\gamma}_{\beta} \w{T}^{\lambda}_{\phantom{\lambda} \alpha \lambda} - \delta_{\alpha}^{\gamma} \w{T}^{\lambda}_{\phantom{\lambda} \beta \lambda} \right) \right]  \\
			& + \left( a - b \right) c \left( 2 \w{T}^{\gamma}_{\phantom{\gamma} \alpha \beta} + 2 \w{T}^{\phantom{\alpha \beta} \gamma}_{\alpha \beta} + 2 \w{T}^{\phantom{\alpha} \gamma}_{\beta \phantom {\gamma} \alpha} \right) \\
			&= \left( 4 a^2 \w{T}^{\gamma}_{\phantom{\gamma} \alpha \beta} + 2ab \w{T}^{\phantom{\rho} \gamma}_{\alpha \phantom{\gamma} \beta} - 2ab \w{T}^{\phantom{\rho} \gamma}_{\beta \phantom{\gamma} \alpha} + 2ac \delta_{\alpha}^{\gamma} \w{T}^{\lambda}_{\phantom{\lambda} \lambda \beta} - 2ac \delta_{\beta}^{\gamma} \w{T}^{\lambda}_{\phantom{\lambda} \lambda \alpha} \right) \\
			&+ \left[ 2ab \left( \delta^{\gamma}_{\alpha} \w{T}^{\mu}_{\phantom{\mu} \beta \mu} + \w{T}^{\gamma}_{\phantom{\mu} \alpha \beta} + \delta^{\gamma}_{\beta} \w{T}^{\mu}_{\phantom{\mu} \mu \alpha} \right) \right. \\
			&+ \left. b^2 \left( - \delta^{\gamma}_{\alpha} \w{T}^{\phantom{\rho} \mu}_{\mu \phantom{\mu}  \beta} + \w{T}^{\phantom{\rho} \gamma}_{\alpha \phantom{\mu}  \beta} - \w{T}^{\phantom{\rho} \gamma}_{\beta \phantom{\mu} \alpha} + \delta^{\gamma}_{\beta} \w{T}^{\phantom{\rho} \mu}_{\mu \phantom{\mu} \alpha} \right) \right. \\
			&+ \left. 2bc \left( \delta^{\gamma}_{\alpha} \w{T}^{\lambda}_{\phantom{\lambda} \beta \lambda} - \delta^{\gamma}_{\beta} \w{T}^{\lambda}_{\phantom{\lambda} \alpha  \lambda} \right) \right]  \\
			& + 2 \left( a - b \right) c \left( \w{T}^{\gamma}_{\phantom{\gamma} \alpha \beta} + \w{T}^{\phantom{\alpha \beta} \gamma}_{\alpha \beta} + \w{T}^{\phantom{\beta} \gamma}_{\beta \phantom{\gamma} \alpha} \right) \\
			&= \w{T}^{\gamma}_{\phantom{\gamma} \alpha \beta} \left( 4a^2 + 2ab + 2ac - 2bc \right) + \w{T}^{\phantom{\alpha \beta} \gamma}_{\alpha \beta}\left( -2ab - b^2 + 2ac - 2bc \right) \\
			&+ \w{T}^{\phantom{\beta} \gamma}_{\beta \phantom{\gamma} \alpha} \left( -2ab - b^2 + 2ac - 2bc \right) + \w{T}^{\lambda}_{\phantom{\lambda} \lambda \left[ \alpha \right.} \delta^{\gamma}_{\left. \beta \right]} \left( -4ac + 4ab + 2b^2 + 4bc \right) \\
			&= \w{T}^{\gamma}_{\phantom{\gamma} \alpha \beta} \left( 4a^2 + 2ab + 2ac - 2bc \right) + \w{T}^{\phantom{\left[ \alpha \beta \right]} \gamma}_{\alpha \beta} \left( -4ab - 2b^2 + 4ac - 4bc \right) \\
			&+ \w{T}^{\lambda}_{\phantom{\lambda} \lambda \left[ \alpha \right.} \delta^{\gamma}_{\left. \beta \right]} \left( -4ac + 4ab + 2b^2 + 4bc \right).
		\end{aligned}
	\end{equation*}
	Requiring $\left( \ostar \ostar \w{T} \right)^{\gamma}_{\phantom{\gamma} \alpha \beta} \overset{!}{=} - \w{T}^{\gamma}_{\phantom{\gamma} \alpha \beta}$ gives us two equations
	\begin{align}
		4a^2 + 2ab + 2ac - 2bc &= + 1,    \\
		b^2 + 2ab - 2ac + 2bc &= 0.
	\end{align}
	
	\chapter{Dual connection} \label{append:connection}
	\pagestyle{plain}
	In section \ref{dual_connection} we are attempting to construct $\left( \ostar \w{\omega} \right)$, where $\w{\omega}^a_{\phantom{a} b \mu}$ are connection 1-forms (of teleparallel connection). We have to also consider a 3-form $\w{\Xi}^a_{\phantom{a}b \mu \nu \lambda}$ such that
	\begin{equation}
		\eta_{ac} \w{\Xi}^c_{\phantom{c}b \mu \nu \lambda} = - \eta_{bc} \w{\Xi}^c_{\phantom{c}a \mu \nu \lambda}.
	\end{equation}
	That is because the result of $\left( \ostar \w{\omega} \right)$ is a 3-form with these properties.
	
	We have parametrized $\left( \ostar \w{\omega} \right)$ using three real parameters $a,b,c$ and $\left( \ostar \w{\Xi} \right)$ using six real parameters $\alpha, \beta, \gamma, \mu, \nu, \lambda$. Now we let \textit{Mathematica} calculate $\left( \ostar \ostar \w{\omega} \right)$ and  $\left( \ostar \ostar \w{\Xi} \right)$ for us and require
	\begin{align}
		\left( \ostar \ostar \w{\omega} \right) = \w{\omega}, &&	\left( \ostar \ostar \w{\Xi} \right) = \w{\Xi}.
	\end{align}
	Doing so, the first condition gives us
	\begin{align}
		1 &= a \left( - 4 \mu + 2 \nu - 4 \lambda \right) + b \left( 6 \alpha + 4 \beta - 2\gamma + 4 \mu - 2 \nu + 2 \lambda \right), \\
		0 &= c \left( 6 \alpha + 2\mu \right) - 2\beta \left( a + b + c \right) + 2\gamma \left( a + b + 2c \right) + \nu \left( a + b + 2c \right), \label{redundant_eq}\\
		0 &= a \left( 6 \alpha + 4 \beta - 2 \gamma + 6 \mu - 3\nu + 4 \lambda \right) + b \left( -2 \mu + \nu - 2 \lambda \right), \\
		0 &= a \left( 2 \beta - 2 \gamma - \nu \right) + b \left( 2 \beta - 2 \gamma - \nu  \right) + c \left( -6 \alpha + 2 \beta -4 \gamma -2 \mu -2 \nu \right) \label{last_eq},
	\end{align}
	while the second condition yields
	\begin{align}
		1 &= 12a \alpha + 6b \alpha + c \left( 12 \alpha -12 \beta \right), \\
		0 &= 4a \beta + 2b \beta + c \left( -6 \alpha +10 \beta +4 \gamma \right), \label{eq_I} \\
		0 &= a \left( 6 \alpha + 2\beta - 2\mu \right) -2b \mu + c \left( - 2\mu - 2\nu \right), \label{eq_II} \\
		0 &= 4a \gamma +2b \gamma + c \left( 8 \beta +8 \gamma \right), \label{eq_III} \\
		0 &= a \left( 2 \beta -2 \gamma + \nu \right) +b \nu + c \left( 2 \mu +2 \nu \right), \\
		0 &= a \left( 4 \mu -2 \nu +2 \lambda \right) -b \lambda. \label{eq_V}
	\end{align}
	Rewriting (\ref{redundant_eq})
	\begin{equation}
		0 = a \left( -2\beta + 2\gamma + \nu \right) + b \left( -2\beta + 2\gamma + \nu \right) + c \left( 6\alpha -2\beta +4\gamma + 2\mu +2\nu \right),
	\end{equation}
	we see that it is in fact identical with (\ref{last_eq}), so we have
	\begin{align}
		1 &= a \left( - 4 \mu + 2 \nu - 4 \lambda \right) + b \left( 6 \alpha + 4 \beta - 2\gamma + 4 \mu - 2 \nu + 2 \lambda \right), \label{eq:first}\\
		0 &= a \left( 6 \alpha + 4 \beta - 2 \gamma + 6 \mu - 3\nu + 4 \lambda \right) + b \left( -2 \mu + \nu - 2 \lambda \right), \label{eq_IV} \\
		0 &= a \left( 2 \beta - 2 \gamma - \nu \right) + b \left( 2 \beta - 2 \gamma - \nu  \right) + c \left( -6 \alpha + 2 \beta -4 \gamma -2 \mu -2 \nu \right). 
	\end{align}
	Furthermore, we do not need the last equation, because we can get it by taking the sum of eqs. (\ref{eq_I}), (\ref{eq_II}) and 2 times eq. (\ref{eq_V}), and then subtracting eqs. (\ref{eq_III}) and (\ref{eq_IV}). We thus have three equations with no $c$-term\footnote{We have used eq. (\ref{eq_V}) to simplify eqs. (\ref{eq:first}) and (\ref{eq_IV}).}:
	\begin{align}
		1 &= - 2a\lambda + b \left( 6 \alpha + 4 \beta - 2\gamma + 4 \mu - 2 \nu + \lambda \right), \\
		0 &= a \left( 6 \alpha + 4 \beta - 2 \gamma - 2 \mu + \nu \right) + b \left( -2 \mu + \nu \right), \label{vvv} \\
		0 &= a \left( 4 \mu -2 \nu +2 \lambda \right) -b \lambda,
	\end{align}
	and five containing a $c$-term:
	\begin{align}
		1 &= 12a \alpha + 6b \alpha + c \left( 12 \alpha -12 \beta \right), \label{aaa} \\
		0 &= 4a \beta + 2b \beta + c \left( -6 \alpha +10 \beta +4 \gamma \right), \label{bbb} \\
		0 &= a \left( 6 \alpha + 2\beta - 2\mu \right) -2b \mu + c \left( - 2\mu - 2\nu \right), \label{ccc} \\
		0 &= 4a \gamma +2b \gamma + c \left( 8 \beta +8 \gamma \right), \label{ddd} \\
		0 &= a \left( 2 \beta -2 \gamma + \nu \right) +b \nu + c \left( 2 \mu +2 \nu \right). \label{eee}
	\end{align}
	Notice, however, that adding (\ref{ccc}) and (\ref{eee}) gives us (\ref{vvv}), so we are free to only keep eqs. (\ref{vvv}) and (\ref{ccc}) out of the three. Additionally, adding to (\ref{aaa}) twice the eq. (\ref{bbb}) and subtracting (\ref{ddd}) leads to equation
	\begin{equation}
		1 = \left( 2a + b \right) \left( 6\alpha + 4\beta - 2\gamma \right), 
	\end{equation}
	which also does not contain a $c$-term. Thus we have 4 equations without a $c$-term:
	\begin{align}
		1 &= - 2a\lambda + b \left( 6 \alpha + 4 \beta - 2\gamma + 4 \mu - 2 \nu + \lambda \right), \label{jjj} \\
		1 &= \left( 2a + b \right) \left( 6\alpha + 4\beta - 2\gamma \right), \label{kkk} \\
		0 &= a \left( 6 \alpha + 4 \beta - 2 \gamma - 2 \mu + \nu \right) + b \left( -2 \mu + \nu \right), \label{iii} \\
		0 &= a \left( 4 \mu -2 \nu +2 \lambda \right) -b \lambda \label{zizizi}
	\end{align}
	(only 3 of which are independent -- we actually do not need the first one, as it can be obtained by subtracting eq. (\ref{zizizi}) and twice the eq. (\ref{iii}) from eq. (\ref{kkk})), and 3 equations with a $c$-term:
	\begin{align}
		0 &= 4a \beta + 2b \beta + c \left( -6 \alpha +10 \beta +4 \gamma \right), \\
		0 &= a \left( 6 \alpha + 2\beta - 2\mu \right) -2b \mu + c \left( - 2\mu - 2\nu \right), \\
		0 &= 4a \gamma +2b \gamma + c \left( 8 \beta +8 \gamma \right).
	\end{align}
	The very first equation says that we can not have $a=b=0$ (which means that the other two equations must be linearly dependent), and the second equation tells that $b \neq -2a$. For what follows, it is useful to notice that $b \neq -a$: if $b =-a$ were true, then eq. (\ref{iii}) would reduce to $0 = a \left( 6 \alpha + 4 \beta - 2 \gamma \right)$, which leads to either $a=b=0$, in contradiction with (\ref{jjj}), or $6 \alpha + 4 \beta - 2 \gamma = 0$, contradicting (\ref{kkk}).
\end{appendices}

	%----------------------------------------------------------------------------------------------
	
	\begin{comment}
		\setstretch{1.0}
		
	\end{comment}

	\bibliographystyle{Style}
	\bibliography{bibliography}

\end{document}